\documentclass[10pt,prx,aps,floatfix,amsmath,superscriptaddress,twocolumn,longbibliography,nofootinbib]{revtex4-2}

\usepackage{amssymb,enumerate}
\usepackage{graphicx}
\usepackage{graphics}
\usepackage{amsmath}
\usepackage{amsthm,bbm}
\usepackage{thmtools,thm-restate}
\usepackage{color}
\usepackage{xcolor}
\usepackage{dsfont}
\usepackage{natbib}
\usepackage{multirow}
\usepackage[braket]{qcircuit}
\usepackage{wrapfig, blindtext}
\usepackage[most]{tcolorbox}
\tcbset{colback=yellow!10!white, colframe=red!50!black, 
  highlight math style= {enhanced,
    colframe=red,colback=red!10!white,boxsep=0pt}
}
\usepackage{amsmath}
\usepackage{amssymb}
\usepackage{amsfonts}
\usepackage{amsthm}
\usepackage{empheq}
\usepackage{graphicx} 
\usepackage{epsfig}
\usepackage{anyfontsize}
\usepackage{nameref}
\usepackage{xpatch}
\makeatletter
\xpatchcmd{\@ssect@ltx}{\@xsect}{\protected@edef\@currentlabelname{#8}\@xsect}{}{}
\xpatchcmd{\@sect@ltx}{\@xsect}{\protected@edef\@currentlabelname{#8}\@xsect}{}{}
\makeatother
\usepackage{youngtab}

\usepackage{tikz}
\usetikzlibrary{decorations.pathreplacing}
\usepackage{tikz-cd}
\usepackage{ytableau}
\usepackage{makecell}
\usepackage{enumerate}
\usepackage{bbm}
\usepackage{color}
\usepackage{dsfont}
\usepackage{lipsum}
\usepackage{lmodern}
\usepackage{tcolorbox}
\usepackage{epsfig,times,bm,mathrsfs}
\usepackage[normalem]{ulem}
\usepackage{subfigure}
\usepackage{dsfont}
\usepackage{braket}
\usepackage{upgreek}
\usepackage{natbib}
\usepackage{chngcntr}
\usepackage{algorithm}
\usepackage{algpseudocode}
\usepackage{mathtools}
\usepackage{microtype}
\ytableausetup{smalltableaux}
\usepackage{hyperref}
\usepackage[capitalise]{cleveref}

\newcommand{\ifmaintext}[1]{#1}

\renewcommand{\i}{\mathrm{i}}

\newcommand{\tr}{\operatorname{tr}}

\makeatletter 
\renewcommand\onecolumngrid{ 
  \do@columngrid{one}{\@ne}
  \def\set@footnotewidth{\onecolumngrid}
  \def\footnoterule{\kern-6pt\hrule width 1.5in\kern6pt}
}
\makeatother 
\newlength\dlf

\newtheorem{theorem}{Theorem}

\newtheorem{lemma}{Lemma}
\newtheorem*{lemma*}{Lemma}

\theoremstyle{remark}

\newcommand{\bes} {\begin{subequations}}
\newcommand{\ees} {\end{subequations}}
\newcommand{\bea} {\begin{eqnarray}}
\newcommand{\eea} {\end{eqnarray}}
\newcommand{\be} {\begin{equation}}
\newcommand{\ee} {\end{equation}}

\def\>{\rangle}
\def\<{\langle}
\def\Tr{\operatorname{Tr}}
\def\Pr{\textrm{Pr}}
 
\newcommand{\abs}[1]{\lvert #1 \rvert}

\newcommand{\ignore}[1]{}
\newcommand{\x}[1]{\textcolor{red}{\sout{#1}}}

\crefname{section}{Sec.}{Secs.}
\crefname{claim}{Claim}{Claims}

\newtcolorbox{subroutinebox}[2][]{%
    enhanced,
    breakable,
    colback=white,
    colframe=black,
    fonttitle=\bfseries,
    title={Subroutine #2},
    boxrule=0.6pt,
    arc=1mm,
    left=2mm,
    right=2mm,
    top=1mm,
    bottom=1mm,
    #1
}
\newcommand{\appendixentry}[4]{%
    \noindent
    $\bullet$ \textbf{Appendix #1: #2}
    \dotfill \pageref{#3}

    \smallskip
    \hspace{1.5em}
    \parbox{\dimexpr\textwidth-1.5em\relax}{#4}

    \vspace{0.8em}
}
\definecolor{apptocred}{RGB}{120,20,10}

\begin{document}

\title{Unitary Schur Sampling of Qudits via Random SWAP Tests: Hunt for Antisymmetry}

\author{Shrigyan Brahmachari}

\email{shrigyan.brahmachari@duke.edu}
\affiliation{Duke Quantum Center, Duke University}
\affiliation{Department of Electrical and Computer Engineering, Duke University}
\author{David Jakab}
\affiliation{Duke Quantum Center, Duke University}
\affiliation{Department of Physics, Duke University}
\author{Henry D. Pfister}
\affiliation{Duke Quantum Center, Duke University}
\affiliation{Department of Electrical and Computer Engineering, Duke University}
\author{Iman Marvian}
\email{iman.marvian@duke.edu}
\affiliation{Duke Quantum Center, Duke University}
\affiliation{Department of Electrical and Computer Engineering, Duke University}
\affiliation{Department of Physics, Duke University}
\begin{abstract}
Schur sampling is a fundamental primitive for extracting
permutation-invariant information from many-body quantum systems
and has broad applications in quantum information science. 
Circuit-based implementations typically rely on coherent
representation-theoretic operations such as Clebsch-Gordan
transforms, generalized phase estimation, or quantum Fourier
transforms over the symmetric group. We show that unitary Schur
sampling of arbitrary permutation-invariant mixed states on
$n$ $d$-dimensional qudits can instead be implemented using
only pairwise SWAP tests, namely, two-qudit projective
measurements onto the symmetric and antisymmetric subspaces.
Our protocol achieves error $\epsilon$  in diamond distance
using
 $O\!\left(n\min\{n,d\}^{3}\log^2(n/\epsilon)\right)$ 
random SWAP tests.
The central idea is to repeatedly identify and extract the
largest antisymmetric subsystem. This turns unitary Schur
sampling into a search for antisymmetry and gives the algorithm
a natural interpretation as a stochastic traversal of a Young
diagram. The protocol preserves the $\mathrm{SU}(d)$-irrep
state while preparing a canonical pure state in the multiplicity
subsystem.  
As an application, we consider quantum purity amplification (QPA), in which multiple noisy copies of a pure quantum state are combined to produce a state of higher purity. We show that the optimal purified state can be obtained by retaining a single designated output qudit and discarding the rest.

\end{abstract}

\maketitle

\section{Introduction}

Permutation symmetry provides a natural organizing principle for quantum
systems composed of many identical subsystems. When the ordering of the
subsystems is irrelevant, Schur-Weyl duality gives a canonical description
of the information contained in the state, decomposing the \(n\)-qudit
Hilbert space into sectors labelled by Young diagrams
\cite{harrowthesis}. This structure underlies a variety of
quantum-information tasks, including state estimation
\cite{TomographyHarrow,TomographyDonnel},
spectrum estimation
\cite{spectrumchristandl,SpectrumWerner,spectrumodonnell},
purification \cite{Cirac,QPA1,QPA2}, coherence distillation
\cite{Marvian2020Coherence}, entanglement concentration
\cite{entanglement1,entanglement2}, and quantum majority vote
\cite{majority}. Permutation-invariant information is also central to
quantum metrology
\cite{helstrom1969quantum,holevo2011probabilistic,giovannetti2011advances}.

Schur sampling provides a way to access this structure. Weak Schur
sampling returns the Young-diagram label, while \emph{unitary Schur
sampling} (USS) also retains the quantum state on the corresponding
unitary-group irrep subsystem \cite{cervero2024memory}.
Existing circuit implementations of Schur transforms and related sampling
tasks use representation-theoretic tools such as Clebsch-Gordan
transforms \cite{BCH,cervero2023weakschursamplinglog,cervero2024memory},
generalized phase estimation \cite{harrowthesis}, or quantum Fourier
transforms over the symmetric group \cite{Krovi_2019,HighDimSchur}.
This motivates a simpler question: how much of this machinery is
necessary if the goal is only to extract the permutation-invariant
information of the state?

For qubits, Brahmachari et al.~\cite{qubitSWAPUSS} showed that random
pairwise SWAP tests suffice to implement USS by identifying and
separating singlet pairs. SWAP tests also appear in streaming
purification protocols for qudits
\cite{Childs2025Streaming,Grier2025Streaming}. Extending the
singlet-extraction approach to general local dimension requires
identifying antisymmetric subsystems of more than two qudits while
preserving the quantum information needed for USS.

In this work, we show that unitary Schur sampling can be realized to
arbitrary accuracy using only pairwise SWAP measurements and classical
control, for arbitrary numbers of qudits and arbitrary local dimension.
Our approach is based on antisymmetry: we repeatedly identify and
separate a largest antisymmetric subsystem and continue with the
remaining qudits. The extracted subsystem sizes determine the column
lengths of the Young diagram. Since the SWAP-test projectors commute
with the collective action of \(\mathrm{SU}(d)\), they act trivially on
the irrep subsystem within each sector. This allows the protocol to
resolve the symmetry structure while preserving the irrep state.
The resulting SWAP-test complexity is linear in the number of qudits
up to logarithmic factors for fixed local dimension.
Table~\ref{tab:USScomparison} places this construction alongside
existing implementations.

A particularly natural application is quantum purity amplification
(QPA), which processes multiple copies of a mixed state to improve
the fidelity with a chosen eigenstate \cite{QPA1,QPA2}.
Our protocol prepares a canonical representative of the retained
irrep state directly in the original qudit registers. For the
single-output QPA protocol of Ref.~\cite{QPA1}, that targets the principal eigenstate this combines
Schur sampling and the required correction within each symmetry
sector into a single procedure.

\begin{table*}[t]
\centering
\caption{
Comparison of implementations capable of unitary Schur sampling.
Complexities are shown up to logarithmic factors. The complexity measures
follow those used in the respective references; our bound counts pairwise
SWAP tests. The notation \(\widetilde O\) suppresses logarithmic factors, including
accuracy dependence. The circuit bounds count elementary gates,
whereas the last two rows count pairwise SWAP tests.
}
\label{tab:USScomparison}

\begin{tabular}{lccc}
\hline
\textbf{Method}
& \textbf{Main primitive}
& \textbf{Complexity}
& \textbf{Remarks}
\\
\hline

Bacon-Chuang-Harrow~\cite{BCH}
&
Clebsch-Gordan transforms
&
$\widetilde{O}\!\left(\min\{n^5,nd^4\}\right)$
&
Full Schur transform
\\[1mm]

Revised Krovi~\cite{Krovi_2019,HighDimSchur}
&
$\mathbb S_n$ QFT and basis changes
&
$\widetilde{O}\!\left(n^4\right)$
&
Full Schur transform
\\[1mm]

Cervero-Mart{\fontencoding{T1}\selectfont\'{i}}n et al.~\cite{cervero2024memory}
&
Streaming Clebsch-Gordan transforms
&
$\widetilde{O}\!\left(nd^4\right)$
&
Direct USS
\\[1mm]

\textbf{This work}
&
 SWAP tests
&
$\widetilde{O}\!\left(\min\{n^4,nd^3\}\right)$
&
Direct USS
\\

\hline
\end{tabular}
\end{table*}

\subsection{Overview of Results}

For a permutation-invariant (PI) state, the information retained by
USS consists of the probability distribution over Young diagrams
\(\lambda\), together with the corresponding reduced states on the
\(\mathrm{SU}(d)\) irrep subsystems \(\mathcal Q_\lambda^d\).
Our implementation returns \(\lambda\) and retains the irrep state
inside the original \(n\)-qudit Hilbert space, with the multiplicity
subsystem \(\mathcal P_\lambda\) reset to a fixed pure state
\(\ket{\eta_\lambda}\). The ideal channel is lossless on PI inputs:
discarding the classical label and applying a permutation twirl to
the quantum output recovers the input state. For a general input,
we first apply a permutation twirl, so the information retained is
precisely its PI information. We denote this ideal channel by
\(\mathcal E_{\mathrm{Schur}}\), as defined in Eq.~\eqref{Schur}.

The central observation is that this canonical output can be
obtained by recursively extracting largest antisymmetric subsystems.
For an input in the sector labelled by \(\lambda\), the largest
antisymmetric subsystem that can be extracted has size
\(\lambda'_1\), the length of the first column. Separating this
subsystem and repeating the procedure on the remaining qudits
reveals the subsequent column lengths and prepares the required
multiplicity state. We develop two algorithms based on this
observation. Algorithm~1 searches over candidate subsets using
ideal antisymmetric measurements and serves as a proof of concept.
Algorithm~2 grows each antisymmetric subsystem one qudit at a
time, giving an efficient implementation.

The growth process admits a natural interpretation as a stochastic
traversal of the Young diagram. Within a column, a successful growth
step advances the traversal by one box, while a failed attempt
leaves the current size unchanged. The algorithm uses a prescribed
number of unsuccessful attempts to decide when to stop growing the
current subsystem and begin the next column. We bound the
probability of stopping before reaching the bottom of a column,
as well as the total number of growth attempts. The geometry of
the Young diagram thus provides a framework for analyzing both
correctness and convergence.

The remaining step is to implement the antisymmetric measurements
using pairwise SWAP tests. A SWAP test projects two qudits onto
their symmetric or antisymmetric subspace. Repeated tests can
identify a fully antisymmetric subsystem, but reproducing the
post-measurement state also requires care: a test can disturb
coherences in the orthogonal complement. Algorithm~2 supplies
the additional structure needed for our construction, since
each growth measurement acts on an already antisymmetric block
and one additional qudit. Under this promise, we give a
SWAP-test protocol that approximates the full antisymmetric
measurement, including its post-measurement state.

For any \(0<\epsilon<1\), our protocol approximates unitary Schur
sampling to error at most \(\epsilon\) in diamond distance using
\begin{equation}
O\!\left(
n\min\{n,d\}^{3}\log^2\!\frac{n}{\epsilon}
\right)
\end{equation}
pairwise SWAP tests. This bound holds on every execution.
For fixed local dimension, the number of tests is therefore
linear in \(n\), up to logarithmic factors.


While, unitary Schur sampling is a versatile primitive in a broad range of quantum-information tasks, we highlight several applications for which the specific form of  encoding of output states in  our protocol is particularly well suited. Most notably, in  QPA \cite{QPA1, QPA2}, existing implementations require not only Schur sampling but also a representation-theoretic correction within each Schur-Weyl sector, implemented using generalized quantum phase estimation or sequences of Clebsch-Gordan transforms and inverse representation-theoretic processing. In our case, on the other hand, one can achieve optimal QPA for single-copy output  simply by discarding all the output qudits of our protocol except one. In Sec.\ref{sec:Applications}, we further discuss QPA and other applications, such as optimal quantum cloning and quantum rank testing.

\section{Setup}

Consider a system of $n$ qudits with the total Hilbert space $(\mathbb{C}^d)^{\otimes n}$. We are interested in the properties of quantum states that are independent of the order of the qudits. More formally, for any state $\sigma$, we are interested in the information that can be obtained from the PI state $\mathcal{T}(\sigma)$, where $\mathcal{T}$ is the quantum channel defined by applying a permutation on $n$ qudits, chosen uniformly at random from the group of all such permutations $\mathbb{S}_n$ (See Appendix \ref{App:Concepts} for further details).  By averaging over all permutations, $\mathcal{T}(\cdot)$ projects any state $\sigma$ onto the space of permutation-invariant states. Hence, in the following, we often restrict our attention to the case of PI states.

As we discuss further below, thanks to \emph{Schur-Weyl duality}, such states can be conveniently characterized using the representation theory of the group  $\mathrm{SU}(d)$,  i.e., single-qudit unitaries with determinant 1, which is widely used in physics (e.g., to describe rotations).  Recall that under the action, $U \mapsto U^{\otimes n}$, of $\mathrm{SU}(d)$, the \(n\)-qudit Hilbert space decomposes into subspaces corresponding to inequivalent irreps of $\mathrm{SU}(d)$. These are commonly labeled by integer partitions of \(n\) with at most \(d\) parts. Such a partition, $\lambda \vdash n$, is a non-increasing sequence of non-negative integers,
\begin{align}
\lambda=(\lambda_1,\lambda_2,\ldots,\lambda_d),
\qquad
\lambda_1\ge \lambda_2\ge \cdots \ge \lambda_d \ge 0,
\end{align}
such that \(\sum_{i=1}^{d}\lambda_i=n\). Partitions are conveniently represented by \textit{Young diagrams}, where the \(i\)-th row contains \(\lambda_i\) boxes. We denote the length of the \(i\)-th column by \(\lambda'_i\), so that
\begin{align}
\min{\{n,d\}}\ge \lambda_1'\ge \lambda_2'\ge \cdots \ge 0.
\end{align}

Under Schur-Weyl duality,
each partition $\lambda$ labels simultaneously an irreducible
representation of $\mathrm{SU}(d)$ and an irreducible representation
of the symmetric group $\mathbb S_n$. The corresponding
$\mathrm{SU}(d)$ irrep is denoted by $\mathcal Q^d_\lambda$, while
$\mathcal P_\lambda$ denotes the corresponding multiplicity space,
which carries an irrep of $\mathbb S_n$. The number of rows
of a Young diagram cannot exceed $d$, reflecting the fact that fully
antisymmetrizing more than $d$ qudits yields zero.\\

As a simple illustration, consider $n=3$ qudits with local dimension
$d=3$. The allowed partitions of $3$ with at most three rows are
\[
(3,0,0),\qquad (2,1,0),\qquad (1,1,1).
\]
Equivalently, the Schur-Weyl decomposition contains three symmetry
sectors, represented by the Young diagrams
\[
\begin{array}{c@{\qquad}c@{\qquad}c}
\yng(3) & \yng(2,1) & \yng(1,1,1) \\
(3,0,0) & (2,1,0) & (1,1,1)
\end{array}
\]
The sector labelled by $(3,0,0)$ corresponds to the totally symmetric
subspace. It carries the trivial representation of $\mathbb S_3$,
meaning that every permutation acts as the identity on this sector.
Consequently, it is spanned by  pure states that are invariant
under permutations of the three subsystems.

At the opposite extreme, the sector labelled by $(1,1,1)$ corresponds
to the totally antisymmetric subspace. This sector carries the sign
representation of $\mathbb S_3$, under which even permutations act as
$+1$ and odd permutations act as $-1$.\\

Then, under the action of $\mathrm{SU}(d)$ transformations, the total Hilbert space of qudits decomposes as 
\be\label{dec}
(\mathbb{C}^d)^{\otimes n} \cong \bigoplus_{\lambda \vdash n} \big(\mathcal{Q}^d_{\lambda}\otimes \mathcal{P}_{\lambda}\big)\ .
\ee
Here $\mathcal Q^d_\lambda$ carries the irreducible representation of
$\mathrm{SU}(d)$ labelled by $\lambda$, while $\mathcal P_\lambda$
carries the corresponding irreducible representation of the symmetric
group $\mathbb S_n$. The subspace
\(
\mathcal H_\lambda :=
\mathcal Q^d_\lambda \otimes \mathcal P_\lambda
\)
is called the \emph{isotypic component} associated with the partition
$\lambda$. The multiplicity of the $\mathrm{SU}(d)$ irrep
$\mathcal Q^d_\lambda$ is
\(
m(n,\lambda)=\dim(\mathcal P_\lambda).
\)\\

According to Schur-Weyl duality, the group of unitaries generated by pairwise SWAPs of qudits, which is a representation of $\mathbb{S}_n$, is block-diagonal with respect to this decomposition, acts trivially on the subsystem $\mathcal{Q}_{\lambda}^d$ and acts irreducibly on $\mathcal{P}_{\lambda}$. Indeed, each $\mathcal{P}_{\lambda}$ with a different $\lambda$ corresponds to an inequivalent irrep of this group. In particular, the isotypic component labelled by the partition $\lambda=(n,0,\ldots,0)$ corresponds to the totally symmetric
subspace and carries the trivial representation of $\mathbb S_n$. This subspace includes all pure states that are invariant under permutations, such as GHZ and Dicke states. 

It follows that any PI $n$-qudit state $\sigma$, such as  $\sigma=\rho^{\otimes n}$, is block-diagonal with respect to this decomposition and  takes the form
\be\label{dec1}
\sigma=\bigoplus_{\lambda} p_{\lambda} \left(\omega_{\lambda}\otimes \frac{\mathbb{I}_{m(n,\lambda)}}{m(n,\lambda)}\right)\ ,
\ee
where $p_{\lambda} = \Tr(\sigma \Pi_{\lambda})$ is the probability that the state $\sigma$ is found in the subspace with partition/Young Diagram $\lambda$, i.e., the subspace $\mathcal{Q}^d_{\lambda}\otimes \mathcal{P}_{\lambda}$, $\Pi_{\lambda}$ denotes the projector onto this subspace, and $\omega_{\lambda}$ is a density operator on $\mathcal{Q}^d_{\lambda}$, the irrep of $\mathrm{SU}(d)$ with the partition/Young Diagram $\lambda$. The projective measurement with projectors $\{\Pi_\lambda\}_{\lambda\vdash n}$ is known as \emph{Weak Schur Sampling} \cite{weakchilds}. \\

For each permutation \(\pi\in \mathbb S_n\), let \(P_\pi\) denote the unitary
operator that permutes the \(n\) tensor factors according to \(\pi\).
The permutation-twirling channel is then defined by
\begin{align}
\mathcal T(\rho)
=
\frac{1}{n!}
\sum_{\pi\in\mathbb S_n}
P_\pi\rho P_\pi^\dagger.
\end{align}
 By Eq.~\eqref{dec1} and Schur's lemma, it
removes all information stored in the multiplicity subsystem while preserving
the state on each $\mathrm{SU}(d)$ irrep subsystem. The permutation twirl projects
arbitrary states onto the permutation-invariant subspace, allowing us to
restrict our attention to PI states. \\

A fundamental consequence of Schur-Weyl duality is that any operator
\(M\) satisfying
\begin{align}
   [M,U^{\otimes n}]=0
\qquad
\forall\, U\in \mathrm{SU}(d), 
\end{align}
acts trivially on the $\mathrm{SU}(d)$ irrep subsystem. More precisely, with respect to
the decomposition in Eq.~\eqref{dec},
\begin{align}
M
=
\bigoplus_{\lambda}
\left(
\mathbb I_{\mathcal Q_\lambda^d}
\otimes
\widetilde M_\lambda
\right), \label{Eqn:Schur-weylSU(d)inv}
\end{align}
where \(\widetilde M_\lambda\) is an operator acting on the multiplicity
space \(\mathcal P_\lambda\). Equivalently, any operator that respects
the collective \(\mathrm{SU}(d)\) symmetry can only modify the
multiplicity degrees of freedom and cannot affect the state stored in
the irrep subsystem \(\mathcal Q_\lambda^d\).\\

Moreover, Schur-Weyl duality implies that the algebra of all
$\mathrm{SU}(d)$-invariant operators on
$(\mathbb{C}^d)^{\otimes n}$ is precisely the algebra generated by the
permutation operators $\{P_\pi:\pi\in\mathbb S_n\}$. Consequently, every
$\mathrm{SU}(d)$-invariant operator can be expressed as a linear
combination of permutations.\\

These observations will play a central role in our analysis. In
particular, it implies that measurements constructed from
\(\mathrm{SU}(d)\)-invariant operators, such as SWAP tests, preserve
the information encoded in the irrep subsystem $
\mathcal{Q}^d_{\lambda}$ while acting nontrivially
only on the multiplicity subsystem $
\mathcal{P}_{\lambda}$. 

\subsection{A canonical state in each isotypic component}

For each partition \(\lambda\vdash n \), let \(\lambda'\) denote the conjugate partition, whose entries \(\lambda'_1,\ldots,\lambda'_r\) are the column lengths of $\lambda$. We define the state
\begin{align}
\ket{\phi_{\lambda}}=\bigotimes_{i=1}^{r}
  \ket{0 \wedge 1 \wedge \cdots \wedge (\lambda'_i-1)},
\label{Eq:representative}
\end{align}
where
\begin{align}
   & \ket{0 \wedge 1 \wedge \cdots \wedge (k-1)}\nonumber\\ 
   &= \frac{1}{\sqrt{k!}}
\sum_{\pi\in \mathbb S_k}
\mathrm{sgn}(\pi)
\ket{\pi(0),\pi(1)\cdots \pi(k-1)}
\end{align}
denotes the normalized totally antisymmetric state on \(k\) qudits. The state \(\ket{\phi_\lambda}\) lives in 
the isotypic component
$
\mathcal H_\lambda =
\mathcal Q^d_\lambda \otimes \mathcal P_\lambda$. Intuitively, each column of the Young diagram is encoded as a totally antisymmetric block, and the tensor product over all columns produces a state with the symmetry type specified by \(\lambda\).

As an example, for the 3 qudit case:
\begin{align}
&\ket{\phi_{(3,0,0)}}=\ket{0}^{\otimes 3}, \nonumber
\quad
\ket{\phi_{(2,1,0)}}=\ket{0\wedge1}\otimes\ket0,\\
&\ket{\phi_{(1,1,1)}}=\ket{0\wedge1\wedge2}.
\end{align}
An interesting property of these states is that under the Schur-Weyl decomposition, the state
$\ket{\phi_\lambda}$ is unentangled between the irrep and
multiplicity subsystems. That is,
\begin{align} \label{Eqn:repstatefactorizes}
\ket{\phi_\lambda}
=
\ket{\psi_\lambda}
\otimes
\ket{\eta_\lambda},
\end{align}
where $\ket{\psi_\lambda}\in\mathcal Q_\lambda^d$ and
$\ket{\eta_\lambda}\in\mathcal P_\lambda$.  We prove this fact in Appendix \ref{AppAsub:PurePermutationRegister}.

Applying the permutation twirl yields
\begin{align}
\mathcal T
\left(
|\phi_\lambda\rangle\langle\phi_\lambda|
\right) 
=
\ket{\psi_\lambda}\bra{\psi_\lambda}
\otimes
\frac{\mathbb I_{m(n,\lambda)}}{m(n,\lambda)}.
\label{Eq:twirledRepstate}
\end{align}
In other words, twirling removes all information contained in the multiplicity subsystem $\mathcal{P}_{\lambda}$, 
while preserving the state on the \(\mathrm{SU}(d)\) irrep subsystem $
\mathcal{Q}^d_{\lambda}$. This means  the original state $\ket{\phi_\lambda}$ can be recovered using SU(d)-invariant operation. In particular, by bringing the multplicity  $
\mathcal{P}_{\lambda}$ back to its initial state $|\eta\rangle$ we recover the original state.\\

A standard construction of the irreducible representations of the
symmetric group uses Young symmetrizers~\cite{harrowthesis,SaganSymmetricGroup}.
A Young tableau of shape \(\lambda\vdash n\) is a filling of the boxes
of the Young diagram \(\lambda\) with the integers \(1,\ldots,n\),
each appearing exactly once. We consider the canonical tableau \(T\)
obtained by numbering the boxes down each column and then proceeding
from left to right.

Let \(R_T \in \mathbb S_n\) and \(C_T \in \mathbb S_n\) denote the row
and column groups of \(T\), respectively; that is, \(R_T\) consists of
permutations that preserve each row of \(T\), while \(C_T\) consists
of permutations that preserve each column. The normalized Young
symmetrizer associated with \(T\) is
\begin{align}
Y_{\lambda,T}
:=
\frac{\dim\mathcal P_\lambda}{n!}
\left(
\sum_{c}\operatorname{sgn}(c)P_c
\right)
\left(
\sum_{r}P_r
\right),
\end{align}
where the sums are over permutations preserving the columns and rows,
respectively.   \(Y_{\lambda, T}\) is supported on the \(\lambda\) isotypic component of the Schur-Weyl decomposition,
\begin{align}
Y_{\lambda,T}
=
I_{\mathcal Q_\lambda}
\otimes
\tilde{Y}_{\lambda, T},
\end{align}
where \(\tilde{Y}_{\lambda, T}\) is rank-1 and idempotent (but not generally self-adjoint). We define \(\ket{y(T)}\bra{y(T)}\) as the distinguished pure state that projects to the image of \(\tilde{Y}_{\lambda, T}\) in \(\mathcal{P}_{\lambda}\).

The representative state \(\ket{\phi_\lambda}\) introduced in this work
is an eigenvector of \(Y_{\lambda,T}\). Consequently, the multiplicity state
\(\ket{\eta_\lambda}\) appearing in the Schur-Weyl decomposition
coincides with the distinguished pure state selected by the Young
symmetrizer. We prove this property in
Appendix~\ref{AppAsub:YoungSymmetrizers}.

\subsection{SWAP tests}

The elementary operation used throughout this work is the two-qudit
SWAP test. For two qudits, the SWAP operator is defined by
\[
\mathrm{SWAP}\ket{a}\ket{b}=\ket{b}\ket{a},
\qquad a,b\in\{0,\ldots,d-1\}.
\]
It has two eigenspaces: the symmetric subspace, with eigenvalue $+1$,
and the antisymmetric subspace, with eigenvalue $-1$. The corresponding
projectors are
\[
\Pi^+=\frac{\mathbb I+\mathrm{SWAP}}{2},
\qquad
\Pi^-=\frac{\mathbb I-\mathrm{SWAP}}{2}.
\]
Thus, a SWAP test is a two-outcome measurement
\(\{\Pi^+,\Pi^-\}.\)
We say that the SWAP test succeeds when the symmetric outcome is
obtained and fails when the antisymmetric outcome is obtained. This measurement is compatible with the Schur-Weyl decomposition
because the SWAP operator commutes with the collective action
$U^{\otimes n}$ of $\mathrm{SU}(d)$.
Consequently, a SWAP test is block diagonal w.r.t.~the isotypic sectors, \(\mathcal{Q}_{\lambda}\otimes \mathcal{P}_{\lambda}\), and acts non-trivially only the \(\mathcal{P}_{\lambda}\) components; thus, it does not affect the information stored in PI states.

For example, in the three-qudit case discussed above, states in the
$(3,0,0)$ sector are fully symmetric, so every pairwise SWAP test
succeeds. States in the $(1,1,1)$ sector are fully antisymmetric, so
every pairwise SWAP test fails. In the mixed-symmetry sector
$(2,1,0)$, a pairwise SWAP test can return either outcome. Thus, SWAP tests are sufficient to project a 3-qudit PI state into isotypic components, i.e, to perform weak Schur sampling on 3 qudits. 

\subsection{Antisymmetric Measurements}

The primary measurement used in our algorithms is the
\emph{antisymmetric measurement}. For a subset \(S\subseteq[n]\), of
\(k\) qudits, where $[n]=[1,\cdots,n]$, this is the two-outcome projective measurement
\begin{align}
\left\{
\Pi^-_S,\,
\mathbb I-\Pi^-_S
\right\},
\end{align}
where \(\Pi^-_S\) denotes the projector onto the fully antisymmetric
subspace $\wedge^k(\mathbb C^dd)$.
of the qudits indexed by \(S\). We refer to the outcome associated with
\(\Pi^-_S\) as the \emph{antisymmetric outcome}. We review several properties of this measurement that will be used throughout our analysis in Appendix \ref{AppAsub: Antisym}. 

Although we treat the antisymmetric measurement as a primitive in the
description and analysis of our algorithms, we show that it can be
approximately using only pairwise SWAP tests. We give this construction
and analyze its accuracy and complexity in Sec.~\ref{section:SWAPtoAntisymm}.

\subsection{Pieri's branching rule}
\label{subsection:Pieri}

Throughout this work, we repeatedly separate fully antisymmetric
subsystems from states supported on a fixed isotypic component.
The representation-theoretic effect of this operation is governed
by Pieri's branching rule. Let $\lambda\vdash n$ be a Young diagram,
and suppose that a $k$-qudit subsystem supported on
$\wedge^k(\mathbb C^d)$ is separated from a state supported on the
$\lambda$-isotypic component. The reduced state of the remaining
$n-k$ qudits can have support only on isotypic components labelled
by Young diagrams $\mu\vdash(n-k)$ satisfying Pieri's rule.
The admissible diagrams $\mu$ are precisely those obtained from
$\lambda$ by removing $k$ boxes, with no two removed boxes in
the same row, such that the remaining boxes form a Young diagram.
\footnote{Equivalently, $\lambda/\mu$ is a vertical strip of size $k$.
We illustrate this rule in Fig.~\ref{Fig:Pieris rule}.}

A particularly important special case occurs when
$k=\lambda'_1$, the length of the first column of $\lambda$.
Since $\lambda$ has exactly $k$ nonempty rows, exactly one box
must be removed from each row. Thus, there is a unique admissible
Young diagram: the shape obtained by deleting the first column
of $\lambda$. We prove this independently, without invoking representation theoretic results, using only the representative states $\ket{\phi_\lambda}$ in Lemma~\ref{Lemma:removingcolumns}.
This special case underlies the recursive
decomposition used throughout the paper, where each invocation
of the algorithm separates a largest antisymmetric subsystem
before applying the same procedure to the remaining qudits.

In Appendix~\ref{AppA:pieris}, we go into more detail about Pieri's rule in terms of the representation theory of \(\mathbb{S}_{n}\).

\begin{figure}[h] 
\centering
\begin{tikzpicture}[scale=0.45]

\begin{scope}[shift={(0,0)}]
\node at (2,-3.6) {(i)};

\foreach \x in {0,1,2}{
    \draw (\x,0) rectangle ++(1,1);
}
\fill[red!40] (3,0) rectangle ++(1,1);
\draw (3,0) rectangle ++(1,1);

\foreach \x in {0,1}{
    \draw (\x,-1) rectangle ++(1,1);
}
\fill[red!40] (2,-1) rectangle ++(1,1);
\draw (2,-1) rectangle ++(1,1);

\foreach \x in {0,1}{
    \draw (\x,-2) rectangle ++(1,1);
}
\end{scope}

\begin{scope}[shift={(7,0)}]
\node at (2,-3.6) {(ii)};

\foreach \x in {0,1,2}{
    \draw (\x,0) rectangle ++(1,1);
}
\fill[red!40] (3,0) rectangle ++(1,1);
\draw (3,0) rectangle ++(1,1);

\foreach \x in {0,1,2}{
    \draw (\x,-1) rectangle ++(1,1);
}

\draw (0,-2) rectangle ++(1,1);
\fill[red!40] (1,-2) rectangle ++(1,1);
\draw (1,-2) rectangle ++(1,1);
\end{scope}

\begin{scope}[shift={(14,0)}]
\node at (2,-3.6) {(iii)};

\foreach \x in {0,1,2,3}{
    \draw (\x,0) rectangle ++(1,1);
}

\foreach \x in {0,1}{
    \draw (\x,-1) rectangle ++(1,1);
}
\fill[red!40] (2,-1) rectangle ++(1,1);
\draw (2,-1) rectangle ++(1,1);

\draw (0,-2) rectangle ++(1,1);
\fill[red!40] (1,-2) rectangle ++(1,1);
\draw (1,-2) rectangle ++(1,1);
\end{scope}

\end{tikzpicture}
\caption{The three admissible ways to remove two boxes from $\lambda=(4,3,2)$.
The highlighted boxes are removed; no two removed boxes lie in the same row.}
\label{Fig:Pieris rule}
\end{figure}

\section{unitary Schur sampling} \label{section:USS}
For many applications in quantum information and computation, it is useful to physically realize the transformation that, for a given PI input state $\sigma$, measures $\lambda$, i.e., returns the irrep label $\lambda$ with probability $p_{\lambda}$, along with the corresponding post-measurement state in $\mathcal{Q}^d_{\lambda}$, $\omega_{\lambda}$, discarding $\mathcal{P}_\lambda$. In Ref.~\cite{cervero2024memory}, \emph{unitary Schur sampling} is defined as weak Schur sampling followed by discarding the permutation register \(\mathcal P_\lambda\). Thus, conditioned on obtaining the label \(\lambda\), the output consists of the classical label \(\lambda\) together with the reduced post-measurement state on the unitary-group irrep subsystem \(\mathcal Q_\lambda^d\).

In practice, for many applications,  such as purification \cite{Cirac}, majority vote \cite{majority}, and coherence distillation \cite{Marvian2020Coherence},  rather than returning the post-measurement state $\sigma_{\lambda}$   in the full \(\lambda\)-isotypic component,  it is often more convenient to realize the retained irrep state directly within the original \(n\)-qudit Hilbert space.  A natural and convenient way to achieve this is to encode each \(\sigma_{\lambda}\) into an \(n\)-qudit state, \(\widetilde{\sigma}_{\lambda}\), supported on an irreducible subspace (w.r.t.~$\mathrm{SU}(d)$) within the \(\lambda\) isotypic component. That is, the goal is to realize the transformation,
\be\label{Eqn:USSabstract}
\mathcal{E}_{\text{Schur}}(\sigma)=\sum_{\lambda} p_{\lambda}\widetilde{\sigma}_{\lambda}\otimes |\lambda\rangle\langle \lambda|_{\text{reg}}\ ,
\ee
where \(\ket{\lambda}\) denotes the state of a classical register that keeps track of $\lambda$.
Our choice of the irreducible subspace to encode \(\sigma_{\lambda}\) is,
\begin{align*}
\mathrm{Span}\left\{
U^{\otimes n}
\ket{\phi_{\lambda}}
:\;
U \in \mathrm{SU}(d)
  \right\}= \mathcal{Q}_{\lambda}^{d}\otimes \mathrm{Span}\{\ket{\eta_{\lambda}}\}.
\end{align*}
In other words, our objective is to design a protocol that takes any PI state $\sigma$, measures \(\lambda\), preserves the state \(\omega_\lambda\) on the \(\mathrm{SU}(d)\) irrep subsystem \(\mathcal Q_\lambda^d\), and prepares its component on the multiplicity subsystem, \(\mathcal{P}_{\lambda}\), of \(\mathrm{SU}(d)\) irrep \(\lambda\) to \(\ket{\eta_{\lambda}}\).
\be\label{Schur}
\mathcal{E}_{\text{Schur}}(\sigma)=\sum_{\lambda} 
p_{\lambda}\omega_{\lambda}\otimes |\eta_{\lambda}\rangle\langle \eta_{\lambda}| \otimes |\lambda\rangle\langle \lambda|_{\text{reg}}.
\ee
That is, if
\be
\sigma_\lambda
=
\omega_\lambda\otimes
\frac{\mathbb I_{m(n,\lambda)}}{m(n,\lambda)},
\ee
then we encode the retained irrep state as
\be
\widetilde{\sigma}_\lambda
=
\omega_\lambda\otimes
\ket{\eta_\lambda}\bra{\eta_\lambda}.
\ee
This choice is also natural from the perspective of state purification, or quantum purity amplification. The multiplicity state \(\ket{\eta_\lambda}\) corresponds to the column-ordered Young tableau, which is precisely the representative used in the optimal single-output quantum purity amplification protocol of~\cite{QPA1}. Thus, our choice of multiplicity state is directly compatible with optimal single-output purification. We discuss this connection further in Subsection~\ref{subsec:QPAapplications}. In the qubit case, this choice has a particularly simple interpretation:
the qubits outside the spin-\(j\) irrep can be separated into singlet pairs,
which are fixed \(\mathrm{SU}(2)\)-invariant states and therefore carry no
information about the unknown input state.  All of the relevant PI information
is retained in the remaining spin-\(j\) subsystem~\cite{Cirac, qubitSWAPUSS}.

\subsection{ Reducing to representative states } \label{subsec:reducingtorepstates}

We now show that it is sufficient to analyze our protocol on a single pure representative state from each irrep of $\mathrm{SU}(d)$. The key observation is that every measurement used in the protocol is $\mathrm{SU}(d)$-invariant. Consequently, once the protocol succeeds on one representative state in an irrep, it necessarily succeeds on every state supported on the same irrep. Therefore, rather than analyzing arbitrary permutation-invariant states, it suffices to consider the representative states, \(\ket{\phi_{\lambda}}\), introduced in Eq.~\eqref{Eq:representative}.\\

Consider a quantum channel $\mathcal{E}(.)=\sum_r M_r(.)M_r^{\dagger}$ that uses $\mathrm{SU}(d)$-invariant Kraus operators $\{M_r\}_{r}$,
\begin{align}
   [M_r, U^{\otimes n}]=0,
\qquad
\forall\,U\in\mathrm{SU}(d),\ \forall r,
\end{align}
and suppose that it satisfies,
\begin{align}
    \mathcal{E}\left( \mathcal{T}\left(\ket{\phi_{\lambda}}\bra{\phi_{\lambda}}\right) \right) =\ket{\phi_{\lambda}}\bra{\phi_{\lambda}}\otimes \ket{\lambda}\bra{\lambda}_{\text{reg}},\quad \forall \lambda \vdash n.  \label{assumchannel}
\end{align}
Such a channel performs the $\mathcal{E}_{\text{Schur}}$ channel; moreover, since the classical register records the irrep label $\lambda$, it also performs weak Schur sampling. That is, for any PI state, $\omega_{\lambda}\otimes \mathbb{I}_{m(n,\lambda)}/m(n,\lambda)$, living in the irrep labelled by $\lambda$, it gives us the state,
\begin{align} 
\mathcal{E}\left( {\omega}_{\lambda}\otimes \frac{\mathbb{I}_{m(n,\lambda)}}{m(n,\lambda)} \right)=
{\omega}_{\lambda}\otimes |\eta_{\lambda}\rangle\langle \eta_{\lambda}|\otimes \ket{\lambda}\bra{\lambda}_{\text{reg}}, \label{Eqn:USS}
\end{align}
thus achieving unitary Schur sampling.\\ 

To prove this, first note that there exists a collection of $\{p_i, U_{\lambda,i} \}$ PI unitaries acting on the $\lambda$ isotypic component, i.e., $U_{\lambda,i}=\widetilde{U}_{\lambda,i}\otimes \mathbb I_{m(n,\lambda)}$ such that
\begin{align} \label{Eqn:Wtakestorepstate}   \sigma_{\lambda}=\mathcal{W}\left( \mathcal{T}(\phi_{\lambda}) \right),
\end{align}
where $   \mathcal{W}(\cdot) \coloneqq  \sum_i p_i U_{\lambda,i} (\cdot)U_{\lambda,i}^{\dagger}$ and $\phi_\lambda=|\phi_\lambda\rangle\langle\phi_\lambda|$. 
Indeed, writing
\(\omega_\lambda=\sum_i p_i\ket{\psi_i}\bra{\psi_i}\),
we may choose \(\widetilde U_{\lambda,i}\) such that
\(\widetilde U_{\lambda,i}\ket{\psi_\lambda}=\ket{\psi_i}\).
This would imply 
\begin{multline}\label{Eqn:WactioninSU(d)}
    \mathcal{W}\left(
    \psi_\lambda \otimes
    \frac{\mathbb{I}_{m(n,\lambda)}}{m(n,\lambda)}
    \right)= \\\sum_i p_i \widetilde{U}_{\lambda,i}\psi_\lambda \widetilde{U}^{\dagger}_{\lambda,i} \otimes
    \frac{\mathbb{I}_{m(n,\lambda)}}{m(n,\lambda)}\ ,
\end{multline}
where $\psi_\lambda=|\psi_\lambda\rangle\langle\psi_\lambda|$. Since each \(M_r\) is \(\mathrm{SU}(d)\)-invariant, Schur-Weyl duality implies that
\(M_r\) acts as the identity on \(\mathcal Q_\lambda^d\). Hence
\([U_{\lambda,i},M_r]=0\). Then, using Eq.~\eqref{Eqn:Wtakestorepstate},
\begin{align}
   &\mathcal{E}\left( \sigma_{\lambda} \right) = \mathcal{E}\left(\mathcal{W}\left( \mathcal{T}(\phi_{\lambda}) \right)\right) \notag\\ 
   &= \mathcal{W}\left(\mathcal{E}\left( \mathcal{T}(\phi_{\lambda}) \right)\right)=\mathcal{W}\left(\phi_{\lambda}\right)\otimes \ket{\lambda}\bra{\lambda}_{\text{reg}}
\end{align}
Since \(\phi_\lambda=\psi_\lambda\otimes|\eta_\lambda\rangle\langle\eta_\lambda|\), the channel \(\mathcal W\) acts only on the \(\mathcal Q_\lambda^d\) factor and leaves \(|\eta_\lambda\rangle\) unchanged. Finally using Eq.~\eqref{Eqn:WactioninSU(d)},
\begin{align}
   \mathcal{E}\left( \sigma_{\lambda} \right)={\omega}_{\lambda}\otimes |\eta_{\lambda}\rangle\langle \eta_{\lambda}|\otimes \ket{\lambda}\bra{\lambda}_{\text{reg}}.    
\end{align}
Therefore, it suffices to construct a protocol satisfying the condition in Eq. ~\eqref{assumchannel} for every partition $\lambda\vdash n$. In the next section we show that antisymmetric measurements are sufficient for this.

\color{black}
\section{unitary Schur sampling via antisymmetric measurements}
We wish to realize the channel defined in Eq.~\eqref{assumchannel}. The key observation is that the canonical representative state $\ket{\phi_\lambda}$ is constructed from antisymmetric states, one for each column of the Young diagram. Consequently, if we can identify and separate the first column, the problem reduces recursively to the Young diagram obtained by removing that column. Further, we will implement this channel using only antisymmetric measurements, i.e., show that a sequence of antisymmetric measurements can implement this channel, and thus unitary Schur sampling. The following lemma regarding representative states is the key idea behind this approach.
\begin{restatable}{lemma}{removingcolumns} \label{Lemma:removingcolumns}
Let \(\lambda\) be a Young diagram with column lengths
\(\lambda'_1\ge \dots\ge \lambda'_r\), and let
\(\ket{\phi_{\lambda}}\) be the corresponding representative state.
Let
\(S_1=\{0,\dots,\lambda'_1-1\}\) denote the first
\(\lambda'_1\) qudits, and define
\begin{align}
\ket{\Lambda_{S_1}}
:=
\ket{0 \wedge 1 \wedge \cdots \wedge (\lambda'_1-1)} .
\end{align}
Let \(\lambda \setminus \lambda'_1\) denote the Young diagram obtained by
removing the first column of \(\lambda\). Then
\begin{equation}
  \begin{aligned}
    &\Pi^{-}_{S_1}
      \,
      \mathcal{T}\!\left(
      \ket{\phi_{\lambda}}\bra{\phi_{\lambda}}
      \right)
      \,
      \Pi^{-}_{S_1}
      \propto \ifmaintext{\\ & \qquad \qquad \quad}
                         \ket{\Lambda_{S_1}}\bra{\Lambda_{S_1}}
                         \otimes
                         \mathcal{T}\!\left(
                         \ket{\phi_{\lambda \setminus \lambda'_1}}
                         \bra{\phi_{\lambda \setminus \lambda'_1}}
                         \right).    
  \end{aligned}
\end{equation}
Moreover, let \(M\) be any \(\mathrm{SU}(d)\)-invariant operator and let
\(S\subseteq[n]\). If $d \ge|S|> \lambda'_1$,
\begin{align}
\Tr_{\overline S}\!\left(
M|\phi_\lambda\rangle\langle\phi_\lambda|
\,M^\dagger
\right) \ \text{and} \ \Tr_{\overline S}\!\left(
M\,
\mathcal T(|\phi_\lambda\rangle\langle\phi_\lambda|)
\,M^\dagger
\right)
\end{align}
have no support on the  fully antisymmetric subspace of
\(|S|\) qudits.
\end{restatable}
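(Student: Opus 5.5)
The plan rests on one structural fact about $\ket{\phi_\lambda}$. Every column block $\ket{0\wedge\cdots\wedge(\lambda'_i-1)}$ uses only the local labels $\{0,\dots,\lambda'_i-1\}\subseteq\{0,\dots,\lambda'_1-1\}$. Hence $\ket{\phi_\lambda}$ lies in $\mathcal K^{\otimes n}$ with $\mathcal K:=\mathrm{Span}\{\ket{0},\dots,\ket{\lambda'_1-1}\}$, a $\lambda'_1$-dimensional subspace of $\mathbb C^d$. This subspace $\mathcal K^{\otimes n}$ is invariant under every $P_\pi$. By Schur--Weyl duality, every $\mathrm{SU}(d)$-invariant $M$ is a linear combination of permutations. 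So $\mathcal T(\ket{\phi_\lambda}\bra{\phi_\lambda})$, $M\ket{\phi_\lambda}\bra{\phi_\lambda}M^\dagger$ and $M\,\mathcal T(\ket{\phi_\lambda}\bra{\phi_\lambda})M^\dagger$ are all supported on $\mathcal K^{\otimes n}$.

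\emph{Second claim.} This already gives the second claim. For any $S$, the partial trace over $\overline S$ of an operator supported on $\mathcal K^{\otimes n}$ is supported on $\mathcal K^{\otimes |S|}$. If $|S|>\lambda'_1=\dim\mathcal K$, then $\wedge^{|S|}(\mathbb C^d)\cap\mathcal K^{\otimes|S|}=\wedge^{|S|}(\mathcal K)=0$, so the reduced operator is annihilated by $\Pi^-_S$. The hypothesis $|S|\le d$ only guarantees that the antisymmetric subspace being discussed is nonzero.

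\emph{First claim: separating the first column.} Set $k=\lambda'_1$. Applying the same support argument to $S_1$, the operator $\Pi^-_{S_1}\mathcal T(\phi_\lambda)\Pi^-_{S_1}$ is supported on $\wedge^k(\mathcal K)\otimes\mathcal K^{\otimes(n-k)}$. Since $\dim\mathcal K=k$, the space $\wedge^k(\mathcal K)$ is one-dimensional and spanned by $\ket{\Lambda_{S_1}}$. The projected operator therefore factorizes as $\ket{\Lambda_{S_1}}\bra{\Lambda_{S_1}}\otimes\rho'$ for some positive $\rho'$ on the remaining $n-k$ qudits. Two properties of $\rho'$ follow directly:
\begin{itemize}
\item Since $\Pi^-_{S_1}$ commutes with permutations of $\overline{S_1}$ and $\mathcal T(\phi_\lambda)$ is fully permutation invariant, $\rho'$ is $\mathbb S_{n-k}$-invariant.
\item Since $\Pi^-_{S_1}$ commutes with $U^{\otimes n}$ and $\mathcal T(\phi_\lambda)=\psi_\lambda\otimes\mathbb I/m$ is block-diagonal, the projected state lies in the $\lambda$-sector restricted to $\wedge^k\otimes(\cdot)$. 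By Pieri's rule with $k=\lambda'_1$, $\rho'$ is supported only on the $\mu:=\lambda\setminus\lambda'_1$ isotypic component. This case can also be argued directly, as discussed below.
\end{itemize}
Together with permutation invariance, this gives $\rho'=\omega'\otimes\mathbb I_{\mathcal P_\mu}/\dim\mathcal P_\mu$ for some state $\omega'$ on $\mathcal Q^d_\mu$.

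\emph{Identifying $\omega'$ via weights.} It remains to show $\omega'=\psi_\mu$, and I would do this with torus weights. Each $P_\pi$ preserves the number of occurrences of each label $j$. In $\ket{\phi_\lambda}$, label $j$ occurs exactly $\lambda_{j+1}$ times, once in each column of length $>j$. So $\mathcal T(\phi_\lambda)$ is supported on the weight-$\lambda$ space. Projecting onto $\ket{\Lambda_{S_1}}$, which has weight $(1,\dots,1,0,\dots)$ with $k$ ones, leaves $\rho'$ of weight $\lambda-(1^k)=\mu$. Within the $\mu$-isotypic component, the weight-$\mu$ (highest-weight) subspace is $\mathrm{Span}\{\ket{\psi_\mu}\}\otimes\mathcal P_\mu$. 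Here I use that $\ket{\phi_\mu}$ has weight $\mu$ and factorizes as $\ket{\psi_\mu}\otimes\ket{\eta_\mu}$ by Eq.~\eqref{Eqn:repstatefactorizes}, together with the standard fact that the highest weight space of an irrep is one-dimensional. Hence $\omega'=\psi_\mu$, so $\rho'\propto\mathcal T(\phi_\mu)$ by Eq.~\eqref{Eq:twirledRepstate}. Finally, I check that the proportionality constant is nonzero, for instance by taking $\pi=\mathrm{id}$, for which $\Pi^-_{S_1}\ket{\phi_\lambda}=\ket{\phi_\lambda}$ has positive weight in the twirl.

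\emph{Main obstacle.} The delicate step is confining $\rho'$ to the $\mu$-sector. The paper promises to do this without invoking representation-theoretic results, so rather than citing Pieri I would argue directly. The weight argument already forces weight $\mu$, and every irrep $\nu\vdash n-k$ containing weight $\mu$ satisfies $\nu\trianglerighteq\mu$ in dominance order. The task is then to exclude $\nu\neq\mu$. I would try an explicit expansion: $(\bra{\Lambda_{S_1}}\otimes\mathbb I)P_\pi\ket{\phi_\lambda}$ antisymmetrizes the $k$ qudits landing in $S_1$. Using the antisymmetry of each column block, I would show that this vector is a signed sum of permuted copies of $\ket{\phi_\mu}$, hence lies in $\mathrm{Span}\{P_\tau\ket{\phi_\mu}\}$, which sits inside the $\mu$-sector. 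If this combinatorial bookkeeping of signs becomes unwieldy, I would fall back on citing the $k=\lambda'_1$ case of Pieri's rule from Sec.~\ref{subsection:Pieri}.
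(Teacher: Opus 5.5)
Your proposal is correct, and its first half is the paper's argument. The second claim and the factorization $\Pi^-_{S_1}\mathcal T(\ket{\phi_\lambda}\bra{\phi_\lambda})\Pi^-_{S_1}\propto\ket{\Lambda_{S_1}}\bra{\Lambda_{S_1}}\otimes\rho'$ come from the same label counting in the paper. Only the labels $0,\dots,\lambda'_1-1$ occur, and a nonzero antisymmetric vector on $m$ qudits needs $m$ distinct labels.

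The two routes diverge in identifying $\rho'$. For this step the paper uses neither Pieri's rule nor highest-weight multiplicity one. It applies the remaining column antisymmetrizers $M_{\bar 1}=\Pi^-_{S_r}\cdots\Pi^-_{S_2}$ and uses Lemma~\ref{lem:canonicalprojection} to get $M_{\bar 1}\rho' M_{\bar 1}\propto\ket{\phi_\mu}\bra{\phi_\mu}$, where $\mu=\lambda\setminus\lambda'_1$. It then combines $M_{\bar 1}=\bigoplus_\nu\mathbb I\otimes M_{\bar 1,\nu}$ with the product form $\ket{\phi_\mu}=\ket{\psi_\mu}\otimes\ket{\eta_\mu}$. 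This forces the $\mathcal Q_\mu$ factor of the $\mu$-block of $\rho'$ to be $\psi_\mu$. The advantage is that everything stays inside the paper's own rank-one lemmas. The cost is that it sees $\rho'$ only through $M_{\bar 1}$, whose image $\bigotimes_j\wedge^{\mu'_j}\mathbb C^d$ contains no sector $\nu\triangleright\mu$. Those are exactly the sectors your weight argument leaves open. So, as written, the paper's argument pins down the $\mu$-block without excluding the other blocks. Your explicit appeal to Pieri's rule (Sec.~\ref{subsection:Pieri}) supplies this confinement, and your weight argument then finishes cleanly.

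You leave the sign-bookkeeping expansion unexecuted, but you can avoid Pieri within your own weight framework. Write $E^{(S)}_{ij}=\sum_{q\in S}\ket{i}\bra{j}_q$ for $i<j$.
\begin{itemize}
\item Each column block $\ket{0\wedge\cdots\wedge(h-1)}$ is killed by $E^{([n])}_{ij}$, since replacing $j$ by $i<j$ either finds no $j$ or duplicates $i$.
\item $E^{([n])}_{ij}$ commutes with $P_\pi$.
\item The cross term vanishes: $\bra{\Lambda_{S_1}}E^{(S_1)}_{ij}=(E^{(S_1)}_{ji}\ket{\Lambda_{S_1}})^\dagger$ is zero unless $j\ge\lambda'_1$, and that label never occurs in $P_\pi\ket{\phi_\lambda}$.
\end{itemize}
Hence $v_\pi=(\bra{\Lambda_{S_1}}\otimes\mathbb I)P_\pi\ket{\phi_\lambda}$ satisfies $E^{(\overline{S_1})}_{ij}v_\pi=0$ for all $i<j$. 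It is therefore a highest-weight vector of weight $\mu$ and lies in $\ket{\psi_\mu}\otimes\mathcal P_\mu$. This gives both confinement to the $\mu$ sector and $\omega'=\psi_\mu$ in one step.
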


\begin{figure}[h]
\centering
\begin{tikzpicture}[scale=0.7]

\def\s{0.6}

\foreach \r/\len in {0/5,1/4,2/3,3/1}{
    \foreach \c in {0,...,\numexpr\len-1\relax}{
        \ifnum\c=0
            \fill[blue!20] (\c*\s,-\r*\s) rectangle ++(\s,-\s);
        \fi
        \draw (\c*\s,-\r*\s) rectangle ++(\s,-\s);
    }
}

\draw[
decorate,
decoration={brace,mirror,amplitude=5pt}
]
(-0.15,0) -- (-0.15,-4*\s)
node[midway,left=6pt] {$\lambda'_1$};

\end{tikzpicture}

\caption{
The first column of the Young diagram $\lambda$ has length $\lambda'_1$. The unshaded boxes represent the Young diagram $\lambda\setminus\lambda'_1$. After separating the antisymmetric subsystem corresponding to the first column, the remaining qudits are supported on the isotypic component labeled by $\lambda\setminus\lambda'_1$.
}
\label{fig:firstcolumn}
\end{figure}

The proof can be found in Appendix~\ref{AppAsub:lemma 1}. From the discussion in the previous section, the above argument should apply to general states living in $\lambda$ isotypic component. We make this precise in the following lemma:
\begin{restatable}{lemma}{largestantisymm} \label{Thm:largest antisymm}
Let \(\sigma_\lambda\) be a  state of $n$ qudits restricted to the
\(\lambda\)-isotypic component of
\((\mathbb C^d)^{\otimes n}\), and let
\(M\) be an
\(\mathrm{SU}(d)\)-invariant measurement operator, i.e.,
\[M^\dag M \le \mathbb{I}\quad\text{and}\quad  
[M,U^{\otimes n}]=0
\quad
\forall\,U\in\mathrm{SU}(d).
\]
Let 
\begin{align}
\tau_{S_1}=\Tr_{\overline{S_1}}\left( M \sigma_{\lambda} M^{\dagger} \right)
\end{align}
be the non-zero (unnormalized) post-measurement reduced state of subsystem $S_1$.
If $d\ge|S_1|>\lambda'_1$, then $\tau_{S_1}$ has no support on the toally anti-symmetric subspace of $|S_1|$ qudits. Furthermore, suppose that \(|S_1|=\lambda'_1\) and that \(M=\Pi^{-}_{S_{1}}\otimes \mathbb I_{\overline{S_1}}\). Then the reduced state, \(\mathrm{tr}_{S_{1}} ( M \sigma_{\lambda} )\), of the complementary subsystem is  supported entirely on the isotypic
component labelled \(\lambda\setminus\lambda'_1\).
\end{restatable}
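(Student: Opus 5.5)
The plan is to lift Lemma~\ref{Lemma:removingcolumns} from the representative state $\ket{\phi_\lambda}$ to an arbitrary state $\sigma_\lambda$ supported on the $\lambda$-isotypic component. I will use the same reduction as in Subsection~\ref{subsec:reducingtorepstates}, with the role of the channel $\mathcal E$ now played by the single $\mathrm{SU}(d)$-invariant operator $M$.

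\textbf{Step 1 (reduction).} Write $\sigma_\lambda$ on $\mathcal Q^d_\lambda\otimes\mathcal P_\lambda$. Its support is contained in $\mathcal Q^d_\lambda\otimes\mathcal P_\lambda$, so it is enough to show the following. For every vector $\ket{v}=\ket{q}\otimes\ket{p}$ with $\ket q\in\mathcal Q^d_\lambda$ and $\ket p\in\mathcal P_\lambda$, the operator $\Tr_{\overline{S_1}}(M\ket v\bra v M^\dagger)$ has no support on $\wedge^{|S_1|}(\mathbb C^d)$. This suffices because support of a sum of positive operators is the span of their supports, and $\sigma_\lambda$ is a positive combination of rank-one projectors onto vectors in $\mathcal Q^d_\lambda\otimes\mathcal P_\lambda$.

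To treat general vectors rather than product vectors, I will argue at the level of the positive operator $\Pi^-\Tr_{\overline{S_1}}(\cdot)\Pi^-$. This operator vanishes on $\sigma_\lambda$ if and only if $\Tr\bigl[(\Pi^-_{S_1}\otimes \mathbb I)M\sigma_\lambda M^\dagger\bigr]=0$. Since $M^\dagger(\Pi^-_{S_1}\otimes\mathbb I)M$ is positive, this in turn holds if and only if
\[
(\Pi^-_{S_1}\otimes\mathbb I)M\,\Pi_\lambda = 0 .
\]

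\textbf{Step 2 (invariance argument).} The operator $X=(\Pi^-_{S_1}\otimes\mathbb I)M$ is $\mathrm{SU}(d)$-invariant, because $\Pi^-_{S_1}$ is a linear combination of permutations. By Eq.~\eqref{Eqn:Schur-weylSU(d)inv}, $X$ restricted to the $\lambda$-block is $\mathbb I_{\mathcal Q_\lambda}\otimes\widetilde X_\lambda$. Lemma~\ref{Lemma:removingcolumns} applied to this $M$ gives $X\mathcal T(\phi_\lambda)X^\dagger=0$. By Eq.~\eqref{Eq:twirledRepstate}, $\mathcal T(\phi_\lambda)=\psi_\lambda\otimes \mathbb I/m$, so $\psi_\lambda\otimes \widetilde X_\lambda\widetilde X_\lambda^\dagger=0$, and hence $\widetilde X_\lambda=0$. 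Therefore $X\Pi_\lambda=0$, which proves the first claim. The hypothesis $d\ge|S_1|$ only ensures that the antisymmetric space is nonzero.

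\textbf{Step 3 (second claim).} Now take $|S_1|=\lambda'_1$ and $M=\Pi^-_{S_1}\otimes\mathbb I$. I need $\Tr_{S_1}(M\sigma_\lambda)=\Tr_{S_1}(M\sigma_\lambda M)$ to be annihilated by $\Pi_\mu$ for every $\mu\neq\lambda\setminus\lambda'_1$ on the remaining $n-\lambda'_1$ qudits. Set
\[
Y_\mu=(\Pi^-_{S_1}\otimes\Pi_\mu)\,\Pi_\lambda .
\]
This is again a combination of permutations, hence $\mathrm{SU}(d)$-invariant, and as before it equals $\mathbb I\otimes\widetilde Y_\mu$ on the $\lambda$-block. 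The first part of Lemma~\ref{Lemma:removingcolumns} gives
\[
\Pi^-_{S_1}\mathcal T(\phi_\lambda)\Pi^-_{S_1}\propto \Lambda_{S_1}\otimes\mathcal T(\phi_{\lambda\setminus\lambda'_1}).
\]
The right-hand side is killed by $\mathbb I\otimes\Pi_\mu$ for $\mu\ne\lambda\setminus\lambda'_1$, because $\ket{\phi_{\lambda\setminus\lambda'_1}}$ lies in that isotypic component and $\mathcal T$ preserves isotypic components. Hence $Y_\mu\mathcal T(\phi_\lambda)Y_\mu^\dagger=0$, so $\widetilde Y_\mu=0$ by the same positivity argument as in Step 2, and $Y_\mu\sigma_\lambda Y_\mu^\dagger=0$. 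Taking the trace over $S_1$ then shows that the complementary reduced state has zero weight on every $\mu\neq\lambda\setminus\lambda'_1$.

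\textbf{Expected obstacle.} The step I expect to need care is the passage from $\mathcal T(\phi_\lambda)$ to all of $\Pi_\lambda$. This requires that $\mathcal T(\phi_\lambda)$ has full support on $\mathcal P_\lambda$ in the multiplicity factor, which relies on the factorization in Eq.~\eqref{Eqn:repstatefactorizes} together with irreducibility of $\mathcal P_\lambda$. It also requires that $\psi_\lambda\ne0$ lets me cancel the $\mathcal Q$ factor. Everything else is bookkeeping with Schur--Weyl block structure.
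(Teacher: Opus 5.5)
Your proof is correct, but it takes a different route from the paper's.

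\textbf{The paper's route.} It first treats permutation-invariant $\sigma_\lambda$ by writing $\sigma_\lambda=\mathcal W(\mathcal T(\phi_\lambda))$, where $\mathcal W$ is a mixture of unitaries acting only on $\mathcal Q^d_\lambda$. It commutes $\mathcal W$ past $\Pi^-_{S_1}M$ and invokes Lemma~\ref{Lemma:removingcolumns}. For the second claim it argues that $\mathcal W$ preserves the symmetry sectors of $S_1$ and $\overline{S_1}$. Finally, it lifts both claims to non-invariant $\sigma_\lambda$ by noting that $\sigma_\lambda/n!$ is one positive summand of $\mathcal T(\sigma_\lambda)$.

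\textbf{Your route.} You work at the operator level instead. Since $X=(\Pi^-_{S_1}\otimes\mathbb I)M$ and $Y_\mu$ are $\mathrm{SU}(d)$-invariant, each acts as $\mathbb I_{\mathcal Q^d_\lambda}\otimes\widetilde X_\lambda$ on the $\lambda$ block. Because $\mathcal T(\phi_\lambda)$ has a maximally mixed, hence full-rank, multiplicity factor, vanishing on it forces $\widetilde X_\lambda=0$, i.e.\ $X\Pi_\lambda=0$.

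\textbf{What each buys.} Your argument covers arbitrary, not necessarily permutation-invariant, $\sigma_\lambda$ in one step. It yields the slightly stronger identities $X\Pi_\lambda=0$ and $Y_\mu=0$, and it needs neither the $\mathcal W$ construction nor the twirl-lifting step. The paper's route stays closer to the state-level representative-state reduction of Subsection~\ref{subsec:reducingtorepstates} that it uses throughout.

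\textbf{Small remarks.}
\begin{enumerate}
\item The obstacle you anticipate is not really one. Schur's lemma alone gives $\mathcal T(\phi_\lambda)=\Tr_{\mathcal P_\lambda}(\phi_\lambda)\otimes\mathbb I/m(n,\lambda)$, so you do not need the factorization of $\ket{\phi_\lambda}$.
\item Your initial reduction to product vectors is valid, but for a different reason than the one you give. It holds by linearity of the condition $X\ket v=0$, not by the support-of-sums argument, since general vectors in $\mathcal Q^d_\lambda\otimes\mathcal P_\lambda$ are not products. The operator-level argument makes this reduction unnecessary anyway.
\item In Step~3, Lemma~\ref{Lemma:removingcolumns} is stated for the first $\lambda'_1$ qudits. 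For a general $S_1$ you should relabel using the permutation invariance of $\mathcal T(\phi_\lambda)$. The paper glosses over this point as well.
\end{enumerate}
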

The proof is found in Appendix~\ref{AppAsub:lemma 1}; crucially, Lemma~\ref{Thm:largest antisymm} implies that identifying the largest antisymmetric subsystem is equivalent to identifying the first column of \(\lambda\). We perform an antisymmetric measurement, $\{\mathbb{I}-\Pi^{-}_{S_1},\Pi^{-}_{S_1}\}$, on a set, \(S_1\), of $\lambda'_1$ qudits of \(\mathcal{T}\!(\ket{\phi_{\lambda}} \bra{\phi_{\lambda}})\). If projection into the antisymmetric subspace is achieved, then Lemma  \ref{Lemma:removingcolumns} tells us that we may separate the corresponding qudits and obtain,
\begin{align}    \ket{\Lambda_{S_{1}}}\bra{\Lambda_{S_{1}}}
    \otimes
    \mathcal{T}\!\left(
    \ket{\phi_{\lambda \setminus \lambda'_1}}
    \bra{\phi_{\lambda \setminus \lambda'_1}}
    \right).
\end{align}
By repeating this on the remaining columns $\lambda'_2, \lambda'_3, \cdots$, assuming the measurements keep resulting in outcome \(\Pi^{-}_{S_1}\), we recover $\ket{\phi_{\lambda}}$.

We illustrate this idea with the 3 qudit example. We have already established that SWAP tests implement weak Schur sampling for 3 qudits. Thus, to realize unitary Schur sampling, it remains to separate the maximal antisymmetric subsystem from \(\mathcal{T}\!(\ket{\phi_{\lambda}} \bra{\phi_{\lambda}})\). For three qudits, the only partition where this is not trivial is \((2,1)\), for which the maximal antisymmetric subsystem has
size two. In this case, any SWAP test yielding the antisymmetric outcome successfully identifies the desired antisymmetric subsystem. Moreover, a uniformly random SWAP test produces the antisymmetric outcome with probability \(1/2\) (see Lemma \ref{lemma:21k-1random SWAP}). We note that for three qudits, a SWAP-test based implementation of unitary Schur sampling was also given in the SWAPNET construction of~\cite{QPA1}.\\

 This strategy has two important gaps: 
\begin{itemize}
    \item It assumes postselection, i.e, the antisymmetric measurements might fail to project to the antisymmetric subspace. 
    \item It assumes that  $\lambda$ is known.
\end{itemize}

In the rest of the section, we provide two algorithms that circumvent these problems.
The first assumption can be removed by exploiting the permutation twirl. After an unsuccessful antisymmetric measurement, reapplying the twirl restores the state to $\mathcal{T}\!\left(|\phi_\lambda\rangle\langle\phi_\lambda|\right),$ so the protocol is effectively reset to its original configuration. Consequently, a failed measurement is harmless, and the antisymmetric measurement can simply be repeated until the desired outcome is obtained . This observation forms the basis of our first algorithm, which uses this idea via randomization. Although the resulting procedure is not gate-efficient, it provides a simple proof of concept. The second algorithm uses a more involved process of sequentially growing the antisymmetric subsystem to obtain a gate-efficient implementation and constitutes the main algorithmic contribution of this paper.

Before describing the algorithms, we note that the height of the first column always satisfies
\be
\lambda'_1\le \min\{n,d\}. 
\ee
Thus, all bounds below that are written in terms of \(d\) may be sharpened by replacing \(d\) with \(\min\{n,d\}\). For notational simplicity, we continue to write \(d\) throughout.

\subsection{Algorithm 1}\label{subsec:Algo1}

We analyze Algorithm~1 on the representative state \(\mathcal{T}(|\phi_\lambda\rangle\langle\phi_\lambda|)\). By the representative-state reduction established in subsection~\ref{subsec:reducingtorepstates}, the same analysis applies to any permutation-invariant state supported on the \(\lambda\)-isotypic component.
We begin with three simple observations. First, suppose we perform an
antisymmetric measurement $\Pi^{-}_{S}, \mathbb{I} - \Pi^{-}_{S} $ on an arbitrary subset of $k$ qudits. According to Lemma~\ref{Thm:largest antisymm}, if $k > \lambda'_1$ the probability of projecting to the antisymmetric subspace is 0. Second, after any antisymmetric measurement, performing a permutation twirl restores the state to $\mathcal{T}(\ket{\phi_{\lambda}}\bra{\phi_{\lambda}})$, which we explain later in subsection~ \ref{subsec: reversibility}. Third, if $k \le \lambda'_1$, 
\begin{align}
    \Tr\left( \Pi^{-}_{S}  \mathcal{T}(\ket{\phi_{\lambda}}\bra{\phi_{\lambda}}) \right) \ge \frac{\binom{\lambda'_1}{k}}{\binom{n}{k}}.
\end{align}
To see this, view \(\mathcal T(|\phi_\lambda\rangle\langle\phi_\lambda|)\) as the state obtained by applying a uniformly random permutation \(\pi\in\mathbb S_n\) to \(\ket{\phi_\lambda}\). A uniformly random permutation maps the first \(\lambda'_1\) qudits to a uniformly random subset of \(\lambda'_1\) qudits; therefore, any fixed subset \(S\) of \(k\le \lambda'_1\) qudits is contained in this permuted block with probability $\binom{\lambda'_1}{k}/\binom{n}{k}$. Conditioned on this event, the qudits in \(S\) are fully antisymmetric,
 so the antisymmetric outcome occurs with probability one.\\

These observations suggest searching for the maximal antisymmetric subsystem by starting from \(k=\min{\{n,d \}}=\tilde{d}\) and decreasing \(k\) until an antisymmetric outcome is found. We initialize $k=\tilde{d}$, place all the qudits in $B$, and initialize an empty list $L$ in our classical memory. 
We pick a random subset $S \subseteq B$ of size $k$ and perform the antisymmetric measurement $\{ \Pi^-_{S}, \mathbb{I} - \Pi^-_{S} \}$. If the antisymmetric outcome is obtained, the corresponding qudits are moved from \(B\) to \(A\), and \(k\) is appended to \(L\). In Appendix \ref{App:Algo 1}, we show that if $k$ is the size of the largest antisymmetric subsystem supported on the remaining qudits in $B$, the probability of not having at least one antisymmetric outcome within $\binom{n}{k}\ln(n\epsilon^{-1})$ consecutive measurements is at most $\epsilon/n$. If this many consecutive trials fail, we decrement \(k\) by one. Thus, with probability at least \(1-\epsilon/n\), the procedure identifies the largest antisymmetric subsystem supported on \(B\). We show in Appendix~\ref{App:Algo 1} that the algorithm performs at most \(2n^{\tilde{d}+1}\ln(n\epsilon^{-1})\) antisymmetric measurements and outputs \(\ket{\phi_\lambda}\) in system \(A\) with probability at least \(1-\epsilon\). The entries of \(L\) are the column lengths of \(\lambda\). We provide an explicit pseudocode in Algorithm~\ref{alg:1}.\\

The preceding upper bound follows from a simple worst-case analysis, so
one might wonder whether its poor dependence on \(d\) is merely an
artifact of the estimate. This is not the case for the subset-search
strategy of Algorithm~1. Consider an \(L\)-shaped Young diagram with
first-column length \(k\). By the representative-state reduction, we
may take
\begin{align}
\ket{\phi_\lambda}
=
\ket{0\wedge1\wedge\cdots\wedge(k-1)}
\otimes\ket0^{\otimes(n-k)}\ .
\end{align}
Any \(k\)-qudit subset containing fewer than \(k-1\) qudits from the
first antisymmetric block contains at least two copies of \(\ket0\),
and therefore has zero overlap with the fully antisymmetric subspace.
Thus at most \(1+k(n-k)\) of the \(\binom nk\) possible subsets can
succeed, giving 
\be
p_{\mathrm{succ}}
\le
\frac{1+k(n-k)}{\binom nk}\ .
\ee
Taking \(k=\tilde{d}\), and using
$\binom{n}{\tilde{d}}\ge (n-\tilde{d}+1)^{\tilde{d}}/\tilde{d}!$, we find that for fixed \(d\), the success probability is upper-bounded by $\tilde{d}!\,\tilde{d}/(n-\tilde{d}+1)^{\tilde{d}-1}$. Consequently, obtaining a constant probability of success requires
\(\Omega(n^{\tilde{d}-1})\) trials in the worst case.\\

It is useful to note that Algorithm~1 reduces, for \(d=2\), to the
random-SWAP protocol studied in our previous work~\cite{qubitSWAPUSS}. In the
qubit case the only nontrivial antisymmetric subsystem has size two,
so the subset search consists simply of choosing random pairs and
performing SWAP tests. As shown in Ref.~\cite{qubitSWAPUSS}, where we perform a sharper analysis, this requires only
\(2n\ln(n\epsilon^{-1})\) SWAP tests to achieve error \(\epsilon\).
Thus the subset-search strategy is efficient for qubits; its poor
dependence on the local dimension appears only when one attempts to
extend the same strategy to larger antisymmetric subsystems.\\

Finally, an explicit permutation twirl before every measurement is unnecessary. Applying a uniformly random permutation and then measuring a fixed subset of \(k\) qudits is statistically equivalent to choosing a uniformly random subset of \(k\) qudits and measuring it directly. Thus, throughout Algorithm~1, we implement the required randomization simply by choosing a fresh uniformly random \(k\)-qudit subset at every round.
\\

 \begin{algorithm}[H]
\caption{Subset-search Algorithm}
\label{alg:1}
\begin{algorithmic}[1]
\Require PI State \(\sigma_\lambda\),
error parameter \(\epsilon\)
\Ensure State \(\mathcal{E}_{\text{Schur}}(\sigma_\lambda)\) and Young diagram \(\lambda\)

\State Initialize \(B\) to contain all \(n\) qudits, \(A\gets\emptyset\),
and \(L\gets\emptyset\).
\State Set \(k\gets d\).

\While{$B\neq\emptyset$}

\State Perform antisymmetric measurements on uniformly random
\(k\)-qudit subsets \(S\subseteq B\), for at most
\(\binom nk\ln(n\epsilon^{-1})\) trials.

\If{an antisymmetric outcome is obtained}
    \State Move the corresponding subset \(S\) from \(B\) to \(A\).
    \State Append \(k\) to \(L\).
\Else
    \State Set \(k\gets k-1\).
\EndIf

\EndWhile

\State \Return \(\mathcal{E}_{\mathrm{Schur}}(\sigma_\lambda)\) stored in \(A\),
together with \(L\), whose entries specify the Young diagram \(\lambda\).
\end{algorithmic}
\end{algorithm}

\subsection{Algorithm 2}
By Lemma~\ref{Thm:largest antisymm}, the size of the largest
antisymmetric subsystem that can be prepared from a state in the
\(\lambda \)-isotypic component is equal to the length of the first column of
\(\lambda \). Furthermore, after separating this subsystem, the remaining
qudits are supported on the isotypic component corresponding to the Young
diagram obtained by removing the first column. Therefore, recursively
identifying and separating the largest antisymmetric subsystem recovers the
entire Young diagram \( \lambda \), and hence performs weak Schur sampling.
Since the resulting state in the $\mathbb{S}_n$ irrep is $\ket{\eta_\lambda}$, this also
realizes unitary Schur sampling.\\

Algorithm~1 achieves this objective by explicitly searching for antisymmetric
subsystems of a prescribed size. As shown in the previous section, this
requires at most $2n^{d+1}\ln{(n\epsilon^{-1})}$ antisymmetric measurements. While conceptually
simple, the resulting complexity scales exponentially with the local dimension \( d \). We therefore seek a more efficient approach to separate the largest antisymmetric subsystem.\\

We begin Algorithm~2 by applying a uniformly random permutation to the \(n\) qudits, i.e., by applying the permutation-twirling channel \(\mathcal T\). This leaves permutation-invariant input states unchanged and maps arbitrary input states to permutation-invariant states. The key ingredient of Algorithm~2 is a subroutine that constructs the largest
antisymmetric subsystem incrementally. 
Throughout the algorithm, the qudits are partitioned into two disjoint systems, $B$ and $C$. The system $C$ contains the antisymmetric subsystems that have already been separated , while $B$ contains the remaining qudits. We designate a subsystem $A \subseteq B$, which serves as the current candidate antisymmetric subsystem. Throughout the subroutine, the qudits in \(A\) are supported on the fully antisymmetric subspace, and \(A\) is grown by incorporating qudits from \(B\setminus A\). Initially, $B$ consists of all $n$ qudits, $C$ is empty, and $A$ contains a single qudit selected from $B$. We perform the following set of steps. (which we will refer to as a round of measurement) repeatedly:

\begin{itemize}
\item Select a qudit $r$ uniformly at random from $B\setminus A$ .
\item Perform an antisymmetric measurement on the qudits in \(A\) together with qudit $r$.

\item If the measurement projects onto the antisymmetric subspace, add the
$r$ to \(A\).

\item Otherwise, $r$ remains in $B\setminus A$. 
\end{itemize}
Suppose that, before the measurement, \(A\) contains \(k\) qudits supported on the fully antisymmetric subspace. Irrespective of the outcome of the measurement, the original \(k\) qudits in \(A\) remain supported on their fully antisymmetric subspace. Indeed, the projectors \(\Pi^-_{A}\otimes \mathbb I_r \) and \(\Pi^-_{A\cup{r}}\) commute, which is a specialisation of Lemma~\ref{Lemma:nestedcommutator} in Appendix~\ref{AppAsubsec:nestedcommutation}. Therefore \(\Pi^-_{A}\otimes I_r\) also commutes with the complementary projector \(\mathbb I-\Pi^-_{A\cup{r}}\).  Hence, after an unsuccessful growth attempt, we simply retain the original \(k\) qudits as \(A\), while the selected qudit \(r\) remains in \(B\setminus A\). In this way, the qudits in \(A\) are fully antisymmetric after every round. We provide an explicit pseudocode for this subroutine in Algorithm~\ref{alg:growth}.\\

Since the Young diagram \(\lambda\) is not known in advance, the subroutine does not know when \(A\) has reached the size of the largest antisymmetric subsystem supported on \(B\). We therefore introduce a stopping criterion to certify, with high probability, that \(A\) cannot be grown further. The growth subroutine terminates when either of the following conditions is satisfied:
\begin{enumerate}
\item If \(|A|=k\), the subroutine has performed \(R(k)\) consecutive rounds without increasing the size of \(A\).

\item The size of \(A\) has reached the known upper bound \(k_{\max}\), where
\(k_{\max}\) is the size of the antisymmetric subsystem extracted in the
previous invocation of the growth subroutine. For the first invocation, we
set \(k_{\max}=d\).

\end{enumerate}

Here \(R(k)\) is a certification threshold chosen so that, whenever a larger antisymmetric subsystem exists, the probability of observing \(R(k)\) consecutive rounds without growth is sufficiently small. If the growth subroutine terminates after \(R(k)\) consecutive unsuccessful rounds even though the current antisymmetric subsystem \(A\) could still be grown, this premature termination is not detected by the algorithm. The convergence analysis below bounds the probability of this event.
The second stopping condition follows from the fact that the column lengths of a Young diagram are nonincreasing: conditioned on the correctness of the previous extraction, the next antisymmetric subsystem cannot be larger than the preceding one. For the first invocation, the trivial upper bound \(d\) replaces the size of the previous subsystem. We determine the threshold \(R(k)\) and analyze the probability of premature termination in the following sections. When either stopping condition is met, the qudits in \(A\) are shifted to $C$.

  Simultaneously, we record the size of each extracted subsystem in a list \(L\). By Lemma~\ref{Thm:largest antisymm}, these sizes correspond to the column lengths of the Young diagram associated with the current isotypic component. We state this subroutine explicitly in algorithmic form in Algorithm ~\ref{alg:growth}. 
\\

 The full algorithm proceeds by repeatedly invoking this subroutine.  Each invocation of the growth subroutine returns, with high probability, the largest antisymmetric subsystem supported on the current state of \(B\). The algorithm terminates when $B$ is empty. The successive subsystem sizes recorded in \(L\) are the column lengths of \(\lambda\), while the extracted state prepares the multiplicity subsystem in \(\ket{\eta_\lambda}\).
 From Lemma~\ref{Thm:largest antisymm}, the resulting state is the output of unitary Schur sampling, as described in Eq.~\eqref{Schur}. We  illustrate the algorithm in Fig.~\ref{fig:algorithm2}.\\

\begin{algorithm}[H]
\caption{Growth Subroutine}
\label{alg:growth}
\begin{algorithmic}[1]
\Require Qudit sets \(A\subseteq B\), extracted subsystems \(C\), list \(L\),
and upper bound \(k_{\max}\)
\Ensure Maximal antisymmetric subsystem \(A\), with high probability

\State Set \(f\gets 0\).

\While{$A\neq B$ and $|A|<k_{\max}$ and $f<R(|A|)$}

    \State Choose \(r\in B\setminus A\) uniformly at random and perform
    \[
    \{\Pi^-_{A\cup\{r\}},\,\mathbb I-\Pi^-_{A\cup\{r\}}\}.
    \]

    \If{an antisymmetric outcome is obtained}
        \State Add \(r\) to \(A\).
        \State Set \(f\gets 0\).
    \Else
        \State Set \(f\gets f+1\).
    \EndIf

\EndWhile

\State Move \(A\) from \(B\) to \(C\).
\State Append \(|A|\) to \(L\).

\end{algorithmic}
\end{algorithm}

\begin{figure*}[t]
\centering
\includegraphics[width=10cm]{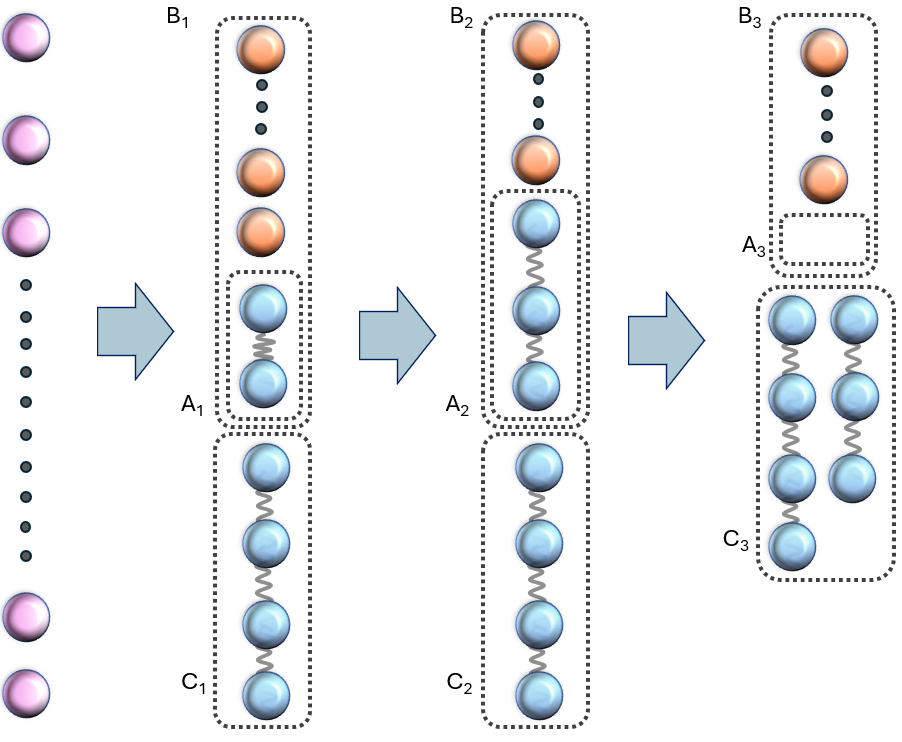}
\caption{\textbf{Illustration of Algorithm~2.}
Throughout the protocol, the qudits are partitioned into three registers
\(A_t\), \(B_t\), and \(C_t\), where \(t\) denotes the number of measurement
rounds performed so far. The register \(A_t\) contains the current candidate
antisymmetric subsystem, \(B_t\) contains the remaining qudits that have not
yet been assigned to an antisymmetric subsystem, and \(C_t\) stores the
antisymmetric subsystems extracted during previous invocations of the growth
subroutine. At each round, a qudit selected uniformly at random from
\(B_t\setminus A_t\) is tested together with the qudits in \(A_t\) using an
antisymmetric measurement. Upon a successful projection onto the
antisymmetric subspace, the selected qudit is incorporated into \(A_t\),
thereby growing the current antisymmetric subsystem. Once the stopping
criterion is satisfied, the subsystem \(A_t\) is transferred to \(C_t\), and
the procedure is repeated on the remaining qudits in \(B_t\). The sizes of
the extracted antisymmetric subsystems determine the column lengths of the
Young diagram associated with the input state.}
\label{fig:algorithm2}
\end{figure*}

Algorithm~2 performs unitary Schur sampling with high probability and admits the following bound on the number of antisymmetric measurements.

\begin{theorem}\label{Thm:Algo2performance}
For every \(0<\epsilon<1\), with probability at least
\(1-\epsilon\), Algorithm~2 correctly performs unitary Schur
sampling using at most
\[
T_{\max}
=
\left\lceil6nd^2\ln\!\left(\frac{2en}{\epsilon}\right) \right\rceil
\]
antisymmetric measurements.
More precisely, let \(T\) denote the total number of
antisymmetric measurements performed by Algorithm~2, and let
\(\mathcal F\) denote the event that the algorithm fails to
perform unitary Schur sampling correctly. Then
\[
\Pr\!\left(\{T>T_{\max}\}\cup\mathcal F\right)
\le \epsilon.
\]
\end{theorem}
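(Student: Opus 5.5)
By the representative-state reduction of Subsection~\ref{subsec:reducingtorepstates}, I would work throughout with the input \(\mathcal T(|\phi_\lambda\rangle\langle\phi_\lambda|)\). Every antisymmetric measurement used by Algorithm~2 is \(\mathrm{SU}(d)\)-invariant, so the distribution of outcomes and of \(T\) is the same for every state in the \(\lambda\)-isotypic component.

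The first step is to track the state along the growth process. The working hypothesis is that after the initial twirl, whenever \(A\) holds \(k\) qudits and the current remaining diagram is \(\mu\), the state on \(B\) is
\[
\ket{0\wedge\cdots\wedge(k-1)}_A\otimes(\text{twirled rest}),
\]
that is, it looks like \(\mathcal T_B(|\phi_\mu\rangle\langle\phi_\mu|)\) conditioned on \(A\) being a subset of the first permuted column block. Stated as a lemma: conditioned on the history, the state is \(\mathcal T_{B}\) of \(|\phi_\mu\rangle\langle\phi_\mu|\) projected by \(\Pi^-_A\). Failed attempts are harmless because \(\Pi^-_A\) commutes with \(\Pi^-_{A\cup r}\) (Lemma~\ref{Lemma:nestedcommutator}).

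A successful growth from \(k<\mu'_1\) to \(k+1\) occurs with probability at least \((\mu'_1-k)/|B\setminus A|\ge (\mu'_1-k)/n\). This follows from the same random-permutation argument used for Algorithm~1: the unused column qudits number \(\mu'_1-k\) among the \(|B|-k\) candidates, and symmetrizing over \(S_{|B|}\) conditioned on \(A\) gives this as a lower bound. By Lemma~\ref{Thm:largest antisymm}, growth beyond \(\mu'_1\) has probability zero. When \(A\) reaches \(\mu'_1\) and is moved to \(C\), Lemma~\ref{Lemma:removingcolumns} shows that \(C\) holds \(\ket{\Lambda}\) and \(B\) holds \(\mathcal T(|\phi_{\mu\setminus\mu'_1}\rangle\langle\phi_{\mu\setminus\mu'_1}|)\). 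Induction over the columns then shows that, absent premature stopping, the output is \(\ket{\phi_\lambda}\) with \(L=\lambda'\).

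The failure event \(\mathcal F\) is contained in the union of premature stops. Premature stopping at size \(k<\mu'_1\) requires \(R(k)\) consecutive failures, each of probability at most \(1-1/n\). This event therefore has probability at most \(e^{-R(k)/n}\). The number of growth steps across the whole algorithm is at most \(n\), so choosing \(R(k)\approx n\ln(2n/\epsilon)\) and applying a union bound gives \(\Pr(\mathcal F)\le\epsilon/2\). A sharper \(k\)-dependent choice, for example \(R(k)=\lceil n\ln(2n/\epsilon)/(\mu'_1-k)\rceil\), cannot be used because \(\mu'_1\) is unknown. I would therefore take \(R\) uniform, or dependent on \(d-k\).

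The main obstacle is bounding \(T\) with only the \(d^2\) factor and no extra factor of \(n\). Here is the count I would try. The terminal wait of \(R(k)\) rounds is paid once per column, and there are at most \(n/\mu'_j\) columns of each height. The successful growth steps cost geometric waiting times with mean at most \(n/(\mu'_1-k)\), whose sum over a column is about \(n H_{\mu'_1}\) and which tail-concentrate.

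Crude versions of this count give \(n^2\). To recover the stated \(T_{\max}=O(nd^2\ln(n/\epsilon))\), I would use the second stopping rule \(|A|=k_{\max}\), which eliminates the terminal wait for every column equal in height to the previous one. The terminal wait then occurs at most \(d\) times in total, once per distinct column height, contributing about \(d\cdot R\).

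I would also refine the success probability, replacing \(1/n\) by \((\mu'_1-k)\,\mu_1/|B|\). This reflects that each of the \(\mu_1\) columns offers candidates, since by twirl symmetry \(A\) can grow along any column of height greater than \(k\). This refinement lets \(R(k)\sim d\ln(n/\epsilon)\cdot|B|/\mu_1\), and a Chernoff/negative-binomial tail bound on the total sum of geometric variables then caps \(T\) at \(6nd^2\ln(2en/\epsilon)\) except with probability \(\epsilon/2\). A final union bound over \(\mathcal F\) and \(\{T>T_{\max}\}\) gives the claim. Verifying this refined growth-probability bound under the conditioned twirled state is the step I expect to be hardest.
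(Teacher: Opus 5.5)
There is a genuine gap, and it is already present in the basic failure bound, not only in the refinement you flagged. Your state-tracking hypothesis is false. That hypothesis says that, conditioned on the history, the state of $B$ is (up to normalization) $\Pi^-_A\,\mathcal T_B(|\phi_\mu\rangle\langle\phi_\mu|)\,\Pi^-_A$. Commutation of $\Pi^-_A$ with $\Pi^-_{A\cup\{r\}}$ only shows that $A$ remains antisymmetric after a failed attempt. The outcome $\mathbb I-\Pi^-_{A\cup\{r\}}$ still changes the multiplicity state, and the random choice of the next $r$ re-twirls only $B\setminus A$, not $B$. The growth probability is therefore history dependent and can drop below both $(\mu'_1-k)/|B\setminus A|$ and $1/n$.

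For example, take $\mu=(m,1)$, $n=m+1$ and $k=1$. The first attempt succeeds with probability $1/m$. After one failed SWAP test, the next attempt succeeds with probability $3/(4m)$, which is smaller than $1/n$ once $m\ge 4$. So the claim that each round fails with probability at most $1-1/n$ is wrong, and $R\approx n\ln(2n/\epsilon)$ does not give the $\epsilon/(2n)$ per-step bound.

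What you need is a lower bound valid for \emph{every} state of $B$ in the $\mu$-isotypic component whose first $k$ qudits are antisymmetric. The paper obtains it in three steps. It writes $M_r=\frac{1}{k+1}\bigl(\mathbb I-\sum_{i\le k}P_{(i,r)}\bigr)\Pi^-_A$. It observes that averaging over $r$ is the same as twirling $B\setminus A$. It then bounds $\tr(P_{(k,k+1)}\sigma)$ using transposition characters of $\mu$ and of the Pieri-admissible diagrams for $B\setminus A$. This yields $\sum_{i>k}\mu_i/\bigl((n_B-k)(k+1)\bigr)\ge 1/(n(k+1))$. In the example this equals $1/(2m)$, the value on admissible states whose $B\setminus A$ part lies in the $(m-1,1)$ sector. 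It is why Algorithm~2 uses the threshold $R(k)=n(k+1)\ln(2n/\epsilon)$. For the same reason, your column-by-column induction should rest on Lemma~\ref{Thm:largest antisymm}, which covers arbitrary states in the isotypic component, rather than on the twirled representative form.

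The counting step has the same problem. The refined bound $(\mu'_1-k)\mu_1/|B|$ is false for non-rectangular diagrams. For the hook $\mu=(m,1^k)$ with $|A|=k$ it predicts $m/(m+k)$, which tends to $1$ as $m$ grows. Yet only one box lies below row $k$: the growth probability is at most $(m+k)/\bigl(m(k+1)\bigr)\le (m+1)/(2m)$ for every admissible state, and equals $1/m$ for the twirled state you assume. In addition, a threshold depending on the unknown $\mu_1$ cannot be implemented, and the theorem concerns the fixed $R(k)$.

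The correct quantity is the number of boxes strictly below row $k$ of the remaining diagram, $T_\lambda(k,l)=\sum_{i>k}(\lambda_i-l+1)_+\ge\lambda_{k+1}-l+1$. Combined with the history-uniform bound, this gives $\mathbb E[Z_{(k,l)}]\le nd/(\lambda_{k+1}-l+1)$. That is a harmonic sum along each row, so the total over the at most $d$ rows is $nd^2\ln(en)$. Your use of the $k_{\max}$ rule to limit certification to at most $d$ stages does match the paper; it contributes $\sum_{r\le d}R(r)\le n(d^2+1)\ln(2n/\epsilon)$. Finally, the waiting times are dependent. You must first dominate them, conditionally on every admissible history, by independent geometric variables (Theorem~\ref{Thm:Domination}) before a Chernoff- or Janson-type tail bound applies.
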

 We discuss the proof for this theorem in section~\ref{sec:AnalysisofProtocol}.

\subsection{Reversibility of the Algorithms} \label{subsec: reversibility}

Both algorithms are implemented using sequences of antisymmetric
projective measurements, whose projectors commute with the
collective \(\mathrm{SU}(d)\) action. More generally, any projective
measurement with this symmetry preserves the expectation value
of \(U^{\otimes n}\), for every single-qudit unitary \(U\), when
averaged over the measurement outcomes:
\begin{align}\label{Eqn:innerproduct}
    \tr\!\left(U^{\otimes n}\sigma\right)
    =
    \tr\!\left(U^{\otimes n}\mathcal{E}(\sigma)\right),
\end{align}
where \(\mathcal{E}(\sigma)\) denotes the post-measurement state
averaged over all outcomes. This conservation property also
holds for a sequence of such measurements \cite{qubitSWAPUSS}.

Further, for PI states restricted to a single isotypic component, this result holds even when the measurement outcome is not discarded. This immediately implies that such projective measurements fully preserve PI information, because, according to Schur-Weyl duality, any PI state can be written as a linear combination of $U^{\otimes n}$ for $U \in \mathrm{SU}(d)$. 
Therefore, if we simply perform a permutation twirl after any projective measurement that respects $\mathrm{SU}(d)$ symmetry, the original PI state is restored (provided it lived in a single isotypic component). 

In the following, we present this argument and its implications more formally. We note that a similar property is satisfied by the Schur Sampling protocol, 
$\mathcal{E}_{\text{Schur}}$. Its Kraus operators $M_{\lambda} \otimes |\lambda\rangle$ commute with $U^{\otimes n}$, a property sometimes referred to as the strong symmetry of the channel, which in turn implies that it preserves inner product in Eq.~\eqref{Eqn:innerproduct}  (See Appendix \ref{AppAsubsec:Reversibility} for further discussion).\\

Consider an arbitrary $n$-qudit PI state restricted to a single isotypic component $\lambda$, i.e.,
\be\label{input}
\sigma_{\lambda}=\omega_\lambda\otimes\frac {\mathbb{I}_{m(n,\lambda)}}{m(n,\lambda)}=\mathcal{T}\big(\omega_\lambda\otimes \ket{\eta_{\lambda}}\bra{\eta_{\lambda}}\big)\ .
\ee
Here, the tensor product is with respect to the component $\mathcal{Q}^d_{\lambda} \otimes \mathcal{P}_{\lambda}$ in the decomposition of Eq.(\ref{dec}).

Thanks to the aforementioned $\mathrm{SU}(d)$ symmetry of the measurement projectors, the support of the output 
$\mathcal{E}(\sigma_{\lambda})$
is also restricted to the  subspace  $\mathcal{Q}^d_{\lambda} \otimes \mathcal{P}_{\lambda}$. Furthermore, since the measurement projectors act trivially on $\mathcal{Q}^d_{\lambda}$, for each possible outcome of the antisymmetric measurements, the post-measurement state is still of the form $\omega_{\lambda} \otimes \beta$ with respect to this decomposition, where $\beta$ depends on the outcomes. In other words, the reduced state in each irrep of $\mathrm{SU}(d)$ remains fully protected.\\

For instance, in Algorithm~2, The state, $\sigma_{BC}(T)$, of all qudits distributed across systems $B$ and $C$ (recall that $A\subseteq B$) after \(T\) rounds of measurement, is also restricted to the sector $\lambda$ of the $n$ qudits. In other words, the original input state $\sigma_{\lambda}$ can be fully recovered from the state we have at any stage of the algorithm by uniting the qudits in systems $B$ and $C$, and applying a random permutation,
\begin{align}\label{mixed}
  \mathcal{T}\left(\sigma_{BC}(T)\right)=\sigma_{\lambda}\ .
 \end{align}
 This can be seen by noting that, under random permutations, the state of the subsystem $\mathcal{P}_{\lambda}$ is mapped back to the maximally mixed state.  

For a general PI state $\tau$ that is not restricted to a single isotypic component  $\lambda$, a weaker notion of reversibility holds. Specifically, if we discard the classical register \(L\) that records the measurement outcomes, then apply a random permutation to the $n$ qudits, we recover the original PI state $\tau$, i.e.,
\be\label{rev4}
\mathcal{T}\circ\Tr_{\text{reg}}\circ\mathcal{E}(\tau)=\tau\ ,
\ee
which follows from the $\mathrm{SU}(d)$ symmetry of antisymmetric measurements, as shown in Appendix~\ref{AppAsubsec:Reversibility} (See also \cite{qubitSWAPUSS}). \color{black}
We emphasize that the above arguments rely solely on $\mathrm{SU}(d)$ symmetry and Schur-Weyl duality, and therefore admit broad generalizations. In particular, the reversibility holds for any sequence of antisymmetric measurements or SWAP tests on qudits.

\section{Analyzing the Protocol}\label{sec:AnalysisofProtocol}

In this section, we establish the correctness of Algorithm~2 and prove the performance guarantee stated in Theorem~\ref{Thm:Algo2performance}, bounding the number of antisymmetric measurements required for unitary Schur sampling. The key idea is that the subroutine finds the largest antisymmetric subsystem amongst the remaining qudits in $B$ with high probability. In the following subsection \ref{subsec:probgrowth}, we show that whenever the subsystem \(A\) has fewer qudits than the
largest antisymmetric subsystem, the probability of extending \(A\) by one
qudit is strictly positive; this is captured by Eq. \eqref{Eqn:probgrowth}. Equally important, once the subsystem reaches this size, the probability of
further growth vanishes. Thus the subsystem can continue to grow until it
reaches the size of the first column of \(\lambda \), as shown in Lemma~\ref{Thm:largest antisymm}.\\

To make this statement quantitative, suppose that the current
antisymmetric subsystem \(A\) has size \(k\) and can still be extended.
As shown in Subsection~\ref{subsec:convanalysis} on convergence analysis, after
\be \label{Eqn:Algo2cap}
R(k)=n(k+1)\ln\!\left(\frac{2n}{\epsilon}\right)
\ee
growth attempts, \(A\) increases in size from \(k\) to \(k+1\) with
probability at least \(1-\epsilon/(2n)\). Consequently, whenever the current antisymmetric subsystem \(A\) can
still be enlarged, the probability that it fails to grow before the
stopping threshold is at most \(\epsilon/(2n)\). Each successful growth
step adds one new qudit to an antisymmetric subsystem, so there can be
at most \(n\) such growth steps over the entire algorithm. A union bound
therefore implies that all required growth steps succeed before their
respective stopping thresholds with probability at least
\(1-\epsilon/2\).
When the stopping criterion is met, the qudits in \(A\) are moved to
\(C\), and the same procedure is applied recursively to the remaining
qudits. On the event above, each such subsystem \(A\) is the largest
antisymmetric subsystem supported on the remaining qudits. Hence, the
algorithm performs unitary Schur sampling with probability at least
\(1-\epsilon/2\). The detailed argument is given in
Appendix~\ref{App:convergence2}.\\

These observations admit a natural interpretation in terms of the Young
diagram. The algorithm may be viewed as a Young-diagram traversal,
proceeding down columns from top to bottom and from left to right, and we discuss this in depth in subsection \ref{subsec:YDtraversal}. A round of measurement that has an antisymmetric outcome corresponds to moving one step downward in the
current column, while reaching the bottom of a column corresponds to the subsystem having reached it maximum size. This picture (as illustrated in Fig.~\ref{Fig:YDtraversal}  ) provides a simple language to analyze the resulting stochastic process of Algorithm~2. \\

\begin{figure}[t]
\centering
\begin{tikzpicture}[scale=0.6,>=Stealth]

\def\s{0.8}

\foreach \y in {0,...,3}{
    \draw (0,-\y*\s) rectangle ++(\s,-\s);
}
\foreach \y in {0,...,2}{
    \draw (1*\s,-\y*\s) rectangle ++(\s,-\s);
}
\foreach \y in {0,...,1}{
    \draw (2*\s,-\y*\s) rectangle ++(\s,-\s);
}

\draw[->] (.5*\s,-.75*\s) -- (.5*\s,-1.25*\s);
\draw[->] (.5*\s,-1.75*\s) -- (.5*\s,-2.25*\s);
\draw[->] (.5*\s,-2.75*\s) -- (.5*\s,-3.25*\s);

\draw[->] (1.5*\s,-.75*\s) -- (1.5*\s,-1.25*\s);
\draw[->] (1.5*\s,-1.75*\s) -- (1.5*\s,-2.25*\s);

\draw[->] (2.5*\s,-.75*\s) -- (2.5*\s,-1.25*\s);

\end{tikzpicture}

\caption{
The Young-diagram traversal process associated with Algorithm~2. Each successful antisymmetric measurement extends the current antisymmetric subsystem by one qudit, corresponding to a transition to the box immediately below in the current column. Once the bottom of a column is reached, the algorithm certifies that the subsystem is maximal and continues from the top of the next column. Consequently, unitary Schur sampling may be viewed as a stochastic traversal of the Young diagram.
}
\label{Fig:YDtraversal}
\end{figure}

The above argument already establishes that the algorithm performs
unitary Schur sampling with probability at least
\(1-\frac{\epsilon}{2}\), using at most
\(
n^2d\ln\!\left(2n\epsilon^{-1}\right)
\)
antisymmetric measurements. In the following section, we obtain a
substantially sharper analysis by interpreting the algorithm as a
stochastic traversal of the Young diagram. This refined analysis captures the fact that, away from the bottom of a column, the traversal typically advances to the next box before exhausting the full cutoff \(R(k)\), and the additional structure imposed by the nonincreasing column heights of the Young diagram. In subsection \ref{subsec:convanalysis}, we show that the algorithm,  with probability at least $1-\epsilon$, performs unitary Schur sampling within
\(
6nd^2\ln\!\left(2en\epsilon^{-1}\right)
\)
 antisymmetric measurements.

\subsection{Probability of Growing the Antisymmetric subsystem} \label{subsec:probgrowth}
We now lower bound  the probability that one round of the subroutine in Algorithm~\ref{alg:growth} extends the current antisymmetric state in subsystem $A \subseteq B$. Let the system \(B\) be composed of \(n_{B}\) qudits and its state, $\sigma_B$,  supported on the \(\mu\)-isotypic component. Suppose we are at a stage where subsystem $A$ has $k$ qudits labelled $1,\cdots,k$ in an antisymmetric subspace, and we pick a random qudit $r$ from the $n_B-k$ qudits. We will now perform the antisymmetric measurement $\{ M_r, \mathbb{I}-M_r \}$, where $M_r =\Pi^-_{1,\cdots,k,r}$. Our objective is to find a lower bound for the probability that the $r$-th qudit joins the antisymmetric subsystem $A$, i.e.,\\
\begin{align}
     \mathbb{E}_r\tr{(M_r \sigma_B)}= \frac{1}{n_B-k}\sum_{r=k+1}^{n_B}\tr{(M_r \sigma_B)}.
\end{align}
We simplify this expression using the following identity regarding antisymmetric subspace projectors (see Appendix \ref{App:Concepts} for the proof): 
\begin{align}
    &M_r = \frac{1}{k+1}\Pi^-_{1,\cdots,k}\bigg(I - \sum_{i=1}^k P_{(i,r)}\bigg)\\
    &= \frac{1}{k+1}\bigg(I - \sum_{i=1}^k P_{(i,r)}\bigg)\Pi^-_{1,\cdots,k}.
\end{align}
Since the  $k$ qudits in $A \subset B$ are in the antisymmetric subspace, we have $\sigma_B = \Pi^-_{1,\cdots,k}\sigma_B\Pi^-_{1,\cdots,k}$. Then 
\begin{align}
\mathbb{E}_r \tr\!\left(M_r \sigma_B\right)
=
\frac{1}{k+1}
-
\frac{k}{k+1}
\tr\!\left(P_{(k,k+1)}\sigma_B\right).
\end{align}
By the antisymmetry of \(A\) and permutation invariance of \(B\setminus A\), all transpositions exchanging one qudit in \(A\) with one qudit in \(B\setminus A\) have the same expectation value.
Thus we need to find an upper bound for $\tr\!\left(P_{(k,k+1)}\sigma_B \right)$. We now provide a sketch of the proof and leave the details to
Appendix~\ref{App:probabgrowing singlets}. Recall that the qudit \(r\) is chosen uniformly at random
from the subsystem \(B \setminus A={k+1,\dots,n_B}\). Equivalently, we may view this
procedure as first applying a permutation twirl to \(B\setminus A\) and then selecting a
fixed qudit. Consequently, the reduced state on \(B\setminus A\) is permutation
invariant and therefore block-diagonal with respect to the Schur-Weyl
decomposition.

The reduced state on \(B\setminus A\) is generally not supported on a single isotypic
component. Rather, it admits a decomposition as a convex combination of states
supported on the isotypic components of \((\mathbb{C}^d)^{\otimes(n_B-k)}\),
\begin{align}
\sigma_{B\setminus A}
=
\sum_{\nu} p_{\nu}\sigma_{\nu},
\end{align}
where each \(\sigma_{\nu}\) is supported entirely on the
\(\nu\)-isotypic component. The Young diagrams \(\nu\) for which
\(p_{\nu}>0\) are not arbitrary. Since the \(k\) qudits in $A$ are constrained
to lie in the fully antisymmetric subspace and the full system $B$ is supported on
the \(\mu\)-isotypic component, the admissible diagrams \(\nu\) are
restricted by the Pieri's branching rule \ref{subsection:Pieri}. 
More explicitly, the admissible Young diagrams $\nu$ are precisely those that
can be obtained from $\mu$ by removing $k$ boxes, no two from the same
row, such that the remaining boxes still form a Young diagram.

Next, we exploit the symmetry of the state $\sigma_B$. The permutation symmetries of \(A\) and $B\setminus A $ divide the transpositions into three classes:

\begin{itemize}
\item If \(a,b\le k\), then the corresponding qudits lie entirely in the
antisymmetric subsystem, therefore,
\begin{align}
\tr(P_{(a,b)}\sigma_B)=-1.
\end{align}

\item If \(a,b>k\), then both qudits lie entirely within the subsystem
  \(B\setminus A\). Since \(\sigma_{B\setminus A}\) is permutation invariant and all transpositions form a single conjugacy class of \(\mathbb S_{n_B-k}\), their expectation values are equal, i.e.,
\begin{equation}
  \tr(P_{(a,b)}\sigma_B) = \sum_{\nu}p_{\nu}\chi_{\nu}((1,2)), \quad \forall a,b > k,
\end{equation}
where \(\chi_{\nu}\) denotes the irreducible character of \(\mathbb S_{n_B-k}\).
Although we do not determine this expression exactly, it is constrained by
the set of admissible Young diagrams (permitted by Pieri's rule) that may appear in the Schur-Weyl
decomposition of \(B\setminus A\). Consequently, we can bound these transposition
expectations by optimizing over the corresponding admissible irreducible
representations. This amounts to optimizing the character of the transposition conjugacy class over all admissible Young diagrams.

\item Finally, the total sum over all transpositions,
\begin{align}
\sum_{a<b}\tr(P_{(a,b)}\sigma_{B}) \propto \chi_{\mu}((1,2))
\end{align}
is fixed by the isotypic component \(\mu\) in which \(\sigma\) resides. In
particular, this quantity is determined by the character of the
transposition conjugacy class in the irrep \(\mu\).
\end{itemize}

The above observations allow us to upper bound \(\tr\!\left(P_{(k,k+1)}\sigma_B \right)\), which leads to the following lower bound for the probability of extending the antisymmetric subsystem. 
\begin{align}
\mathbb{E}_r \tr(M_r\sigma _B)
\ge
\frac{
\sum_{i=k+1}^{d}\mu_i
}{
(n_{B}-k)(k+1)
}. \label{Eqn:probgrowth}
\end{align}
 The proof is given in
Appendix~\ref{App:probabgrowing singlets}.\\

\subsection{Young-Diagram Traversal Process}\label{subsec:YDtraversal}
 The
lower bound above may therefore be interpreted directly in terms of the
geometry of the original Young diagram \(\lambda\). Recall that each antisymmetric subsystem identified by the protocol
corresponds to a column of the Young diagram. Consequently, if the antisymmetric subsystems corresponding to the first
\(l-1\) columns have already been separated and the current antisymmetric
subsystem contains \(k\) qudits, then the state of the algorithm may be
identified with the box \((k,l)\) of the Young diagram. In this picture,
extending the antisymmetric subsystem by one qudit corresponds to moving from
\((k,l)\) to \((k+1,l)\), while separating a completed subsystem corresponds
to moving to the top of the next column (See Fig. \ref{fig:young-traversal}).\\

For a box \(b=(k,l)\in \lambda\), we define two associated quantities. First, define 
\begin{align}
T_{\lambda}(l)
:=
\sum_{i=1}^{d}(\lambda_i-l+1)_+\ ,
\end{align}
where
\(
(x)_+ := \max\{x,0\}
\).
 The quantity $T_{\lambda}(l)$ counts the number of boxes remaining after removing the first \(l-1\)
columns of the Young diagram. Next, define
\begin{align}
T_{\lambda}(k,l)
:=
\sum_{i=k+1}^{d}(\lambda_i-l+1)_+\ .
\end{align}
Equivalently, \(T_{\lambda}(k,l)\) is the size of the subdiagram lying strictly below row \(k\) after removing the first \(l-1\) columns.\\

\begin{figure}[ht]
\centering

\includegraphics[width=0.95\linewidth]{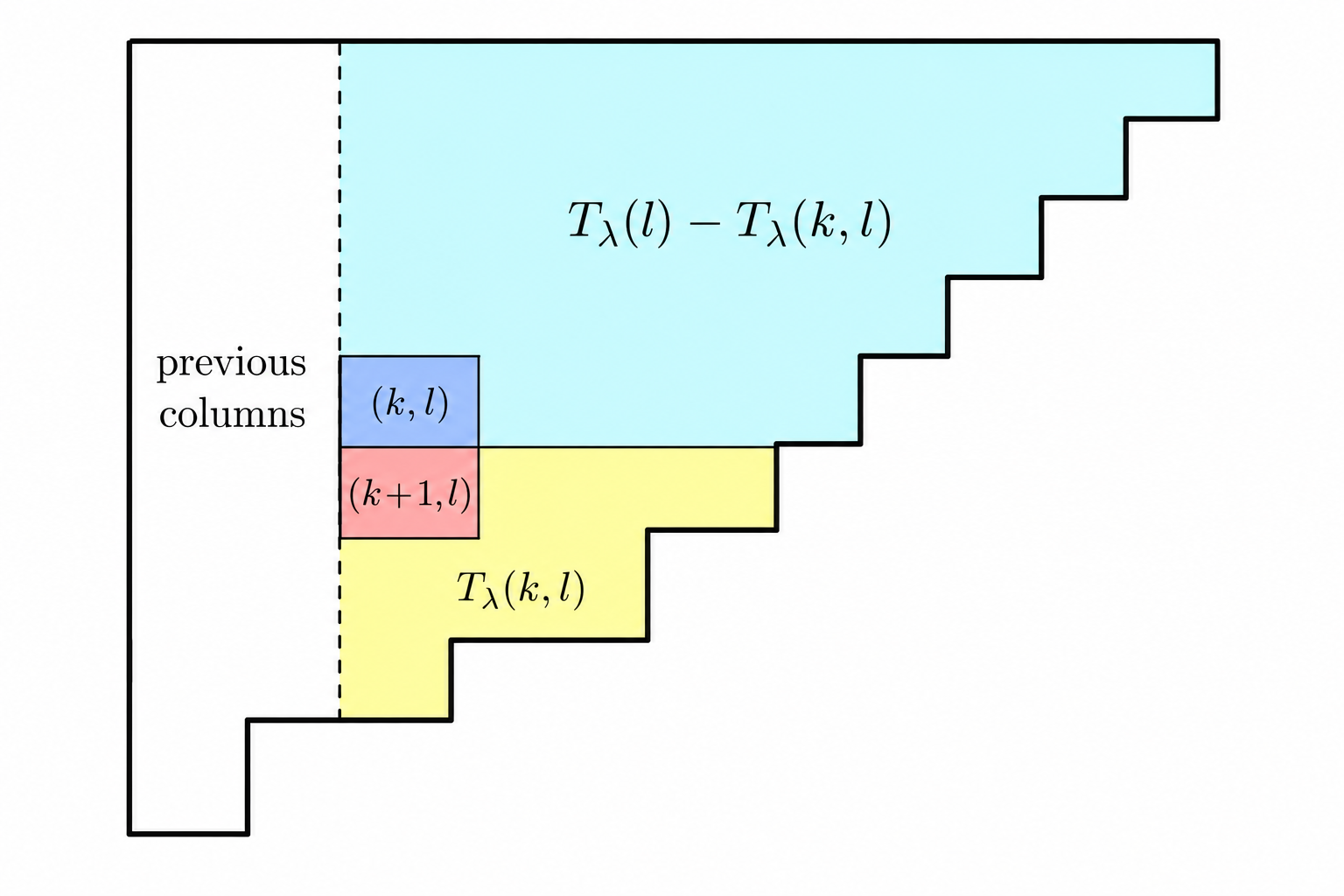}
\caption{ Illustration of the Young-diagram traversal process. The white region corresponds to columns that have already been separated by previous invocations of the growth subroutine. The blue square denotes the current state \((k,l)\), while the red square represents the candidate next state \((k+1,l)\). The yellow region (including $(k+1,l)$) contains the \(T_{\lambda}(k,l)\) boxes contributing to the probability of extending the current antisymmetric subsystem, whereas the cyan region (including $(k,l)$) contains the remaining \(T_{\lambda}(l)-T_{\lambda}(k,l)\) boxes of the current Young diagram. A successful antisymmetric measurement corresponds to the transition \((k,l)\rightarrow(k+1,l)\).}
\label{fig:young-traversal}
\end{figure}

Expressed in this language, the bound in Eq.~\eqref{Eqn:probgrowth} becomes 
\begin{align}
\Pr\!\left((k+1,l)\mid(k,l)\right)
\ge p_{(k,l)},
\label{Eq:probYDstep}
\end{align}
where
\begin{align}
p_{(k,l)} \coloneqq
\frac{
T_{\lambda}(k,l)
}{
\bigl(T_{\lambda}(l)-k\bigr)(k+1)
}\ .
\end{align}
Here \(\Pr\!\left((k+1,l)\mid(k,l)\right)\) is shorthand for the conditional
probability that the Young-diagram traversal moves from \((k,l)\) to
\((k+1,l)\), for any admissible measurement history that reaches the box
\((k,l)\). If \((k+1,l)\in\lambda\), then \(T_{\lambda}(k,l)>0\), so the probability of extending the antisymmetric subsystem is strictly positive. Moreover, since there is at least one box below \((k,l)\), we obtain the simpler bound:
\begin{align}\label{Eqn:growthsimplebound} 
p_{(k,l)}
\ge
\frac{1}{n(k+1)}.
\end{align}
This lower bound motivates the cap on the number of growth attempts at row \(k\) introduced in Eq.~\eqref{Eqn:Algo2cap}.

Conversely, once the bottom of a column is reached,
\(T_{\lambda}(k,l)=0\), and the lower bound vanishes. Thus, provided the algorithm is permitted to make sufficient attempts at each box, with high probability, the algorithm continues descending the current column till it reaches the bottom. This is consistent with Lemma~\ref{Thm:largest antisymm}, which implies that the current subsystem \(A\) is already a largest antisymmetric subsystem of the remaining state and cannot be extended further. \\

The preceding observations reduce the analysis of
Algorithm~2 to the study of a stochastic process on the Young diagram
\(\lambda\).  The box \((k,l)\) records only the current position of the traversal; the conditional quantum state may also depend on the preceding measurement history. However, conditioned on any admissible history leading to \((k,l)\), the next growth probability obeys the lower bound in Eq.~\eqref{Eq:probYDstep}. If the traversal reaches all the boxes in the Young diagram, then the algorithm would have performed unitary Schur sampling. The transition probabilities are controlled by the
local geometry of the diagram through Eq.~(\ref{Eq:probYDstep}). Consequently, the
convergence rate of the protocol may be analyzed entirely through the
properties of this Young-diagram traversal process.

\subsection{Convergence Analysis} \label{subsec:convanalysis}
The Young-diagram traversal defines a stochastic process in which the algorithm attempts to move from one box to the next. Since the Young diagram is not known in advance, repeated unsuccessful attempts do not determine with certainty whether the traversal has reached the bottom of the current column, or whether it has simply not yet moved to the next box. Consequently, the algorithm requires a stopping criterion that both limits the runtime and certifies, with high probability, that every box is reached. The threshold is chosen so that, whenever another box exists below \((k,l)\), the traversal reaches it within \(R(k)\) rounds (see Eq.~\ref{Eqn:Algo2cap}) with probability at least \(1-\frac{\epsilon}{2n}\), which follows from Eq.~\eqref{Eqn:growthsimplebound}. If the traversal has not advanced after \(R(k)\) rounds, the algorithm treats the current box as the bottom of its column and proceeds to the next column.\\

In Appendix~\ref{App:Number of rounds}, we prove that the algorithm successfully traverses the Young Diagram, i.e., performs unitary Schur sampling, in at most \(6nd^2
\ln\!\left(2en/\epsilon\right)\) rounds of antisymmetric measurements, with probability at least $1-\epsilon$.
Here, we present a proof sketch of the convergence analysis. 
 We associate a random variable \(X_{(k,l)}\) with every box \((k,l)\) of the Young diagram. The variable \(X_{(k,l)}\) denotes the number of rounds that would be required for the traversal to move from \((k,l)\) to \((k+1,l)\) if no stopping threshold were imposed.

We let \(Y_{(k,l)}\) denote the actual number of rounds performed by Algorithm~2 while at the box \((k,l)\). For a non-bottom box,
\begin{align}
Y_{(k,l)}
=
\min\{X_{(k,l)},R(k)\}.
\end{align}
For a bottom box, the probability of further growth is zero. Hence,
\(Y_{(k,l)}=R(k)\) if a certification stage is required, and
\(Y_{(k,l)}=0\) if the current column has already reached its known upper bound, i.e., the length of the previous column.\\

The total number of rounds of measurements is simply
\begin{align}
T
=
\sum_{(k,l)\in\lambda}
Y_{(k,l)}\ .
\end{align}
From Eq.~(\ref{Eq:probYDstep}),  the probability that a round of measurement transitions from $(k,l)$ to $(k+1,l)$ is  at least $p_{(k,l)}$.

Assuming the algorithm has correctly reached $(k,l)$, i.e., has been able to detect all existing antisymmetric subsystems, the probability that after $t$ attempts (antisymmetric measurements) it has not reached the next box $(k+1,l)$ is bounded by 
\begin{align}
\Pr\!\left(
X_{(k,l)}>t
\;\middle|\;
\text{for any admissible path}
\right)
\le
(1-p_{(k,l)})^t.
\end{align}
More formally, let \(\mathcal{X}\) denote the set of fillings of the Young diagram \(\lambda\) with positive integers. For any \(x\in\mathcal{X}\), we denote the entry corresponding to the box \((r,s)\) by \(x_{(r,s)}\). We order the boxes in column-major order, consistent with the traversal of Algorithm~2, and denote by \(x_{<(k,l)}\) the restriction of \(x\) to the boxes preceding \((k,l)\) in this ordering.
For every non-bottom box \((k,l)\), the lower bound \(p_{(k,l)}\) on the transition probability holds conditioned that algorithm has visited all the boxes that precede \((k,l)\). In particular,
define $ \mathbf X_{<(k,l)} := \{X_{(r,s)}:(r,s)<(k,l),\,r<h(s)\},$
and
\begin{align}
 &E_{(k,l)}:= \nonumber \\
 & \left\{ x_{<(k,l)}: x_{(r,s)}\le R(r) \text{ for all }(r,s)<(k,l),\,r<h(s) \right\}.
\end{align}
We have used $  (r,s) < (k,l) $ as a shorthand to indicate that the box $(r,s)$ comes before $(k,l)$ in  the column-major order. Here $h(s)$ is the number of boxes in the $s$th column. Then, from Eq.~\eqref{Eqn:probgrowth}, for every \(x_{<(k,l)}\in E_{(k,l)}
\),
\begin{align}
\label{Eqn:tail bound}
\Pr\!\left(
X_{(k,l)}>t
\;\middle|\;
\mathbf X_{<(k,l)}=x_{<(k,l)}
\right)
\le
(1-p_{(k,l)})^t.
\end{align}

\color{black}

Thus, Eq.~\eqref{Eqn:tail bound} holds conditionally on every exact admissible history, not merely after averaging over histories that reach \((k,l)\).\\

 This bound has a natural interpretation in terms of geometric waiting times: conditioned on the past, the waiting time \(X_{(k,l)}\) has tails no larger than those of a geometric random variable with success parameter \(p_{(k,l)}\). Accordingly, for each non-bottom box \((k,l)\), we define independent geometric random variables
\be
Z_{(k,l)}\sim\mathrm{Geom}(p_{(k,l)}),
\ee
so that
\be
\Pr\!\left(Z_{(k,l)}>t\right)=(1-p_{(k,l)})^t.
\ee
Thus, conditioned on any successful history leading to \((k,l)\), \(X_{(k,l)}\) is stochastically dominated by \(Z_{(k,l)}\).\\

The tail bounds in Eq.\eqref{Eqn:tail bound} allow us to compare the contribution of the non-bottom boxes to the total traversal time \(T\), to the sum of random variables \(Z_{(k,l)}\). We justify this comparison in Theorem~\ref{Thm:Domination} of Sec.~\ref{App:probfacts}. Consequently, the measurement-complexity analysis reduces to bounding the expectation and concentration of $\sum_{(k,l)} Z_{(k,l)}$. We then apply standard concentration bounds for sums of independent geometric random variables to obtain a high-probability bound on \(T\). This leads to Theorem~\ref{Thm:Algo2performance}, and we provide the explicit proof is given in Appendix~\ref{App:Number of rounds}.

\section{Antisymmetric measurements via SWAP tests}
\label{section:SWAPtoAntisymm}

We now describe how pairwise SWAP tests can be used to implement
antisymmetric measurements. A state lies in the fully antisymmetric
subspace if and only if a SWAP test on every pair yields the
antisymmetric outcome with probability one. This suggests a way
to approximately measure whether a state lies in that subspace:
test uniformly random pairs and declare the state antisymmetric
if sufficiently many consecutive tests yield the antisymmetric
outcome. If only the outcome statistics are of interest, this
procedure distinguishes the fully antisymmetric subspace from
its orthogonal complement with arbitrarily small error.

Reproducing the corresponding post-measurement state is more
subtle. For a general input state on three or more qudits, a
pairwise SWAP measurement can destroy coherences within the
orthogonal complement of the fully antisymmetric subspace,
even though the ideal antisymmetric measurement preserves
these coherences. Thus, for arbitrary inputs, a sequence of
SWAP tests does not directly simulate the post-measurement
channel of the antisymmetric measurement.

Fortunately, the states encountered in Algorithm~2 have
additional structure: the $k$ qudits in $A$ are already fully
antisymmetric. When testing whether an additional qudit can
be added to $A$, the joint input state is therefore supported
on the subspace of vectors
$|\psi\rangle\in(\mathbb{C}^d)^{\otimes(k+1)}$ satisfying
\be\label{ldj}
(\Pi_{1^k}\otimes\mathbb{I})|\psi\rangle
=
|\psi\rangle,
\ee
where the qudits in $A$ occupy the first $k$ positions.
Such states have support only on the isotypic components
labelled by
\be
\lambda_0=(1^{k+1}),
\qquad
\lambda_1=(2,1^{k-1}).
\ee
The ideal antisymmetric measurement preserves this promise.
Indeed, the fully antisymmetric subspace of $k+1$ qudits
is contained in the subspace in which the first $k$
qudits are fully antisymmetric, so
\begin{equation}
(\Pi_{1^k}\otimes\mathbb I)\Pi_{1^{k+1}}
=
\Pi_{1^{k+1}}
=
\Pi_{1^{k+1}}(\Pi_{1^k}\otimes\mathbb I).
\end{equation}
Thus, after either outcome of the ideal measurement,
the first $k$ qudits remain fully antisymmetric.

This observation suggests a recovery procedure for the
disturbance caused by pairwise SWAP tests.
Once a symmetric SWAP-test outcome is observed, the
$\lambda_0$ component is eliminated, leaving the state
entirely in the $\lambda_1$ sector.
Interestingly, within this sector, the promise that the first $k$
qudits are fully antisymmetric fixes the multiplicity
subsystem $\mathcal{P}_{\lambda_1}$ to a one-dimensional subspace.
Moreover, every SWAP-test projector and permutation
acts trivially on the $\mathrm{SU}(d)$ irrep subsystem.
Consequently, conditioned on a successful projection
of the first $k$ qudits back onto their fully
antisymmetric subspace, the desired post-measurement
state is restored exactly, including its correlations
with any external system.
The recovery procedure below implements this idea
approximately using further pairwise SWAP tests.\\

The case $k=1$ is implemented exactly by a single SWAP test,
so we assume $2\le k\le d$ below.
For these inputs, we introduce a protocol specified by
positive integers $L_\ast$, $L$, and $T$, with $T\ge L$.
Here, $L_\ast$ is the certification threshold, $L$ is the
recovery threshold, and $T$ is the maximum number of SWAP
tests performed during recovery.

\begin{enumerate}
\item Apply a uniformly random permutation of all $k+1$ qudits.

\item Perform SWAP tests on independently and uniformly chosen
pairs among qudits $1,\ldots,k+1$, stopping at the first symmetric
outcome or after $L_\ast$ consecutive antisymmetric outcomes.
In the latter case, return the current quantum state together
with the outcome $\lambda_0$, and terminate.

\item If a symmetric outcome is obtained, record the outcome
$\lambda_1$. Move the pair that produced this outcome to positions
$k$ and $k+1$, and begin recovery.

\item \textbf{Recovery:}
Initialize a counter to zero and perform SWAP tests on
independently and uniformly chosen pairs among the first $k$
qudits. After each antisymmetric outcome, increment the counter.
After each symmetric outcome, move the tested pair to positions
$k$ and $k+1$ and reset the counter to zero.

Stop recovery as soon as the counter reaches $L$ or a total
of $T$ recovery SWAP tests have been performed.
Apply a uniformly random permutation of the first $k$ qudits,
return the resulting quantum state of $k+1$ qudits together with the recorded
outcome $\lambda_1$, and terminate.
\end{enumerate}

The protocol performs at most $L_\ast+T$ SWAP tests.
For fixed $k$, the theorem below shows that it reproduces
both the outcome statistics and the post-measurement states
of the ideal antisymmetric measurement on the specified
inputs as $L_\ast\to\infty$, $L\to\infty$, and $T-L\to\infty$.

To quantify the error for finite parameters, let
$\mathcal{F}_{\mathrm{ideal}}$ denote the channel associated
with the ideal projective measurement
$\{\Pi_{\lambda_0},\mathbb{I}-\Pi_{\lambda_0}\}$, including
a register that records its outcome.
For inputs satisfying Eq.~\eqref{ldj}, this channel acts as
\begin{align}
\mathcal{F}_{\mathrm{ideal}}(\rho)
={}&
\Pi_{\lambda_0}\rho\Pi_{\lambda_0}
\otimes|\lambda_0\rangle\langle\lambda_0|_{\mathrm{reg}}
\nonumber\\
&+
\Pi_{\lambda_1}\rho\Pi_{\lambda_1}
\otimes|\lambda_1\rangle\langle\lambda_1|_{\mathrm{reg}},
\end{align}
where $\mathrm{reg}$ is a qubit register with orthonormal
basis states $|\lambda_0\rangle$ and $|\lambda_1\rangle$.

We quantify the approximation error by the diamond distance
between the ideal channel and the channel implemented by
the protocol, restricted to inputs satisfying
Eq.~\eqref{ldj}. To express this restriction, define
\be
\mathcal{K}(\rho)
=
(\Pi_{1^k}\otimes\mathbb{I})\rho
(\Pi_{1^k}\otimes\mathbb{I}).
\ee
This completely positive, trace-nonincreasing map projects
onto the subspace in which the first $k$ qudits are fully
antisymmetric.

\begin{theorem}\label{THM:SWAP}
Let \(2\le k\le d\) and \(0<\delta<1\).
The parameters of the above protocol can be chosen so that it
implements a channel \(\mathcal{F}\) satisfying
\begin{equation}
\frac{1}{2}
\left\|
\mathcal{F}_{\mathrm{ideal}}\circ\mathcal{K}
-
\mathcal{F}\circ\mathcal{K}
\right\|_\diamond
\le \delta,
\end{equation}
using at most \(T+L_\ast\) SWAP tests, with
\begin{equation}
T+L_\ast \le 4k\ln\frac{3}{\delta}.
\end{equation}
These guarantees hold for the choice
\begin{align}
L_\ast
&=\left\lceil k\ln\frac{3}{\delta}\right\rceil,
\qquad
L=\left\lceil (k-1)\ln\frac{3}{\delta}\right\rceil,
\\
T
&=L+\left\lceil
\frac{2k(k-1)}{k+1}\ln\frac{3}{\delta}
\right\rceil.
\end{align}
\end{theorem}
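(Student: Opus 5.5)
The plan is to work inside the $(k+1)$-qudit space with the promise of Eq.~\eqref{ldj}. Every step of the protocol (random permutations, SWAP tests, relabelling of positions) is $\mathrm{SU}(d)$-invariant, and so is the ideal channel. By Schur-Weyl duality both therefore act as the identity on $\mathcal Q^d_{\lambda_0}$ and $\mathcal Q^d_{\lambda_1}$, and also on any reference system. Hence it suffices to analyse their action on the multiplicity spaces. $\mathcal P_{\lambda_0}$ is one-dimensional (the sign irrep). $\mathcal P_{\lambda_1}$ is the $k$-dimensional standard-type irrep of $\mathbb S_{k+1}$, and inside it the promise selects a single vector $\ket{\eta}$: the one on which $\mathbb S_k$, acting on the first $k$ positions, acts by the sign. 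I will write the protocol's channel as $\mathcal F=\sum_j \mathcal N_j$, where each branch is a sequence of SWAP-test Kraus operators and permutations, i.e.\ an element of the group algebra. This gives the decomposition
\[
\mathcal F\circ\mathcal K(\rho)=\sum_{\alpha,\beta\in\{0,1\}} (\mathbb I_{\mathcal Q}\otimes A_{\alpha\beta})\,\rho\,(\cdots)^\dagger\otimes\ket{\lambda_\beta}\bra{\lambda_\beta}
\]
with explicit operators $A_{\alpha\beta}$ on the multiplicity spaces. The diamond-norm error will then be bounded by the total weight of ``bad'' Kraus branches, plus the distance of the good branches' output from $\ket{\eta}$, uniformly over the input.

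\textbf{Step 1 (certification error).} On the $\lambda_0$ component every SWAP test returns antisymmetric with certainty, so the protocol outputs $\lambda_0$ with the state untouched. This branch is exact. On the $\lambda_1$ component, after the initial random permutation the state is $\mathcal T$-twirled within $\mathcal P_{\lambda_1}$. A uniformly random pair then gives the symmetric outcome with probability $\tfrac12\bigl(1-\chi_{\lambda_1}((12))/k\bigr)$. Using the character $\chi_{(2,1^{k-1})}((12))=-(k-2)$ on the $k$-dimensional irrep, this probability is $\tfrac{1}{k}\cdot\tfrac{k-1}{k+1}\cdot\ldots$; I expect it to be of order $1/k$, precisely at least $1/k$. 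Consecutive antisymmetric outcomes keep the state in $\mathcal P_{\lambda_1}$, and I will show the mixture remains twirl-invariant, so the per-step failure probability stays at most $1-1/k$. Thus the misclassification weight is at most $(1-1/k)^{L_\ast}\le e^{-L_\ast/k}\le\delta/3$. Coherences between the $\lambda_0$ and $\lambda_1$ components survive only in this bad branch, so they are controlled by the same bound.

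\textbf{Step 2 (recovery).} After a symmetric outcome on pair $(k,k+1)$, the state lies in $\mathcal P_{\lambda_1}$ with an arbitrary vector. The recovery is a Markov chain on $\mathcal P_{\lambda_1}$. An antisymmetric outcome on every pair of the first $k$ qudits is exactly the statement that the state is $\ket\eta$. Each symmetric outcome re-places the state into the $k$-dimensional space of vectors symmetric on some pair, relabelled to positions $(k,k+1)$; this is a fixed, history-independent state up to the relabelling. By the same character computation for $\mathbb S_k$ restricted to $\mathcal P_{\lambda_1}$, the $\mathbb S_k$-twirled post-measurement state, restricted to the first $k$ qudits, has overlap with $\ket\eta$ bounded below. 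I will show that a run of $L$ consecutive antisymmetric outcomes on random pairs of $\{1,\dots,k\}$ certifies closeness to $\ket\eta$: the non-$\ket\eta$ part is damped by a factor $1-\tfrac{2}{k}\cdot$const per test, giving error at most $e^{-L/(k-1)}\le\delta/3$. Finally, the probability that a run of length $L$ never completes within $T$ tests is bounded by viewing the chain as geometric trials. From the fresh state after each reset, a success (landing in $\ket\eta$) occurs with probability about $2/(k+1)$ per test. This gives the term $\tfrac{2k(k-1)}{k+1}\ln\tfrac3\delta$ in $T-L$ and a failure probability at most $\delta/3$.

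\textbf{Step 3.} The final $\mathbb S_k$ twirl symmetrises any residual component, which makes the output conditioned on success exactly $\omega\otimes\ket\eta\bra\eta$. A union bound over the three error sources, using that the trace-norm error of each bad branch is at most twice its weight, gives $\tfrac12\|\cdot\|_\diamond\le\delta$. The count $T+L_\ast\le 4k\ln(3/\delta)$ then follows from the choice of parameters, using $k+(k-1)+2k(k-1)/(k+1)\le 4k$.

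\textbf{Main obstacle.} The hard part is Step~2: obtaining the exact damping rate of the non-$\ket\eta$ component under random antisymmetric outcomes, and, more importantly, showing that stopping on a random run length does not introduce bias. I would first diagonalise the averaged operator $\tfrac{1}{\binom k2}\sum_{i<j}\Pi^-_{ij}$ on $\mathcal P_{\lambda_1}$ restricted to $\mathbb S_k$-irreps: the trivial-sign part is $\ket\eta$, and the complement is the standard-type irrep with eigenvalue $1-\tfrac{2}{k-1}$ or similar. I would then pass from these averaged operators to the true randomised Kraus branches via the twirl.
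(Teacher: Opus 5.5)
\textbf{Verdict: genuine gap.} Your overall architecture matches the paper's. You reduce to the multiplicity spaces by Schur--Weyl duality, use exactness on the $\lambda_0$ sector, and union-bound three error terms of the form $e^{-L_\ast/k}$, $e^{-L/(k-1)}$ and $e^{-c(T-L)}$. The problem is that Step~2, which is where $L$ and $T$ are actually fixed, is left open and is partly wrong.

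\emph{The reset.} The history-independent reset you assert rests on the fact that inside $\mathcal{P}_{\lambda_1}$ the subspace symmetric on a \emph{fixed} pair is one-dimensional, not $k$-dimensional. The reason: a random SWAP test is symmetric with probability exactly $1/k$ for every state in the sector, so on this $k$-dimensional space the $\binom{k+1}{2}$ conjugate symmetric projectors average to $\mathbb{I}/k$. Each therefore has trace, hence rank, one. This rank-one reset to a fixed $|\beta\rangle$ is exactly what removes your ``bias'' obstacle. Every attempt, including the completing one, starts from $|\beta\rangle$. So the completed output is the same $\sigma_L$ regardless of what preceded it or whether one conditions on $N\le T$, which gives $\widetilde\sigma_{L,T}=(1-p_{\mathrm{cut}})\sigma_L+p_{\mathrm{cut}}\,\omega$.

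\emph{The recovery quality.} You need the exact constants $a=|\langle\alpha|\beta\rangle|^2=\frac{k+1}{2k}$ and $r=\frac{k-2}{k-1}$. Here $r$ is the scalar by which the averaged pairwise antisymmetric projector on the first $k$ qudits, which is central in $\mathbb{C}[\mathbb{S}_k]$, acts on the $(2,1^{k-2})$ component; your guess $1-\frac{2}{k-1}$ is wrong. Tracking the two $\mathbb{S}_k$-sector weights gives $q(L)=a/(a+(1-a)r^L)$, hence $1-q(L)\le\frac{1-a}{a}r^L$. An unspecified lower bound on the overlap is not enough: with $L=\lceil(k-1)\ln\frac3\delta\rceil$ you need $(1-a)/a\le 1$. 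Relatedly, in your Step~3 the final $\mathbb{S}_k$ twirl does not make the output exactly $|\alpha\rangle\langle\alpha|$. It makes it $q|\alpha\rangle\langle\alpha|+(1-q)(\mathbb{I}-|\alpha\rangle\langle\alpha|)/(k-1)$. That is why $1-q(L)$ survives as an error term and why the trace distance is exactly $1-p_{\mathrm{succ}}q$.

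\emph{The cutoff.} Your estimate here does not go through. The per-test success probability ``about $2/(k+1)$'' is not derived. Even if granted, it would give $T-L\approx\frac{k+1}{2}\ln\frac3\delta$, not the stated $\frac{2k(k-1)}{k+1}\ln\frac3\delta$. The correct mechanism has two factors:
\begin{itemize}
\item An attempt in the non-sign $\mathbb{S}_k$ sector resets only on a symmetric outcome, which has probability $1-r=\frac{1}{k-1}$ per test.
\item Each reset lands in the sign sector, where the attempt then completes within exactly $L$ tests, only with probability $a$.
\end{itemize}
Summing over outcome sequences gives $\Pr(N>L+t)\le(1-a)\bigl(1-a(1-r)\bigr)^t\le e^{-a(1-r)t}$. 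So $c=a(1-r)=\frac{k+1}{2k(k-1)}$, which is exactly the rate the stated $T$ encodes.

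\emph{Smaller points.} In Step~1 the symmetric probability is $\frac12\bigl(1+\chi_{\lambda_1}((12))/k\bigr)=\frac1k$; you have a sign slip. It also holds exactly for every state in the sector, because the sum of transpositions is central, so no twirl-invariance argument is needed to iterate it. Finally, the SWAP-test count must absorb the three ceilings: $L_\ast+T\le\bigl(4k-1-\frac{4k}{k+1}\bigr)\ln\frac3\delta+3\le 4k\ln\frac3\delta$, using $k\ge 2$ and $\ln\frac3\delta>1$.
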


We next analyze the protocol and prove this theorem.

\subsection{How the protocol works}

Averaging over the initial uniformly random permutation removes
coherences between the $\lambda_0$ and $\lambda_1$ sectors, as
does the ideal measurement. Moreover, all permutations and
SWAP-test projectors commute with the collective
$\mathrm{SU}(d)$ action and therefore preserve each isotypic
sector. It thus suffices to analyze the two input sectors
separately.

On the $\lambda_0$ sector, the protocol agrees exactly with the
ideal measurement. We therefore restrict our attention to
inputs supported on the image of the projector
\begin{equation}
Q
=
(\Pi_{1^k}\otimes\mathbb{I})\Pi_{\lambda_1}
=
\Pi_{\lambda_1}(\Pi_{1^k}\otimes\mathbb{I}).
\end{equation}
This projector selects the subspace of the $\lambda_1$ isotypic
component satisfying the additional constraint in
Eq.~\eqref{ldj}.

The structure of $Q$ significantly simplifies the analysis.
Within the Schur--Weyl sector
$\mathcal{Q}_{\lambda_1}\otimes\mathcal{P}_{\lambda_1}$, we have
\begin{equation}\label{hdjskl}
Q
=
\mathbb{I}_{\mathcal{Q}_{\lambda_1}}
\otimes|\alpha\rangle\langle\alpha|
\end{equation}
for a normalized vector
$|\alpha\rangle\in\mathcal{P}_{\lambda_1}$.
Indeed, the branching rule for restriction from
$\mathbb{S}_{k+1}$ to $\mathbb{S}_k$ is multiplicity free,
and the one-dimensional sign representation $(1^k)$ occurs
exactly once in the restriction of
$\mathcal{P}_{\lambda_1}$.
Consequently, the projector onto this representation has the
form in Eq.~\eqref{hdjskl}; see the appendix for details.

Thus, for inputs satisfying Eq.~\eqref{ldj}, conditioned on the
ideal measurement $\mathcal{F}_{\mathrm{ideal}}$ returning
outcome $\lambda_1$, the multiplicity subsystem is in the pure
state $|\alpha\rangle$, independently of the input.

To evaluate the diamond-norm error, we allow the input to be
entangled with an arbitrary reference system $R$.
For any normalized input supported on the range of
$Q\otimes\mathbb{I}_R$, Eq.~\eqref{hdjskl} implies the
factorization
\begin{equation}
\rho
=
\rho_{\mathcal{Q}_{\lambda_1}R}
\otimes|\alpha\rangle\langle\alpha|_{\mathcal{P}_{\lambda_1}},
\end{equation}
where we have reordered the tensor factors for convenience.

Every permutation and SWAP-test projector is
$\mathrm{SU}(d)$-invariant and hence, within this sector,
has the form
$\mathbb{I}_{\mathcal{Q}_{\lambda_1}}\otimes A$.
The protocol also acts trivially on $R$.
Therefore, although the initial random permutation and
subsequent measurements generally change the multiplicity
state, the joint state on
$\mathcal{Q}_{\lambda_1}\otimes\mathcal{H}_R$
remains unchanged and uncorrelated with the multiplicity
subsystem and the output register.

Consequently, the output factors as
\begin{equation}
\rho_{\mathcal{Q}_{\lambda_1}R}
\otimes\tau_{\mathcal{P}_{\lambda_1}\mathrm{reg}},
\end{equation}
while the ideal output has the same first factor and
\begin{equation}
\tau_{\mathrm{ideal}}
=
|\alpha\rangle\langle\alpha|
\otimes|\lambda_1\rangle\langle\lambda_1|_{\mathrm{reg}}.
\end{equation}
Since tensoring with a density operator preserves the trace
norm, the error analysis reduces to comparing these reduced
output states on
$\mathcal{P}_{\lambda_1}\otimes\mathbb{C}^2_{\mathrm{reg}}$.

We first analyze an auxiliary version of the protocol in which
the recovery stage is allowed to run until it terminates
naturally. The effect of imposing the cutoff $T$ will be
bounded separately. Writing $\tau_L$ for the reduced output
state of this auxiliary protocol, with its dependence on
$L_\ast$ suppressed, we have
\begin{align}
\tau_L
={}&
p_{\mathrm{succ}}(L_\ast)\,\sigma_L
\otimes|\lambda_1\rangle\langle\lambda_1|_{\mathrm{reg}}
\nonumber\\
&+
[1-p_{\mathrm{succ}}(L_\ast)]\,\sigma'_L
\otimes|\lambda_0\rangle\langle\lambda_0|_{\mathrm{reg}}.
\end{align}
Here, $\sigma_L$ and $\sigma'_L$ are the normalized conditional
output states of the multiplicity subsystem for the two
recorded outcomes. The quantity $p_{\mathrm{succ}}(L_\ast)$
is the probability that step~2 yields a symmetric outcome
before $L_\ast$ consecutive antisymmetric outcomes occur.
In this case, the protocol records $\lambda_1$ and proceeds
to recovery.

Lemma~\ref{lemma:21k-1random SWAP} shows that, for every
normalized state in the $\lambda_1$ sector, a SWAP test on a
uniformly random pair yields the symmetric outcome with
probability $1/k$. Since each test preserves the $\lambda_1$
sector and each pair is chosen independently and uniformly,
this probability remains $1/k$ conditional on any preceding
sequence of antisymmetric outcomes. Therefore, the probability
that step~2 terminates without observing a symmetric outcome
is $[(k-1)/k]^{L_\ast}$, giving
\begin{equation}\label{Eqn:p_succBound}
p_{\mathrm{succ}}(L_\ast)
=
1-\left(\frac{k-1}{k}\right)^{L_\ast}
\ge 1-e^{-L_\ast/k}.
\end{equation}
Consequently, for fixed $k$, $p_{\mathrm{succ}}(L_\ast)$
converges to one exponentially fast as $L_\ast\to\infty$.

At the end of recovery, after obtaining $L$ consecutive
antisymmetric outcomes, the auxiliary protocol applies a
uniformly random permutation of the first $k$ qudits.
Averaging over this permutation makes the output block diagonal
with respect to $\Pi_{1^k}\otimes\mathbb{I}$ and its orthogonal
complement. Within the $\lambda_1$ sector, this projector acts as
\[
Q
=
\mathbb{I}_{\mathcal{Q}_{\lambda_1}}
\otimes|\alpha\rangle\langle\alpha|.
\]
Consequently, $\sigma_L$ commutes with
$|\alpha\rangle\langle\alpha|$.

Moreover, the restriction of
$\mathcal{P}_{\lambda_1}$ to $\mathbb{S}_k$ consists of the
one-dimensional sign representation, spanned by
$|\alpha\rangle$, and an inequivalent irreducible representation
of dimension $k-1$. By Schur's lemma, the twirl makes the state
proportional to the identity on each of these two subspaces.
Therefore,
\begin{equation}\label{Eqn:sigmaLdecomp}
\sigma_L
=
q(L)|\alpha\rangle\langle\alpha|
+(1-q(L))
\frac{
\mathbb{I}_{\mathcal{P}_{\lambda_1}}
-
|\alpha\rangle\langle\alpha|}
{k-1}
,
\end{equation}
where
\[
q(L)=\langle\alpha|\sigma_L|\alpha\rangle
\]
is the probability that an antisymmetric
measurement on the first $k$ qudits at the end of recovery
returns the antisymmetric outcome. We compute this probability
below.

For the auxiliary protocol, the error is completely determined
by the two probabilities $p_{\mathrm{succ}}(L_\ast)$ and $q(L)$.
Indeed, the two register states are orthogonal, and
Eq.~\eqref{Eqn:sigmaLdecomp} is block diagonal with respect to
$|\alpha\rangle\langle\alpha|$ and its orthogonal complement.
Therefore,
\begin{equation}
D\!\left(
\tau_L,
|\alpha\rangle\langle\alpha|
\otimes|\lambda_1\rangle\langle\lambda_1|_{\mathrm{reg}}
\right)
=
1-p_{\mathrm{succ}}(L_\ast)q(L),
\end{equation}
where $D(\rho,\sigma)=\frac12\|\rho-\sigma\|_1$.
The unchanged factor on
$\mathcal{Q}_{\lambda_1}\otimes\mathcal{H}_R$
does not affect this trace distance.

To calculate $q(L)$, we use an important reset property.
Whenever a SWAP test yields the symmetric outcome, moving the
tested pair to positions $k$ and $k+1$ puts the multiplicity
subsystem in the same pure state, independently of its state
before that test. To see this, define
\begin{equation}
\widetilde{Q}
=
(\mathbb{I}^{\otimes(k-1)}\otimes\Pi_{(2)})
\Pi_{\lambda_1}
=
\Pi_{\lambda_1}
(\mathbb{I}^{\otimes(k-1)}\otimes\Pi_{(2)}),
\end{equation}
where $\Pi_{(2)}$ projects qudits $k$ and $k+1$ onto their
symmetric subspace. Within the Schur--Weyl sector
$\mathcal{Q}_{\lambda_1}\otimes\mathcal{P}_{\lambda_1}$,
we have
\begin{equation}
\widetilde{Q}
=
\mathbb{I}_{\mathcal{Q}_{\lambda_1}}
\otimes|\beta\rangle\langle\beta|
\end{equation}
for a normalized vector\footnote{This can be seen directly from representative states. The space
 is spanned by the \(\mathcal P^{k+1}_{(2,1^{k-1})}\) component the states
\[
v_i:=P_{(1,i)}
\left(
\ket{0}_1\otimes\ket{\Lambda}_{2,\ldots,k+1}
\right),
\quad i=1,\ldots,k+1,
\]
where \(\ket{\Lambda}=\ket{0\wedge \cdots \wedge k-1}\). Under a symmetric
projection on qudits \(1\) and \(2\), all states with \(i>2\) are
annihilated, while the two remaining states \(v_1\) and \(v_2\) are
projected onto the same vector, proportional to \(v_1+v_2\). Hence the
symmetric projection has one-dimensional image on
\(\mathcal P^{k+1}_{(2,1^{k-1})}\).}
$|\beta\rangle\in\mathcal{P}_{\lambda_1}$.

Indeed, all pairwise symmetric projectors are conjugate under
permutations. By Lemma~\ref{lemma:21k-1random SWAP}, their
average on $\mathcal{P}_{\lambda_1}$ is
$\mathbb{I}_{\mathcal{P}_{\lambda_1}}/k$.
Since $\dim\mathcal{P}_{\lambda_1}=k$, each of these projectors
has trace, and hence rank, equal to one on the multiplicity
subsystem.

Thus, after every symmetric outcome and subsequent move
of the tested pair, the multiplicity subsystem is reset to
$|\beta\rangle$. This preserves the factorization established
above: for the admissible inputs, the multiplicity subsystem
remains uncorrelated with
$\mathcal{Q}_{\lambda_1}\otimes\mathcal{H}_R$
throughout the protocol.
The pure states $|\alpha\rangle\langle\alpha|$ and
$|\beta\rangle\langle\beta|$ are fixed independently of the
input and measurement history. Each recovery attempt starts
from $|\beta\rangle$ and ends either with another symmetric
outcome, which triggers a reset, or with $L$ consecutive
antisymmetric outcomes, which complete recovery.

Define the trace-nonincreasing completely positive map
\begin{equation}
\mathcal{E}(X)
=
\frac{1}{\binom{k}{2}}
\sum_{1\le i<j\le k}
\Pi_{ij}^{-}X\Pi_{ij}^{-},
\end{equation}
where $\Pi_{ij}^{-}$ projects qudits $i$ and $j$ onto their
antisymmetric subspace.
The unnormalized state after $L$ consecutive antisymmetric
outcomes is obtained by applying $\mathcal{E}^L$.
Consequently,
\begin{equation}
q(L)
=
\frac{\operatorname{Tr}\!\left(
Q\mathcal{E}^{L}(\widetilde{Q})
\right)}
{\operatorname{Tr}\!\left(
\mathcal{E}^{L}(\widetilde{Q})
\right)}.
\end{equation}
Using the unnormalized projector $\widetilde{Q}$ in this ratio
is legitimate because the unchanged
$\mathcal{Q}_{\lambda_1}$ factor contributes the same
multiplicative factor to the numerator and denominator.
The final twirl preserves this probability.

Let
\begin{equation}
a=|\langle\alpha|\beta\rangle|^2=\frac{k+1}{2k},
\qquad
r=\frac{k-2}{k-1},
\end{equation}
where the overlap is established in
Lemma~\ref{lem:recovery-probability}.
The calculation in Appendix~\ref{AppSWAP:q(l)} gives
\begin{align}
\operatorname{Tr}\!\left(
Q\mathcal{E}^{L}(\widetilde{Q})
\right)
&=
\operatorname{Tr}(Q\widetilde{Q})
=ma,
\\
\operatorname{Tr}\!\left(
\mathcal{E}^{L}(\widetilde{Q})
\right)
&=
m\left[a+(1-a)r^L\right],
\end{align}
where $m=\dim\mathcal{Q}_{\lambda_1}$.
Hence,
\begin{equation}\label{Eqn:q(L)}
q(L)
=
\frac{a}{a+(1-a)r^L}
=
\left[
1+\frac{k-1}{k+1}
\left(\frac{k-2}{k-1}\right)^L
\right]^{-1}.
\end{equation}
In particular,
\begin{equation}\label{Eqn:q(L) bound}
1-q(L)
\le
\frac{k-1}{k+1}
\left(\frac{k-2}{k-1}\right)^L
\le
e^{-L/(k-1)}.
\end{equation}
Each recovery attempt succeeds with probability,
\begin{equation}\label{eq:15}
  \frac{\operatorname{Tr}(\mathcal{E}^{L}(\widetilde{Q}))}{\operatorname{Tr}(\widetilde{Q})}=a+(1-a)r^L\ge a>0,
\end{equation}
 therefore, the auxiliary recovery procedure terminates with probability one.

We now impose the recovery budget $T\ge L$.
Let $N$ denote the number of SWAP tests needed for natural
completion of recovery, starting from $|\beta\rangle$, and set
\begin{equation}
p_{\mathrm{cut}}=\Pr(N>T).
\end{equation}
Here, completion on the $T$-th test is counted as natural
completion.

Every successful recovery attempt consists of exactly $L$
antisymmetric outcomes and produces the same conditional
output state $\sigma_L$, independently of all preceding
failed attempts. Conditioning on $N\le T$ only restricts when
this final attempt can begin; it does not change its
conditional output state.
Therefore, the actual recovery output has the form
\begin{equation}
\widetilde{\sigma}_{L,T}
=
(1-p_{\mathrm{cut}})\sigma_L
+
p_{\mathrm{cut}}\omega,
\end{equation}
where $\omega$ is the conditional output state when the cutoff
is reached. By convexity of the trace distance and
Eq.~\eqref{Eqn:sigmaLdecomp},
\begin{align}\label{Eqn:sigmaLTtracedist}
D\!\left(
\widetilde{\sigma}_{L,T},
|\alpha\rangle\langle\alpha|
\right)
&\le
(1-p_{\mathrm{cut}})[1-q(L)]
+p_{\mathrm{cut}}
\nonumber\\
&=
1-(1-p_{\mathrm{cut}})q(L).
\end{align}
The cutoff probability satisfies
\begin{equation}\label{Eqn:recovery-tail}
p_{\mathrm{cut}}
\le
e^{-c(T-L)},
\qquad
c=a(1-r)=\frac{k+1}{2k(k-1)}.
\end{equation}
A proof is given in Subsection~\ref{AppSWAP:runtime}.
The intuition is simple: Each recovery attempt has a fixed nonzero probability of starting in the fully antisymmetric component, in which case the procedure terminates after at most \(L\) additional SWAP tests. Hence, after sufficiently many attempts, termination occurs with exponentially high probability. Since every individual attempt can have at most $L$ SWAP tests, the runtime has an exponentially decaying tail.

Write $\mathcal{F}_{L_\ast,L,T}$ for the channel implemented
by the protocol with the recovery cutoff.
Combining Eq.~\eqref{Eqn:sigmaLTtracedist} with the exact
agreement on the $\lambda_0$ sector and the factorization
with the reference system gives
\begin{align}
&\frac12
\left\|
\mathcal{F}_{L_\ast,L,T}\circ\mathcal{K}
-
\mathcal{F}_{\mathrm{ideal}}\circ\mathcal{K}
\right\|_\diamond
\le
1-p_{\mathrm{succ}}(L_\ast)(1-p_{\mathrm{cut}})q(L)
\nonumber\\
&\le
[1-p_{\mathrm{succ}}(L_\ast)]
+[1-q(L)]
+p_{\mathrm{cut}}
\nonumber\\
&\le
e^{-L_\ast/k}
+
e^{-L/(k-1)}
+
e^{-c(T-L)}.
\end{align}
In the second inequality we applied a union bound and for the third, we applied the bounds established in Eqs.~\eqref{Eqn:recovery-tail}, \eqref{Eqn:q(L) bound} and \eqref{Eqn:p_succBound}.

For any target error $0<\delta<1$, choose $L_\ast$, $L$,
and $T$ as specified in Theorem~\ref{THM:SWAP}.
Each of the three error terms is then at most $\delta/3$,
and hence
\begin{equation}
\frac12
\left\|
\mathcal{F}_{L_\ast,L,T}\circ\mathcal{K}
-
\mathcal{F}_{\mathrm{ideal}}\circ\mathcal{K}
\right\|_\diamond
\le \delta.
\end{equation}
The total number of SWAP tests satisfies
\begin{align}
N_{\mathrm{SWAP}}
&\le L_\ast+T
\nonumber\\
&\le
\left(
2k-1+\frac{2k(k-1)}{k+1}
\right)\ln\frac{3}{\delta}+3
\nonumber\\
&\le
4k\ln\frac{3}{\delta},
\end{align}
where the last inequality uses $k\ge2$ and
$\ln(3/\delta)>1$.

\subsection{SWAP-test complexity of Algorithm~2}

We now replace the ideal antisymmetric measurements in
Algorithm~2 with their pairwise SWAP-test implementations
and bound the total number of SWAP tests required to
approximate unitary Schur sampling within diamond
distance $\epsilon$.

Let $0<\epsilon<1$ and $\tilde d=\min\{n,d\}$.
The cases $n=1$ or $d=1$ require no measurements, so
assume $n,d\ge2$.
We run Algorithm~2 with error parameter $\epsilon/2$
in all stopping thresholds and impose a cutoff of
\begin{equation}
M=
\left\lceil
6n\tilde d^{\,2}
\ln\!\left(\frac{4en}{\epsilon}\right)
\right\rceil
\end{equation}
antisymmetric-measurement calls.
If an execution would require more than $M$ calls,
it terminates with a prescribed output.

Let $\mathcal A$ denote this cutoff implementation
with ideal antisymmetric measurements.
By Theorem~\ref{Thm:Algo2performance}, the probability
of algorithmic failure or exceeding the cutoff is at
most $\epsilon/2$ in every isotypic sector.
On each sector after the twirl, successful executions
produce the ideal output while preserving the irrep
state and its correlations with any reference system.
Convexity therefore gives
\begin{equation}\label{Eqn:Algo2-ideal-error}
\frac12
\left\|
\mathcal A-\mathcal E_{\mathrm{Schur}}
\right\|_\diamond
\le\frac{\epsilon}{2},
\end{equation}
where $\mathcal E_{\mathrm{Schur}}$ is the ideal
unitary Schur sampling channel defined in
Eq.~\eqref{Schur}.

Let $\widetilde{\mathcal A}$ denote the corresponding
implementation in which each antisymmetric measurement
is replaced by its SWAP-test implementation, with the
same initial twirl, classical control rules, stopping
thresholds, cutoff, and output convention.
Choose the error of each replacement to be
\begin{equation}
\eta=\frac{\epsilon}{2M}.
\end{equation}
By Theorem~\ref{THM:SWAP}, this accuracy is achievable
on inputs satisfying the antisymmetry promise.
The case $k=1$ is implemented exactly by a single
SWAP test.

We compare the two implementations by replacing
measurement calls one at a time, starting from the end.
For $j=0,\ldots,M$, let $\mathcal A^{(j)}$ denote the
implementation whose first $M-j$ calls are ideal and
whose subsequent calls, if any, use the SWAP-test
implementations.
Thus, $\mathcal A^{(0)}=\mathcal A$ and
$\mathcal A^{(M)}=\widetilde{\mathcal A}$.

The implementations $\mathcal A^{(j)}$ and
$\mathcal A^{(j-1)}$ differ only at call $M-j+1$.
Before this call, both use ideal measurements, so the
required antisymmetry promise holds on every history
that reaches it.
Replacing this call therefore introduces diamond
distance at most $\eta$.
All subsequent operations follow the same
outcome-dependent rules in both implementations and
cannot increase this distance.
Histories that terminate before the differing call
contribute no error.
Hence,
\begin{equation}
\frac12
\left\|
\mathcal A^{(j)}-\mathcal A^{(j-1)}
\right\|_\diamond
\le\eta.
\end{equation}
Consequently,
\begin{align}
\frac12
\left\|
\widetilde{\mathcal A}-\mathcal A
\right\|_\diamond
&\le
\sum_{j=1}^{M}
\frac12
\left\|
\mathcal A^{(j)}-\mathcal A^{(j-1)}
\right\|_\diamond
\nonumber\\
&\le M\eta=\frac{\epsilon}{2}.
\end{align}
Combining this with
Eq.~\eqref{Eqn:Algo2-ideal-error} yields
\begin{equation}
\frac12
\left\|
\widetilde{\mathcal A}
-\mathcal E_{\mathrm{Schur}}
\right\|_\diamond
\le M\eta+\frac{\epsilon}{2}
=\epsilon.
\end{equation}

By Theorem~\ref{THM:SWAP}, each call with promised
antisymmetric subsystem size $2\le k\le\tilde d$
requires at most
$4k\ln(3/\eta)\le4\tilde d\ln(3/\eta)$
SWAP tests.
The exact implementation for $k=1$ also satisfies
this bound.
Since every execution makes at most $M$ calls,
\begin{align}
N_{\mathrm{SWAP}}
&\le
4\tilde d M\ln\!\left(\frac{3}{\eta}\right)
\nonumber\\
&=
4\tilde d M\ln\!\left(\frac{6M}{\epsilon}\right)
\nonumber\\
&=
O\!\left(
n\tilde d^{\,3}\ln^2\!\frac{n}{\epsilon}
\right),
\end{align}
where the last step uses the definition of $M$ and
$\tilde d\le n$.

\section{Applications}\label{sec:Applications}

\subsection{Quantum purity amplification}\label{subsec:QPAapplications}

Quantum purity amplification was first studied for qubits by Cirac
et al.~\cite{Cirac}. The goal is to take multiple noisy copies
\(\rho^{\otimes n}\) of an unknown mixed state and produce an output
state with increased fidelity with the principal eigenstate of \(\rho\).
An immediate application of our construction is an implementation of
the optimal single-output QPA protocol of Ref.~\cite{QPA1}.

The protocol of Ref.~\cite{QPA1} can be viewed as consisting of three
steps: Weak Schur sampling, a correction within each Schur-Weyl sector,
and finally retaining a single output qudit. Conditioned on obtaining
the Young diagram \(\lambda\), the correction prepares the
\(\mathbb S_n\) irrep subsystem in the Young-Yamanouchi basis state
associated with the column-ordered standard Young tableau
\(T^\diamond_\lambda\).

For the permutation-invariant input \(\rho^{\otimes n}\), conditioning on the Weak Schur sampling outcome \(\lambda\) gives a state of the form
\begin{align}
\rho_\lambda\otimes
\frac{\mathbb I_{\mathcal P_\lambda}}{m(n,\lambda)},
\end{align}
where \(\rho_\lambda\) is the state on the \(\mathrm{SU}(d)\) irrep subsystem \(\mathcal Q_\lambda^d\), and \(\mathcal P_\lambda\) is the corresponding \(\mathbb S_n\) irrep subsystem. Our USS channel acts on this sector as
\begin{align}
\rho_\lambda\otimes
\frac{\mathbb I_{\mathcal P_\lambda}}{m(n,\lambda)}
\longmapsto
\rho_\lambda\otimes
|\eta_\lambda\rangle\langle\eta_\lambda|,
\end{align}
while also returning the classical label \(\lambda\). As shown in Appendix~\ref{AppAsub:YoungSymmetrizers}, the multiplicity state \(|\eta_\lambda\rangle\) coincides, up to an irrelevant phase, with the Young-Yamanouchi basis vector indexed by the column-ordered tableau \(T^\diamond_\lambda\). Thus, our USS protocol performs the Schur-sampling and correction stages of the single-output QPA protocol simultaneously.

The remaining QPA step is particularly simple. With the physical ordering corresponding to \(T^\diamond_\lambda\), one retains only the qudit carrying the final label \(n\), namely the bottom box of the rightmost column, and discards all remaining qudits. Hence the optimal single-output QPA channel can be implemented by our USS channel followed only by the appropriate trace-out,
\begin{align}
\mathcal E_{\mathrm{QPA}}
=
\operatorname{Tr}_{\mathrm{discard}}
\circ
\mathcal E_{\text{Schur}},
\label{eq:QPAfromUSS}
\end{align}
where \(\operatorname{Tr}_{\mathrm{discard}}\) traces out all quantum outputs except the \(n\)-th qudit, as well as the classical Schur-label register.

Ref.~\cite{QPA1} implements the Schur-sampling and correction stages using generalized quantum phase estimation (GQPE), state preparation in the representation register, and inverse GQPE, with overall gate complexity \(\operatorname{poly}(n,\log d)\). Our protocol realizes these stages using only
\(\widetilde O(\min\{nd^3,n^4\})\) pairwise SWAP tests.
In particular, for fixed local dimension, the number of pairwise SWAP tests is linear in \(n\) up to logarithmic factors. It is worth noting that Ref.~\cite{QPA1} also introduced the SWAP-test construction SWAPNET, but this implementation is restricted to three effective qudits, whereas the construction here applies for arbitrary \(n\) and \(d\).

The same observation also extends naturally to the multi-output QPA protocol of Ref.~\cite{QPA2}. In the principal-eigenvector case, whenever \(\lambda_1-\lambda_2\ge m\), the overhang-removal rule removes the \(m\) rightmost boxes of the first row. For the column-ordered tableau \(T^\diamond_\lambda\), these are precisely the boxes carrying the final \(m\) labels. In this regime, our USS procedure again prepares the required canonical state, and the final \(m\) registers are retained instead of only the final register. Thus, after USS, the multi-output protocol is obtained simply by tracing out the first \(n-m\) registers. For fixed \(m\) and a nonzero principal spectral gap, this is the asymptotically typical regime under Schur sampling.

\subsection{Optimal quantum cloning}

Another application of the random-SWAP viewpoint is the optimal universal quantum cloning channel. The cloning task is to transform $N$ identical copies of an unknown pure state into $M>N$ approximate copies with the highest possible fidelity. For qubits, the optimal $N\rightarrow M$ cloning channel acts on states supported on the totally symmetric subspace of $N$ qubits and outputs states in the totally symmetric subspace of $M$ qubits. It is given by
\begin{multline}
\mathcal E_{\mathrm{clone}}^{N\rightarrow M}(\rho_N)=\\
\frac{N+1}{M+1}
\Pi_{\mathrm{sym}}^{M}
\left(
\rho_N\otimes \mathbb{I}^{\otimes(M-N)}
\right)
\Pi_{\mathrm{sym}}^{M}.
\end{multline}
Kazi and Marvian~\cite{KaziMarvian2026} gives a SWAP-test implementation of this channel by introducing $M-N$ singlet pairs, using one qubit from each pair to enlarge the active system, and then applying random SWAP tests to project the resulting $M$ qubits onto the totally symmetric subspace. If a singlet is detected, the discarded halves are brought back, the singlets are recovered, and the procedure is restarted. This construction provides a complementary perspective to the ``hunt for antisymmetry'' developed here: starting from $N$ fully symmetric qubits together with $M-N$ singlets, the largest fully symmetric subsystem compatible with the conserved total angular momentum has size $M$, and their protocol finds precisely this maximal symmetric subsystem using random SWAP tests. Ref.~\cite{KaziMarvian2026} uses this map as a primitive to study optimal linear-rate conversion between unknown mixed qubit states.

\section{Discussion}
In this paper, we showed that unitary Schur sampling can be implemented by simply performing repeated SWAP tests. Further, we showed a more general strategy of extracting the largest antisymmetric subsystem using $\mathrm{SU}(d)$-invariant operations repeatedly will achieve unitary Schur sampling. This opens up the scope for developing several other implementations. Our algorithm has a gate complexity of $\tilde{O}(nd^3)$ making it competitive with existing strategies that use complex quantum circuits like the Clebsch-Gordon transform or the quantum fourier transform on the symmetric group. More importantly, our approach maybe more amenable to near-term devices, since it only requires the SWAP test. Our construction is also useful when only partial Young-diagram information is required. For example, quantum rank testing depends only on the first-column length \(\lambda'_1\)~\cite{ODonnellspectrum}, so the decision can be made after only the first maximal antisymmetric subsystem is identified, without performing the full unitary Schur sampling protocol. An interesting direction for future work is to determine whether the dependence on the local dimension can be improved.

\section*{Acknowledgments}

We acknowledge helpful discussions with other members of our group, especially Austin Hulse and Sujay Kazi.  We also acknowledge support from  NSF Phy-2046195,  and 
NSF QLCI grant OMA-2120757. 

\section*{AI Statement}
All scientific ideas, results, and mathematical proofs were developed and verified by the human authors. The initial draft was written entirely by the authors. ChatGPT was used for proofreading, editing, and improving the clarity and presentation of the manuscript. The authors take full responsibility for the manuscript's content, including the accuracy and validity of all results and proofs.

\newpage

\bibliographystyle{apsrev4-2}
\bibliography{Bib,localrefsv2}

@article{helstrom1969quantum,
  title={Quantum detection and estimation theory},
  author={Helstrom, Carl W},
  journal={Journal of Statistical Physics},
  volume={1},
  pages={231--252},
  year={1969},
  publisher={Springer}
}

@book{holevo2011probabilistic,
  title={Probabilistic and statistical aspects of quantum theory},
  author={Holevo, Alexander S},
  volume={1},
  year={2011},
  publisher={Springer Science \& Business Media}
}

@article{giovannetti2011advances,
  title={Advances in quantum metrology},
  author={Giovannetti, Vittorio and Lloyd, Seth and Maccone, Lorenzo},
  journal={Nature photonics},
  volume={5},
  number={4},
  pages={222--229},
  year={2011},
  publisher={Nature Publishing Group UK London}
}

@article{entanglement1,
  title = {Universal distortion-free entanglement concentration},
  author = {Matsumoto, Keiji and Hayashi, Masahito},
  journal = {Phys. Rev. A},
  volume = {75},
  issue = {6},
  pages = {062338},
  numpages = {13},
  year = {2007},
  month = {Jun},
  publisher = {American Physical Society},
  doi = {10.1103/PhysRevA.75.062338},
  url = {https://link.aps.org/doi/10.1103/PhysRevA.75.062338}
}

@ARTICLE{entanglement2,
  author={Blume-Kohout, Robin and Croke, Sarah and Gottesman, Daniel},
  journal={IEEE Transactions on Information Theory}, 
  title={Streaming Universal Distortion-Free Entanglement Concentration}, 
  year={2014},
  volume={60},
  number={1},
  pages={334-350},
  doi={10.1109/TIT.2013.2292135}}

@inproceedings{TomographyDonnel,
  title     = {Efficient Quantum Tomography},
  author    = {O'Donnell, Ryan and Wright, John},
  booktitle = {Proceedings of the 48th Annual ACM Symposium on Theory of Computing (STOC)},
  pages     = {899--912},
  year      = {2016},
  publisher = {ACM},
  doi       = {10.1145/2897518.2897544},
  url       = {https://doi.org/10.1145/2897518.2897544}
}

@article{SpectrumWerner,
  title = {Estimating the spectrum of a density operator},
  author = {Keyl, M. and Werner, R. F.},
  journal = {Phys. Rev. A},
  volume = {64},
  issue = {5},
  pages = {052311},
  numpages = {5},
  year = {2001},
  month = {Oct},
  publisher = {American Physical Society},
  doi = {10.1103/PhysRevA.64.052311},
  url = {https://link.aps.org/doi/10.1103/PhysRevA.64.052311}
}

@article{spectrumchristandl,
  title   = {The Spectra of Quantum States and the Kronecker Coefficients of the Symmetric Group},
  author  = {Christandl, Matthias and Mitchison, Graeme},
  journal = {Communications in Mathematical Physics},
  volume  = {261},
  number  = {3},
  pages   = {789--797},
  year    = {2006},
  doi     = {10.1007/s00220-005-1435-1},
  url     = {https://doi.org/10.1007/s00220-005-1435-1}
}

@inproceedings{spectrumodonnell,
  title     = {Quantum Spectrum Testing},
  author    = {O'Donnell, Ryan and Wright, John},
  booktitle = {Proceedings of the 47th Annual ACM Symposium on Theory of Computing (STOC)},
  pages     = {529--538},
  year      = {2015},
  publisher = {ACM},
  doi       = {10.1145/2746539.2746607},
  url       = {https://doi.org/10.1145/2746539.2746607}
}

@misc{harrowthesis,
      title={Applications of coherent classical communication and the Schur transform to quantum information theory}, 
      author={Aram W. Harrow},
      year={2005},
      eprint={quant-ph/0512255},
      archivePrefix={arXiv},
      primaryClass={quant-ph},
      url={https://arxiv.org/abs/quant-ph/0512255}, 
}

@book{fulton96_young_tableaux,
  title = {Young Tableaux},
  author = {Fulton, William},
  year = 1996,
  month = dec,
  publisher = {Cambridge University Press},
  doi = {10.1017/cbo9780511626241},
  isbn = {978-0-511-62624-1}
}

@article{Marvian2020Coherence,
  author  = {Marvian, Iman},
  title   = {Coherence Distillation Machines Are Impossible in Quantum Thermodynamics},
  journal = {Nature Communications},
  volume  = {11},
  pages   = {25},
  year    = {2020},
  doi     = {10.1038/s41467-019-13846-3}
}

@misc{QPA1,
  author        = {Li, Zhaoyi and Fu, Honghao and Isogawa, Takuya and Silva, Caio and Chuang, Isaac L.},
  title         = {Optimal Quantum Purity Amplification},
  year          = {2025},
  eprint        = {2409.18167},
  archivePrefix = {arXiv},
  primaryClass  = {quant-ph}
}

@misc{qubitSWAPUSS,
  author        = {Brahmachari, Shrigyan and Hulse, Austin and Pfister, Henry D. and Marvian, Iman},
  title         = {Optimal Qubit Purification and Unitary {Schur} Sampling via Random {SWAP} Tests},
  year          = {2025},
  eprint        = {2508.05046},
  archivePrefix = {arXiv},
  primaryClass  = {quant-ph}
}

@article{Odonnellspectrum,
  author  = {O'Donnell, Ryan and Wright, John},
  title   = {Quantum Spectrum Testing},
  journal = {Communications in Mathematical Physics},
  volume  = {387},
  number  = {1},
  pages   = {1--75},
  year    = {2021},
  doi     = {10.1007/s00220-021-04180-1}
}

@article{BCH,
  author  = {Bacon, Dave and Chuang, Isaac L. and Harrow, Aram W.},
  title   = {Efficient Quantum Circuits for {Schur} and {Clebsch--Gordan} Transforms},
  journal = {Physical Review Letters},
  volume  = {97},
  number  = {17},
  pages   = {170502},
  year    = {2006},
  doi     = {10.1103/PhysRevLett.97.170502}
}

@misc{HighDimSchur,
  author        = {Burchardt, Adam and Fei, Jiani and Grinko, Dmitry and Larocca, Martin and Ozols, Maris and Timmerman, Sydney and Visnevskyi, Vladyslav},
  title         = {High-Dimensional Quantum {Schur} Transforms},
  year          = {2025},
  eprint        = {2509.22640},
  archivePrefix = {arXiv},
  primaryClass  = {quant-ph}
}

@article{ODonnellprimer,
  author  = {O'Donnell, Ryan and Wright, John},
  title   = {Learning and Testing Quantum States via Probabilistic Combinatorics and Representation Theory},
  journal = {Current Developments in Mathematics},
  volume  = {2021},
  number  = {1},
  pages   = {43--94},
  year    = {2021},
  doi     = {10.4310/CDM.2021.v2021.n1.a2}
}

@misc{QPA2,
  author        = {Li, Zhaoyi and Theil, Elias and Harrow, Aram W. and Chuang, Isaac L.},
  title         = {Quantum Purity Amplification for Arbitrary Eigenstates and Multiple Outputs},
  year          = {2026},
  eprint        = {2605.21570},
  archivePrefix = {arXiv},
  primaryClass  = {quant-ph}
}

@misc{KaziMarvian2026,
  author        = {Kazi, Sujay and Marvian, Iman},
  title         = {Optimal Linear-Rate Conversion of Unknown Mixed Qubit States via {SWAP} Tests},
  year          = {2026},
  eprint        = {2609.17311},
  archivePrefix = {arXiv},
  primaryClass  = {quant-ph}
}

@article{Krovi_2019,
   title={An efficient high dimensional quantum Schur transform},
   volume={3},
   ISSN={2521-327X},
   url={http://dx.doi.org/10.22331/q-2019-02-14-122},
   DOI={10.22331/q-2019-02-14-122},
   journal={Quantum},
   publisher={Verein zur Forderung des Open Access Publizierens in den Quantenwissenschaften},
   author={Krovi, Hari},
   year={2019},
   month=Feb, pages={122} }

@inproceedings{weakchilds,
  title       = {Weak Fourier--Schur Sampling, the Hidden Subgroup Problem, and the Quantum Collision Problem},
  author      = {Childs, Andrew M. and Harrow, Aram W. and Wocjan, Pawel},
  booktitle   = {Proceedings of the 24th International Symposium on Theoretical Aspects of Computer Science (STACS 2007)},
  series      = {Lecture Notes in Computer Science},
  volume      = {4393},
  pages       = {598--609},
  publisher   = {Springer},
  year        = {2007},
  doi         = {10.1007/978-3-540-70918-3_51},
  url         = {https://doi.org/10.1007/978-3-540-70918-3_51}
}

@article{Cirac,
  author  = {Cirac, J. I. and Ekert, A. K. and Macchiavello, C.},
  title   = {Optimal Purification of Single Qubits},
  journal = {Physical Review Letters},
  volume  = {82},
  number  = {21},
  pages   = {4344--4347},
  year    = {1999},
  doi     = {10.1103/PhysRevLett.82.4344}
}

@inproceedings{majority,
  author    = {Buhrman, Harry and Linden, Noah and Man\v{c}inska, Laura and Montanaro, Ashley and Ozols, Maris},
  title     = {Quantum Majority Vote},
  booktitle = {14th Innovations in Theoretical Computer Science Conference (ITCS 2023)},
  series    = {Leibniz International Proceedings in Informatics (LIPIcs)},
  volume    = {251},
  pages     = {29:1--29:23},
  year      = {2023},
  doi       = {10.4230/LIPIcs.ITCS.2023.29}
}

@misc{cervero2024memory,
  author        = {Cervero-Martin, Enrique and Man\v{c}inska, Laura and Theil, Elias},
  title         = {A Memory and Gate Efficient Algorithm for Unitary Mixed {Schur} Sampling},
  year          = {2024},
  eprint        = {2410.15793},
  archivePrefix = {arXiv},
  primaryClass  = {quant-ph}
}

@article{jansontail,
  author  = {Janson, Svante},
  title   = {Tail Bounds for Sums of Geometric and Exponential Variables},
  journal = {Statistics \& Probability Letters},
  volume  = {135},
  pages   = {1--6},
  year    = {2018},
  doi     = {10.1016/j.spl.2017.11.017}
}

@book{SaganSymmetricGroup,
  author    = {Sagan, Bruce E.},
  title     = {The Symmetric Group: Representations, Combinatorial Algorithms, and Symmetric Functions},
  edition   = {2},
  series    = {Graduate Texts in Mathematics},
  volume    = {203},
  publisher = {Springer},
  year      = {2001},
  doi       = {10.1007/978-1-4757-6804-6}
}

@article{TomographyHarrow,
  author  = {Haah, Jeongwan and Harrow, Aram W. and Ji, Zhengfeng and Wu, Xiaodi and Yu, Nengkun},
  title   = {Sample-Optimal Tomography of Quantum States},
  journal = {IEEE Transactions on Information Theory},
  volume  = {63},
  number  = {9},
  pages   = {5628--5641},
  year    = {2017},
  doi     = {10.1109/TIT.2017.2719044}
}

@misc{cervero2023weakschursamplinglog,
  author        = {Cervero, Enrique and Man\v{c}inska, Laura},
  title         = {Weak {Schur} Sampling with Logarithmic Quantum Memory},
  year          = {2023},
  eprint        = {2309.11947},
  archivePrefix = {arXiv},
  primaryClass  = {quant-ph}
}

@article{Childs2025Streaming,
   title={Streaming quantum state purification},
   volume={9},
   ISSN={2521-327X},
   url={http://dx.doi.org/10.22331/q-2025-01-21-1603},
   DOI={10.22331/q-2025-01-21-1603},
   journal={Quantum},
   publisher={Verein zur Forderung des Open Access Publizierens in den Quantenwissenschaften},
   author={Childs, Andrew M. and Fu, Honghao and Leung, Debbie and Li, Zhi and Ozols, Maris and Vyas, Vedang},
   year={2025},
   month=Jan, pages={1603} }

@misc{Grier2025Streaming,
      title={Streaming quantum state purification for general mixed states}, 
      author={Daniel Grier and Debbie Leung and Zhi Li and Hakop Pashayan and Luke Schaeffer},
      year={2025},
      eprint={2503.22644},
      archivePrefix={arXiv},
      primaryClass={quant-ph},
      url={https://arxiv.org/abs/2503.22644}, 
}

\clearpage
\onecolumngrid
\appendix
\renewcommand{\ifmaintext}[1]{}
\begin{center}
    {\large\bfseries APPENDICES}
\end{center}

\vspace{0.5em}

\appendixentry
    {A}
    {Concepts}
    {App:Concepts}
    {We collect representation-theoretic facts and properties of
    antisymmetric measurements used throughout the paper.}

\appendixentry
    {B}
    {Analysis of Algorithm 1}
    {App:Algo 1}
    {We analyze the subset-search algorithm and bound the number of
    antisymmetric measurements required for unitary Schur sampling.}

\appendixentry
    {C}
    {Growing Antisymmetric Subsystems}
    {App:probabgrowing singlets}
    {We derive the probability of extending the current antisymmetric
    subsystem by one qudit.}

\appendixentry
    {D}
    {Algorithm 2: Proof of Convergence}
    {App:convergence2}
    {We prove that the growth procedure successfully traverses the
    Young diagram with high probability.}

\appendixentry
    {E}
    {Number of Antisymmetric Measurements in Algorithm 2}
    {App:Number of rounds}
    {We bound the total number of antisymmetric measurements required
    by Algorithm~2.}

\appendixentry
    {F}
    {Antisymmetric Test Using SWAP Tests}
    {App:SWAP}
    {We show how the antisymmetric measurements used by the algorithm
    can be implemented using pairwise SWAP tests.}

\appendixentry
    {G}
    {Probability Facts}
    {App:probfacts}
    {We collect the stochastic-domination and geometric-concentration
    results used in the convergence analysis.}

\newpage
\section{Concepts} \label{App:Concepts}

\subsection{Antisymmetric measurements} \label{AppAsub: Antisym}

Throughout the paper we repeatedly use several elementary properties of
antisymmetric measurements. For completeness, we collect them here.

\begin{enumerate}
    \item  For a subset \(S=\{i_1,\ldots,i_k\}\) of qudits, the projector onto the
    fully antisymmetric subspace is
    \begin{align}
        \Pi^-_{S}
        =
        \frac{1}{k!}
        \sum_{\pi\in\mathbb S_k}
        \operatorname{sgn}(\pi)\,
        P_\pi,
    \end{align}
    where \(P_\pi\) denotes the permutation operator corresponding to
    \(\pi\).

    \item    The projector \(\Pi^-_S\) commutes with the collective action of
    \(\mathrm{SU}(d)\),
    \begin{align}
        [\Pi^-_S,U^{\otimes k}]=0,
        \qquad
        \forall\,U\in\mathrm{SU}(d).
    \end{align}
    Consequently, by Schur-Weyl duality,
    \begin{align}
        \Pi^-_S
        =
       \bigoplus_{\lambda \vdash n} \mathbb I_{\mathcal Q_\lambda}
        \otimes
        \pi_\lambda(S).
    \end{align}

    \item Every transposition acting on qudits contained in \(S\) acts as
    \(-1\) on the antisymmetric subspace. Equivalently,
    \begin{align}
        P_{(i,j)}\Pi^-_S
        =
        \Pi^-_SP_{(i,j)}
        =
        -\Pi^-_S,
        \qquad
        i,j\in S.
    \end{align}

    \item Let
    \(
    S=\{1,\ldots,k\}
    \)
    and let \(r\notin S\). The projection, \(M_{r}\), to the antisymmetric subspace of qudits \(S \cup \{r\}\), used throughout the growth subroutine satisfies,
    \begin{align}
      &M_r = \Pi^-_{S \cup \{r\}}
        =
        \frac{1}{k+1}
        \Pi^-_{S}
        \left(
        \mathbb I-\sum_{i=1}^{k}P_{(i,r)}
        \right)
        \\
        &=
        \frac{1}{k+1}
        \left(
        \mathbb I-\sum_{i=1}^{k}P_{(i,r)}
        \right)
        \Pi^-_{S}.
    \end{align}

    \begin{proof}
Recall that
\begin{align}
  \Pi^-_{S \cup \{r\}}
=
\frac{1}{(k+1)!}
\sum_{\pi\in \mathbb{S}_{k+1}}
\operatorname{sgn}(\pi)\,P_\pi .
\end{align}
Every permutation \(\pi\in \mathbb S_{k+1}\) can be written uniquely as
\begin{align}
\pi=\tau_i\sigma,
\end{align}
where \(\sigma\in \mathbb S_k\) permutes the first \(k\) qudits and
\(\tau_i=(i,r)\) for \(i=1,\ldots,k\), while
\(\tau_{k+1}=I\). Moreover,
\begin{align}
\operatorname{sgn}(\tau_i)=
\begin{cases}
-1,&i\le k,\\
1,&i=k+1.
\end{cases}
\end{align}
Hence

\begin{align}
\Pi^-_{1,\ldots,k,r}
&=
\frac1{k+1}
\left(
\mathbb I-\sum_{i=1}^{k}P_{(i,r)}
\right)
\frac1{k!}
\sum_{\sigma\in \mathbb S_k}
\operatorname{sgn}(\sigma)P_\sigma\\
&=
\frac1{k+1}
\left(
\mathbb I-\sum_{i=1}^{k}P_{(i,r)}
\right)
\Pi^-_{1,\ldots,k}.
\end{align}

Finally, since $\mathbb I -\sum_{i=1}^{k}P_{(i,r)}$ is invariant over permutations in $\mathbb{S}_k$, it commutes with \(\Pi^-_S\) which a linear combination of permutations in $\mathbb{S}_k$. 
\end{proof}

\end{enumerate}

\subsection{Reversibility of antisymmetric measurements and other \texorpdfstring{$\mathrm{SU}(d)$}{SU(d)}-invariant measurements}\label{AppAsubsec:Reversibility}

We now show that permutation twirling restores the original
permutation-invariant state  after any sequence of antisymmetric measurements (See also \cite{qubitSWAPUSS}). Consider a permutation-invariant state, $\sigma_{\lambda}=\omega_{\lambda} \otimes \mathbb{I}_{m(n,\lambda)}/m(n,\lambda)$,
 supported entirely on the \(\lambda\)-isotypic component, where \(\omega_{\lambda} \)
is a state on the irrep subsystem \(\mathcal{Q}^d_{\lambda} \).
By Schur-Weyl duality and Schur's lemma, any permutation-invariant state on sector \(\lambda\) takes such a form, with the multiplicity subsystem, \(\mathcal{P}_{\lambda}\), in maximally
mixed state. Next, observe that any antisymmetric measurement,
\(\Pi^-_{S} \),
on a subset, \(S\subseteq [n]\), commutes with $U^{\otimes n}, \forall U \in \mathrm{SU}(d)$, and therefore acts trivially on all $\mathrm{SU}(d)$ irrep subsystems \(\mathcal{Q}^d_{\lambda}\) , 
\begin{align}
\Pi^-_{S}
= \bigoplus_{\lambda \vdash n}\mathbb{I}_{d_{\lambda}}
\otimes
\pi_{\lambda}(S),
\end{align}
where \(\pi_{\lambda}(S)\) acts only on the multiplicity subsystem
\(\mathcal{P}_{\lambda}\). More generally, any $\mathrm{SU}(d)$-invariant measurement defined by a set of measurement operators $\{M_r\}_{r}$, $[M_r, U^{\otimes n}] = 0$ for all $U \in \mathrm{SU}(d)$, satisfying the completeness relation $\sum_r M_r^\dag M_r = \mathbb{I}$, takes the 
form, 
\begin{align}
    M_r=\bigoplus_{\lambda \vdash n}(\mathbb{I}_{d_\lambda}\otimes \widetilde{M}_{r,\lambda}).\ 
\end{align}
Therefore, after any sequence of antisymmetric measurements (or for that matter, $\mathrm{SU}(d)$-invariant measurements), the state remains
unchanged on the irrep subsystem \(\mathcal{Q}^d_{\lambda}\). Thus, such a measurement maps \(\sigma_{\lambda}\) into the
 form, $\omega_{\lambda} \otimes \eta_{\lambda},$ for some state \(\eta_{\lambda}\) on the multiplicity subsystem.

By Schur's Lemma, applying the permutation twirling channel once more restores the maximally
mixed state on the multiplicity subsystem:
\begin{align}
\mathcal{T}\!\left(
\omega_{\lambda}\otimes \eta_{\lambda}
\right)
=
\omega_{\lambda}
\otimes
\frac{\mathbb{I}_{m(n,\lambda)}}{m(n,\lambda)}.
\end{align}
Hence permutation twirling restores the original
permutation-invariant state after any sequence of antisymmetric ($\mathrm{SU}(d)$-invariant) measurements.\\
In practice, it is not necessary to explicitly implement the permutation
twirling channel. Since the state is permutation-invariant, the outcome
probabilities of the antisymmetric measurements are unchanged if, instead of
twirling, we simply perform the measurements on uniformly random subsets of
qudits. Consequently, throughout the protocol, we may equivalently view the
randomness as arising either from permutation twirling or from the random
choice of measured qudits.\\

A version of this property holds true for more general PI states not necessarily restricted to any one isotypic component. After performing this measurement on any PI state $\sigma$, we can recover the original state simply by forgetting the outcome, and perform a permutation twirl, i.e.
\be
\mathcal{T}\left(\sum_{r} M_r \sigma M_r^\dag \right)=\sigma\ .
\ee 
To see this, note that according to Schur-Weyl duality,  $\{U^{\otimes n}: U\in\mathrm{SU}(d)\}$ spans the space of PI operators. Then, for $U\in\mathrm{SU}(d)$, 
\be
\Tr(U^{\otimes n}\sum_{r} M_r \sigma M_r^\dag)=\sum_{r} \Tr(U^{\otimes n}M_r^\dag M_r \sigma)=  \Tr(U^{\otimes n} \sigma) \ .
\ee
A permutation twirl maps \(\left(\sum_{r} M_r \sigma M_r^\dag \right)\) into a PI state without affecting these coefficients, thus it recovers \(\sigma\).

\subsection{Removing the maximal antisymmetric subsystem}\label{AppAsub:lemma 1}

\begin{lemma}\label{lem:canonicalprojection}\label{Lemma:repstateaftermeasurements}
Let \(\lambda\) be a Young diagram with column lengths
\(\lambda'_1,\ldots,\lambda'_r\), and let
\begin{align}
\ket{\phi_\lambda}
=
\bigotimes_{i=1}^r
\ket{\Lambda_{S_i}}
\end{align}
be the corresponding representative state, where \(S_i\) denotes the subset
of qudits corresponding to the \(i\)-th column and
\begin{align}
\ket{\Lambda_{S_i}}
:=
\ket{0 \wedge 1 \wedge \cdots \wedge (\lambda'_i-1)} .
\end{align}
Define
\begin{align}
M:=\Pi^-_{S_r}\cdots\Pi^-_{S_2}\Pi^-_{S_1}.
\end{align}
Then
\begin{align}\label{eq:16}
M\,
\mathcal T\!\left(
|\phi_\lambda\rangle\langle\phi_\lambda|
\right)
\,M^\dagger
\propto
|\phi_\lambda\rangle\langle\phi_\lambda|.
\end{align}
 Therefore, the sequence of antisymmetric measurements corresponding to the
columns of \(\lambda\) projects the twirled representative state onto
\(\ket{\phi_\lambda}\) with nonzero probability. Moreover, no antisymmetric subsystem containing more than \(\lambda'_1\) qudits can be prepared from the \(\lambda\)-isotypic component using \(\mathrm{SU}(d)\)-invariant operations.

\end{lemma}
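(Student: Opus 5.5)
We would prove the projection statement first for the first column and then iterate. Write $\mathcal T(|\phi_\lambda\rangle\langle\phi_\lambda|)=\frac{1}{n!}\sum_\pi P_\pi|\phi_\lambda\rangle\langle\phi_\lambda|P_\pi^\dagger$.

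Since $M$ is a product of operators, the sandwich $M\,\mathcal T(\cdot)M^\dagger$ is a positive sum of terms $MP_\pi|\phi_\lambda\rangle\langle\phi_\lambda|P_\pi^\dagger M^\dagger$. It therefore suffices to show that every vector $MP_\pi\ket{\phi_\lambda}$ is proportional to $\ket{\phi_\lambda}$. At least one of these vectors is nonzero, namely the one for $\pi=\mathrm{id}$, since $\Pi^-_{S_i}\ket{\Lambda_{S_i}}=\ket{\Lambda_{S_i}}$.

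To show proportionality, we use the key structural fact: the range of $M$ is contained in $\bigotimes_i\wedge^{\lambda'_i}(\mathbb C^d)$. The projectors $\Pi^-_{S_i}$ act on disjoint sets of qudits, so they commute and $M=\bigotimes_i\Pi^-_{S_i}$.

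Next expand $P_\pi\ket{\phi_\lambda}$ in the computational basis. Each basis string in its support has exactly $\lambda'_j$ copies of the letter $j-1$ for each relevant letter. Equivalently, the multiset of letters is fixed: letter $a$ appears $\#\{i:\lambda'_i>a\}=\lambda_{a+1}$ times.

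Applying $\bigotimes_i\Pi^-_{S_i}$ kills any string in which some column block $S_i$ contains a repeated letter. On a surviving string, block $S_i$ holds $\lambda'_i$ distinct letters. We then argue by a counting/greedy argument that the content of each block is forced to be exactly $\{0,\dots,\lambda'_i-1\}$. The letters available are $\lambda_1$ zeros, $\lambda_2$ ones, and so on. Letter $a$ must appear in $\lambda_{a+1}$ distinct blocks, and only blocks with $\lambda'_i\ge$ (number of distinct letters) can host it.

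We prove the forcing by induction on the largest letter $\ell-1$, where $\ell=\lambda'_1$. The letter $\ell-1$ appears $\lambda_\ell$ times. Each block of size $s$ containing distinct letters from $\{0,\dots,\ell-1\}$ has at most $s$ letters. A double-counting of the total $\sum_a\min(\cdot)$ then shows that each block of size $s$ must contain precisely the letters $0,\dots,s-1$. Concretely, the number of boxes in rows $\le a$ equals $\sum_i\min(\lambda'_i,a)$, and the number of letters $<a$ is exactly this quantity. Since block $i$ can hold at most $\min(\lambda'_i,a)$ distinct letters from $\{0,\dots,a-1\}$, equality must hold in every block for every $a$. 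This forces the block content.

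Then $\Pi^-_{S_i}$ maps any string with that content to $\pm\ket{\Lambda_{S_i}}$, up to normalization. Hence $MP_\pi\ket{\phi_\lambda}\propto\ket{\phi_\lambda}$, which gives Eq.~\eqref{eq:16}.

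For the final claim, let $|S|>\lambda'_1$ with $|S|\le d$. Take any $\mathrm{SU}(d)$-invariant operator $N$. By Schur–Weyl duality, $N$ is a linear combination of permutations, so $N\ket{\phi_\lambda}$ is a combination of $P_\pi\ket{\phi_\lambda}$, each of which has the letter multiset above. Applying $\Pi^-_S\otimes\mathbb I$ requires $|S|$ distinct letters on $S$, but only $\lambda'_1$ distinct letters exist in total. So $(\Pi^-_S\otimes\mathbb I)N\ket{\phi_\lambda}=0$, and likewise after twirling, which is again a mixture of permuted copies.

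To extend this from $\ket{\phi_\lambda}$ to arbitrary states in the $\lambda$ sector, we use the representative-state reduction of Sec.~\ref{subsec:reducingtorepstates}. That reduction gives $\sigma_\lambda=\mathcal W(\mathcal T(\phi_\lambda))$ with $\mathcal W$ built from operators commuting with all $\mathrm{SU}(d)$-invariant operators.

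The main obstacle is the combinatorial forcing step: showing that the distinctness constraints in every block pin the content uniquely. The inequality argument above is the part to write carefully, and it is also what yields the Pieri special case of Lemma~\ref{Lemma:removingcolumns}.
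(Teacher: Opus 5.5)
Your proposal is correct and follows essentially the paper's route. Both arguments work in the computational basis. Both use two facts: every permuted copy of $\ket{\phi_\lambda}$ carries the fixed letter multiset, with letter $a$ appearing $\lambda_{a+1}$ times, and antisymmetry on a block demands distinct letters. From these, both conclude that each block $S_i$ must carry exactly $\{0,\dots,\lambda'_i-1\}$, and that no subsystem larger than $\lambda'_1$ can be antisymmetric.

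The only difference is organizational. The paper forces the blocks one column at a time: it projects $S_1$, passes to the reduced support space $V_2$, and iterates. Your double count $\sum_i\min(\lambda'_i,a)=\lambda_1+\cdots+\lambda_a$ forces all blocks at once. It also gives directly the slightly stronger statement $MP_\pi\ket{\phi_\lambda}\propto\ket{\phi_\lambda}$ for every $\pi$.

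Two small corrections. First, the initial count should read $\lambda_j$ (not $\lambda'_j$) copies of letter $j-1$. Second, the $\mathcal W$-reduction only covers permutation-invariant states in the $\lambda$ sector. For a general $\rho$ supported there, add the bound $\rho\le n!\,\mathcal T(\rho)$, as in the proof of Lemma~\ref{Thm:largest antisymm}.
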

 
\color{black}

\begin{proof}
  First, note that \(M\, \mathcal T\!\left(\ket{\phi_\lambda}\bra{\phi_\lambda} \right) \,M^\dagger\ne 0\), since for the identity permutation in the twirl, \(M \ket{\phi_{\lambda}}=\ket{\phi_{\lambda}}\).
 The state \(\ket{\Lambda_{S_i}}\) is a superposition of computational basis states that contain each basis label \(0, \ldots \lambda'_{i}-1\) with multiplicity 1. By extention, the state \(\ket{\phi_{\lambda}}\) contains label \(i-1\) with multiplicity \(\lambda_{i}\), that is,
  \begin{equation}\label{eq:1}
    \ket{\phi_{\lambda}} \in \mathrm{span}\left\{P_{\pi}\prod_{i=1}^{\lambda_{1}'}\ket{i-1}^{\otimes \lambda_{i}}: \pi \in \mathbb{S}_{n}\right\}=:V_{1}.
  \end{equation}
  \(V_{1}\) is closed under permutations, therefore, the twirled state, \(\mathcal T\!\left(\ket{\phi_\lambda}\bra{\phi_\lambda} \right)\) is also supported on \(V_{1}\).

  Now examine \(\Pi^{-}_{S_{1}}\mathcal T\!\left(\ket{\phi_\lambda}\bra{\phi_\lambda} \right)\Pi^{-}_{S_{1}}\). Each element of the antisymmetric subspace, \(\Pi^{-}_{S_{1}}\), of the first \(\lambda_{1}'\) qudits is a superposition of computational basis states that contain exactly \(\lambda_{1}'\) different basis labels out of \(0,1,\ldots ,d-1\). On the other hand, the subspace \(V_{1}\) is spanned by basis states containing only the labels \(0,1,\ldots ,\lambda_{1}'-1\). On the first \(\lambda_{1}'\) qudits, any completely antisymmetric linear combination of basis states constructed out of these labels is proportional to \(\ket{\Lambda_{S_{1}}}\). Since we established that \(\Pi^{-}_{S_{1}}\mathcal T\!\left(\ket{\phi_\lambda}\bra{\phi_\lambda} \right)\Pi^{-}_{S_{1}}\neq 0\), the only possibility is that,
  \begin{equation}\label{eq:3}
    \Pi^{-}_{S_{1}}\mathcal T\!\left(\ket{\phi_\lambda}\bra{\phi_\lambda} \right)\Pi^{-}_{S_{1}}\propto \ket{\Lambda_{S_{1}}}\bra{\Lambda_{S_{1}}}\otimes \omega_{2},
  \end{equation}
  where,
  \begin{equation}\label{eq:4}
    \mathrm{supp}(\omega_{2})\subseteq \mathrm{span}\left\{P_{\pi}\prod_{i=1}^{\lambda_{2}'}\ket{i-1}^{\otimes \lambda_{i}-1}: \pi \in \mathbb{S}_{n-\lambda_{1}'}\right\}=:V_{2},
  \end{equation}
  as fixing the state on the first \(\lambda_{1}'\) qudits used up one instance of the labels \(0,1, \ldots \lambda_{1}'\). The subspace \(V_{2}\) of qudits \(S/S_{1}\) has similar structure to \(V_{1}\); the previous arguments apply also for the state \(\Pi^{-}_{S_{2}} \omega_{2} \Pi^{-}_{S_{2}}\) and so on. At the \(r\)-th antisymmetric projection we reach,
  \begin{equation}\label{eq:13}
    \Pi^{-}_{S_{r}} \omega_{r} \Pi^{-}_{S_{r}} \propto \ket{\Lambda_{S_{r}}}\bra{\Lambda_{S_{r}}},
  \end{equation}
  which proves Eq.~\eqref{eq:16}.
  
\color{black}
  Finally, every computational basis state in \(V\) contains only the \(\lambda'_1\) distinct basis labels $
0,1,\ldots,\lambda'_1-1.
$ A nonzero fully antisymmetric state on \(k\) qudits requires \(k\) distinct basis labels. Hence, for \(k>\lambda'_1\), \(V\) has zero overlap with any \(k\)-qudit antisymmetric subspace. Since \(V\) is closed under permutations, it is also preserved by \(\mathrm{SU}(d)\)-invariant operations.
 \end{proof}
\begin{lemma}\label{AppAsub:PurePermutationRegister}
  The representative state \(\ket{\phi_{\lambda}}\) is unentangled w.r.t. the tensor decomposition \(\mathcal{Q}^{d}_{\lambda}\otimes \mathcal{P}_{\lambda}\) of the \(\lambda\) isotypic sector,
  \begin{equation}\label{eq:5}
    \ket{\phi_{\lambda}}= \ket{\psi_{\lambda}}\otimes \ket{\eta_{\lambda}}.
  \end{equation}
\end{lemma}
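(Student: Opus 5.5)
The plan is to show that $\ket{\phi_\lambda}$ lies in a subspace of the form $\mathcal Q^d_\lambda\otimes\ket{\eta}$ for a single vector $\ket{\eta}\in\mathcal P_\lambda$. I will do this by exhibiting a rank-one operator on $\mathcal P_\lambda$ that fixes $\ket{\phi_\lambda}$.

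\emph{Step 1: $\ket{\phi_\lambda}$ lies in the $\lambda$-isotypic component.} Consider the orbit $\mathrm{Span}\{U^{\otimes n}\ket{\phi_\lambda}\}$. The vector $\ket{\phi_\lambda}$ is a weight vector of weight $\lambda$: label $i-1$ appears exactly $\lambda_i$ times. It is also annihilated by every raising operator $E_{ij}$ with $i<j$. To see this, note that raising a label inside an antisymmetric column block $\ket{0\wedge\cdots\wedge(\lambda'_c-1)}$ either produces a repeated label, which the wedge kills, or would need a label that is absent from that column. Because labels $0,\dots,\lambda'_c-1$ all already appear in column $c$, the first case always occurs. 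Hence $\ket{\phi_\lambda}$ is a highest-weight vector of weight $\lambda$, so it lies in the $\lambda$-isotypic component. The same conclusion also follows directly from Lemma~\ref{Lemma:repstateaftermeasurements}, through the support argument on $V_1$.

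\emph{Step 2: factorization.} Within $\mathcal H_\lambda=\mathcal Q^d_\lambda\otimes\mathcal P_\lambda$, let $\ket{h_\lambda}\in\mathcal Q^d_\lambda$ be the highest-weight vector, which is unique up to phase. The highest-weight vectors of weight $\lambda$ in $\mathcal H_\lambda$ are then exactly the vectors $\ket{h_\lambda}\otimes\ket{v}$ with $\ket{v}\in\mathcal P_\lambda$, because the raising operators and the torus act only on the first factor. Step~1 therefore gives $\ket{\phi_\lambda}=\ket{h_\lambda}\otimes\ket{\eta_\lambda}$ for some $\ket{\eta_\lambda}$, with $\ket{\psi_\lambda}:=\ket{h_\lambda}$.

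\emph{Alternative route via Young symmetrizers.} One could instead use $Y_{\lambda,T}=\mathbb I\otimes\tilde Y_{\lambda,T}$ with $\tilde Y$ of rank one. If $Y_{\lambda,T}\ket{\phi_\lambda}=c\ket{\phi_\lambda}$ with $c\neq 0$, then $\ket{\phi_\lambda}$ lies in $\mathcal Q^d_\lambda\otimes\mathrm{Im}\,\tilde Y$, which is a product space. This requires computing the action of the row sum $\sum_r P_r$ on $\ket{\phi_\lambda}$, which is messier.

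The main obstacle is verifying cleanly that each raising operator kills $\ket{\phi_\lambda}$, including the Leibniz-rule bookkeeping across the column blocks. I expect this to be short, because $E_{ij}^{(n)}$ acts as a sum of single-qudit terms and each column factor is killed individually.
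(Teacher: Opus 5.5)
Your proof is correct, but it takes a different route from the paper. The paper stays entirely within permutation operators. It invokes Lemma~\ref{Lemma:repstateaftermeasurements}, namely $M\,\mathcal T(\ket{\phi_\lambda}\bra{\phi_\lambda})\,M\propto\ket{\phi_\lambda}\bra{\phi_\lambda}$ with $M=\Pi^-_{S_r}\cdots\Pi^-_{S_1}$. It then writes the twirled state on the sector as $\mathrm{tr}_{\mathcal P_\lambda}(\ket{\phi_\lambda}\bra{\phi_\lambda})\otimes\mathbb I_{m(n,\lambda)}/m(n,\lambda)$ and $M$ as $\mathbb I\otimes M_\lambda$, and concludes by rank counting that both factors are rank one.

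You instead use highest-weight theory. You show that $\ket{\phi_\lambda}$ has weight $\lambda$ and is killed by every $E_{ij}$ with $i<j$. Your column-by-column Leibniz argument is right: each block either lacks label $j$, or contains $j$ and hence also $i<j$. You then use that the vectors of $\mathcal Q^d_\lambda\otimes\mathcal P_\lambda$ annihilated by all raising operators are exactly $\ket{h_\lambda}\otimes\mathcal P_\lambda$.

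Your route needs no prior lemma and gains two things. It actually proves that $\ket{\phi_\lambda}$ lies in the $\lambda$-isotypic component; the paper only asserts this and then relies on it when it writes the twirl as $\mathrm{tr}_{\mathcal P_\lambda}(\cdot)\otimes\mathbb I$. It also identifies $\ket{\psi_\lambda}$ as the highest-weight vector.

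The paper's route avoids Lie-algebra machinery and fits the measurement-based viewpoint. Its by-product is that $\ket{\eta_\lambda}$ spans the image of the column antisymmetrizer $M_\lambda$ on $\mathcal P_\lambda$. That fact is what gets reused later, in the Young-symmetrizer identification and in the uniqueness lemma for $(2,1^{k-1})$.

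Two small corrections. First, your aside that sector membership ``also follows directly'' from the support argument on $V_1$ is overstated. That argument only shows $\ket{\phi_\lambda}$ has weight $\lambda$, which is compatible with components in every sector $\mu$ dominating $\lambda$. Excluding those is precisely what your raising-operator computation does. Second, the paper's decomposition is with respect to $\mathrm{SU}(d)$, so say explicitly that the $E_{ij}$ lie in the complexified Lie algebra. Also note that, at fixed $n$, the $\mathrm{SU}(d)$ weight still determines $\lambda$ among partitions of $n$ with at most $d$ rows.

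Finally, your Young-symmetrizer alternative inverts the paper's logic: the paper derives the eigenvector property from this lemma. That route would therefore genuinely require computing $\sum_r P_r\ket{\phi_\lambda}$ on its own.
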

\begin{proof}
  According to Lemma~\ref{lem:canonicalprojection}, 
  \begin{equation}\label{eq:6}
    M\mathcal{T}(\ket{\phi_{\lambda}}\bra{\phi_{\lambda}})M=M(\mathrm{tr}_{\mathcal{P}_{\lambda}}(\ket{\phi_{\lambda}}\bra{\phi_{\lambda}}) \otimes \mathbb{I}_{m(n,\lambda)})M \propto  \ket{\phi_{\lambda}}\bra{\phi_{\lambda}}.
  \end{equation}
  The operator \(M=\Pi^-_{S_r}\cdots\Pi^-_{S_2}\Pi^-_{S_1}\) is an orthogonal projection since the antisymmetric projections, \(\Pi^{-}_{S_{i}}\), act on disjointed sets of qudits. Moreover, since \(M\) is an element of the algebra generated by permutations, according to the Schur-Weyl duality it takes the block-diagonal form,
  \begin{equation}\label{eq:7}
    M = \bigoplus_{\lambda \vdash n} \mathbb{I}_{d_{\lambda}} \otimes M_{\lambda},
  \end{equation}
  where \(M_{\lambda}\) are orthogonal projections. Therefore we have,
  \begin{equation}\label{eq:8}
    \mathrm{tr}_{\mathcal{P}_{\lambda}}(\ket{\phi_{\lambda}}\bra{\phi_{\lambda}}) \otimes M_{\lambda} \propto \ket{\phi_{\lambda}}\bra{\phi_{\lambda}}.
  \end{equation}
  The only way the LHS can be rank-1 is if both \(\mathrm{tr}_{\mathcal{P}_{\lambda}}(\ket{\phi_{\lambda}}\bra{\phi_{\lambda}})\) and \(M_{\lambda}\) are rank-1.
\end{proof}

\removingcolumns*

\begin{proof}
The second part of the statement follows directly from Lemma~\ref{lem:canonicalprojection}.  For the first part,
define \(M_{\overline{1}}= \Pi^{-}_{S_{r}}\Pi^{-}_{S_{r-1}}\ldots \Pi^{-}_{S_{2}}\).  Following from the proof of Lemma~\ref{lem:canonicalprojection},
  \begin{equation}\label{eq:9}
  \begin{aligned}
    \ket{\phi_{\lambda}}\bra{\phi_{\lambda}}&= \ket{\Lambda_{S_{1}}}\bra{\Lambda_{S_{1}}} \otimes \ket{\phi_{\lambda/\lambda_{1}'}}\bra{\phi_{\lambda/\lambda_{1}'}}\\
        &\propto M \mathcal{T}(\ket{\phi_{\lambda}}\bra{\phi_{\lambda}})M^{\dagger}\\
        &= \Pi^{-}_{S_{1}} M_{\overline{1}}\mathcal{T}(\ket{\phi_{\lambda}}\bra{\phi_{\lambda}}) M_{\overline{1}}\Pi^{-}_{S_{1}}\\
        &= \ket{\Lambda_{S_{1}}}\bra{\Lambda_{S_{1}}} \otimes M_{\overline{1}} \omega M_{\overline{1}}^{\dagger}.
  \end{aligned}
  \end{equation}
Since $\ket{\phi_{\lambda/\lambda_{1}'}}$ is supported on the $\lambda/\lambda'_{1}$ isotypic component of qudits \([n]/S_{1}\), so is $M_{\overline{1}}\omega M_{\overline{1}}$.
  Furthermore, following the argument in the proof of Lemma~\ref{AppAsub:PurePermutationRegister},
  \begin{equation}\label{eq:10}
    M_{\overline{1}}= \bigoplus_{\mu \vdash n- \abs{\lambda_{1}'}} \mathbb{I}_{d_{\mu}} \otimes M_{\overline{1}, \mu},
  \end{equation}
  where \(M_{\overline{1}, \mu}\) is an orthogonal projection to a subspace of \(\mathcal{P}_{\mu}\). 
  It is clear from Eq.~\eqref{eq:3}, that \(\omega\) is permutation invariant, therefore $M_{\overline{1}}\omega M_{\overline{1}} = \tilde{\omega} \otimes M_{\overline{1},\lambda/\lambda_1}$ on Schur–Weyl component \(\mathcal{Q}^{d}_{\lambda/\lambda'_{1}}\otimes \mathcal{P}_{\lambda/\lambda'_{1}}\).
  
  Because \(M_{\overline{1}}\) acts trivially on \(\mathcal{Q}^{d}_{\lambda/\lambda_{1}^{'}}\), it suffices to show that the component $\tilde{\omega}$ of 
  \begin{equation}\label{eq:21}
    \mathrm{Tr}_{S_{1}} \left( \Pi_{S_1}^- M_{\overline{1}}\mathcal{T}(\ket{\phi_{\lambda}}\bra{\phi_{\lambda}}) M_{\overline{1}} \Pi_{S_1}^-\right) = M_{\overline{1}} \omega M_{\overline{1}}^{\dagger},
  \end{equation}
  on \(\mathcal{Q}^{d}_{\lambda/\lambda_{1}^{'}}\) is proportional to that of  \(\mathcal{T}\!\left(\ket{\phi_{\lambda \setminus \lambda'_1}} \bra{\phi_{\lambda \setminus \lambda'_1}} \right)\) (and \(\ket{\phi_{\lambda \setminus \lambda'_1}} \bra{\phi_{\lambda \setminus \lambda'_1}}\)). Applying Lemma~\ref{AppAsub:PurePermutationRegister} to Eq.~\eqref{eq:9},
  \begin{equation}\label{eq:11}
    \ket{\phi_{\lambda/\lambda_{1}'}}\bra{\phi_{\lambda/\lambda_{1}'}}=\ket{\psi_{\lambda/\lambda_{1}'}}\bra{\psi_{\lambda/\lambda_{1}'}}\otimes \ket{\eta_{\lambda/\lambda_{1}'}}\bra{\eta_{\lambda/\lambda_{1}'}} \propto   \tilde{\omega} \otimes M_{\overline{1}, \lambda/\lambda_{1}}.
  \end{equation}
  This equality can be fulfilled only if \(\tilde{\omega} \propto \ket{\psi_{\lambda/\lambda_{1}'}}\bra{\psi_{\lambda/\lambda_{1}'}}\) (and \(M_{\overline{1}, \lambda/\lambda_{1}} \propto \ket{\eta_{\lambda/\lambda_{1}'}}\bra{\eta_{\lambda/\lambda_{1}'}}\)) which proves our statement.

\end{proof}

\largestantisymm*
\begin{proof}
    We first prove this for PI states, and then lift the proof for all states restricted to the $\lambda$-isotypic component. Suppose \(\tau_{S_1}\) is nonzero and supported on the fully antisymmetric subspace,
  \begin{equation}\label{eq:17}
    \Pi^{-}_{S_{1}} M \sigma_{\lambda} M^{\dagger}\Pi^{-\dagger}_{S_{1}}\neq 0.
  \end{equation}
 Recall from Sec.~\ref{subsec:reducingtorepstates} that every PI state \(\sigma_\lambda\) supported entirely on the \(\lambda\)-isotypic component can be written as $\sigma_\lambda=\mathcal W\!\left(\mathcal T(\Phi_\lambda)\right)$. Since both \(M\) and \(\Pi^-_{S_1}\) are
\(\mathrm{SU}(d)\)-invariant,
\begin{align}
&\Pi^-_{S_1} M \sigma_\lambda M^\dagger \Pi^-_{S_1}
=
\mathcal W\!\left(
\Pi^-_{S_1} M
\mathcal T(\Phi_\lambda)
M^\dagger \Pi^-_{S_1}
\right).
\label{Eqn:maxantisymW}
\end{align}
For \(|S_1|>\lambda'_1\), the expression inside \(\mathcal W\) vanishes by
Lemma~\ref{Lemma:removingcolumns} contradicting the assumption that \(\tau_{S_1}\) is nonzero and supported on the fully antisymmetric subspace.

For the second statement, suppose that \(|S_1|=\lambda'_1\). Since \(\Pi^-_{S_1}\) is \(\mathrm{SU}(d)\)-invariant, it commutes with \(\mathcal W\). Hence, by Lemma~\ref{Lemma:removingcolumns},
\begin{align}
\Pi^{-}_{S_1}
\, \sigma_\lambda \,
\Pi^{-}_{S_1}
=
\mathcal W\!\left(
\Pi^{-}_{S_1}
\mathcal T(\Phi_\lambda)
\Pi^{-}_{S_1}
\right)
\propto
\mathcal W\!\left(
\ket{\Lambda_{S_1}}\bra{\Lambda_{S_1}}
\otimes
\mathcal T\!\left(
\ket{\phi_{\lambda\setminus\lambda'_1}}
\bra{\phi_{\lambda\setminus\lambda'_1}}
\right)
\right).
\end{align}
The state inside \(\mathcal W\) is supported entirely on the fully antisymmetric subspace of \(S_1\), while the complementary subsystem is supported on the \(\lambda\setminus\lambda'_1\) isotypic component. Since \(\mathcal W\) is a probabilistic mixture of PI unitaries, each of which commutes with permutations acting within \(S_1\) and within \(\overline{S_1}\), these two symmetry sectors are preserved. Hence, after applying \(\mathcal W\), the reduced state on \(\overline{S_1}\) remains supported entirely on the \(\lambda\setminus\lambda'_1\) isotypic component.\\

To extend the claim to an arbitrary state \(\sigma_{\lambda}\) supported on the \(\lambda\)-isotypic component, consider its permutation twirl \(\mathcal T(\sigma_{\lambda})\). Now let \(\sigma_\lambda\) be arbitrary. If
\begin{align}
\Pi^-_{S_1}M\sigma_\lambda M^\dagger\Pi^-_{S_1}\neq0,
\end{align}
then, since the identity permutation is one of the terms in the twirl and every term is positive semidefinite,
\begin{align}
\Pi^-_{S_1}M\mathcal T(\sigma_\lambda)M^\dagger\Pi^-_{S_1}\neq0.
\end{align}
But \(\mathcal T(\sigma_\lambda)\) is PI and remains supported on the \(\lambda\)-isotypic component, contradicting the PI case when \(|S_1|>\lambda'_1\).
By the permutation-invariant case, $\mathrm{Tr}_{S_{1}} (\Pi^-_{S_1}\mathcal T(\sigma_{\lambda})\Pi^-_{S_1})$
has support only on the residual isotypic component \(\lambda\setminus\lambda'_1\) . But
\begin{align}
  \mathrm{Tr}_{S_{1}}\left( \Pi^-_{S_1}\mathcal T(\sigma_{\lambda})\Pi^-_{S_1} \right)
=
\frac1{n!}\sum_{\pi\in \mathbb{S}_n}
  \mathrm{Tr}_{S_{1}}\left( \Pi^-_{S_1}P_\pi\sigma_{\lambda} P_\pi^\dagger\Pi^-_{S_1} \right),
\end{align}
and every summand is positive semidefinite. Since the identity permutation contributes $\Pi^-_{S_1}\sigma_{\lambda}\Pi^-_{S_1}/n!,$ the latter cannot have support in any other residual isotypic component, or else the twirled state would as well. Hence the remaining system is supported entirely on \(\lambda\setminus\lambda'_1\) .
\end{proof}

\subsection{Representative states are eigenvectors of Young symmetrizers}\label{AppAsub:YoungSymmetrizers}

In this section, we show that the representative state
$\ket{\phi_\lambda}$ is an eigenvector with nonzero eigenvalue of the
Young symmetrizer associated with the canonical tableau $T$. As a
consequence, the multiplicity state $\ket{\eta_\lambda}$ appearing in
the Schur-Weyl decomposition of $\ket{\phi_\lambda}$ coincides, up to
a phase, with the distinguished state $\ket{y(T)}$ selected by the
Young symmetrizer.\\

Let $T$ be the canonical tableau obtained by numbering the boxes of
$\lambda$ down each column and then proceeding from left to right. The
corresponding Young symmetrizer is
\begin{align}
Y_{\lambda,T}
:=
\frac{\dim \mathcal P_\lambda}{n!}
\left(
\sum_{c\in C_T}
\operatorname{sgn}(c)\,P_c
\right)
\left(
\sum_{r\in R_T}
P_r
\right),
\label{Eq:YoungSymmetrizer}
\end{align}
where $C_T$ and $R_T$ denote the column and row stabilizers of $T$,
respectively. Here \(C_T\) consists of permutations that act within the columns of \(T\), while \(R_T\) consists of permutations that act within its rows. Thus, the first factor antisymmetrizes within each column and the second symmetrizes within each row; the prefactor is chosen so that \(Y_{\lambda,T}\) is a projector.\\

Let $S_1,\ldots,S_{\lambda_1}$ denote the subsets of qudits associated
with the columns of $T$. Since the columns are disjoint, the column
antisymmetrizer factorizes into the product of the corresponding
antisymmetric projectors. Hence,
\begin{align}
Y_{\lambda,T}
\propto
\left(
\prod_{j=1}^{\lambda_1}\Pi^-_{S_j}
\right)
\left(
\sum_{r\in R_T}P(r)
\right).
\label{Eq:YoungSymmetrizerProjectors}
\end{align}

According to Lemma~\ref{lem:canonicalprojection},
\begin{align}
\left(
\prod_{j=1}^{\lambda_1}\Pi^-_{S_j}
\right)
\mathcal T
\left(
\ket{\phi_\lambda}\bra{\phi_\lambda}
\right)
\left(
\prod_{j=1}^{\lambda_1}\Pi^-_{S_j}
\right)
\propto
\ket{\phi_\lambda}\bra{\phi_\lambda}.
\label{Eq:ProjectedTwirl}
\end{align}
Expanding the permutation twirl, the left-hand side is proportional to,
\begin{align}
\sum_{\pi\in\mathbb S_n}
\left(
\prod_{j=1}^{\lambda_1}\Pi^-_{S_j}
\right)
P_\pi
\ket{\phi_\lambda}\bra{\phi_\lambda}
P_\pi^\dagger
\left(
\prod_{j=1}^{\lambda_1}\Pi^-_{S_j}
\right).
\end{align}
Every term in this sum is positive semidefinite. Since the sum has support only on \(\operatorname{span}\{\ket{\phi_\lambda}\}\) and each summand is positive semidefinite, no summand can have support outside this subspace; otherwise the sum would also have nonzero support there. It follows that
\begin{align}
\left(
\prod_{j=1}^{\lambda_1}\Pi^-_{S_j}
\right)
P_\pi\ket{\phi_\lambda}
\propto
\ket{\phi_\lambda},
\qquad
\forall\,\pi\in\mathbb S_n,
\label{Eq:ProjectedPermutation}
\end{align}
where the proportionality constant is allowed to vanish.

We now consider the action of the Young symmetrizer on
$\ket{\phi_\lambda}$. Using
Eq.~\eqref{Eq:YoungSymmetrizerProjectors},
\begin{align}
Y_{\lambda,T}\ket{\phi_\lambda}
&\propto
\left(
\prod_{j=1}^{\lambda_1}\Pi^-_{S_j}
\right)
\sum_{r\in R_T}
P(r)\ket{\phi_\lambda}
\nonumber\\
&=
\sum_{r\in R_T}
\left(
\prod_{j=1}^{\lambda_1}\Pi^-_{S_j}
\right)
P(r)\ket{\phi_\lambda}
\nonumber\\
&\propto
\ket{\phi_\lambda},
\label{Eq:YoungEigenvectorPossiblyZero}
\end{align}
where the final relation follows from
Eq.~\eqref{Eq:ProjectedPermutation}. Thus,
$\ket{\phi_\lambda}$ is an eigenvector of $Y_{\lambda,T}$, possibly
with zero eigenvalue. It remains to show that
$Y_{\lambda,T}\ket{\phi_\lambda}\neq0$.

Recall from Lemma~\ref{AppAsub:PurePermutationRegister} that the representative state factorizes in
the Schur-Weyl decomposition as
\begin{align}
\ket{\phi_\lambda}
=
\ket{\psi_\lambda}\otimes\ket{\eta_\lambda},
\label{Eq:RepresentativeFactorization}
\end{align}
where
$\ket{\psi_\lambda}\in\mathcal Q_\lambda^d$ and
$\ket{\eta_\lambda}\in\mathcal P_\lambda$.
Permutations act trivially on $\mathcal Q_\lambda^d$ and irreducibly
on $\mathcal P_\lambda$. Therefore,
\begin{align}
P_\pi\ket{\phi_\lambda}
=
\ket{\psi_\lambda}\otimes
p_\lambda(\pi)\ket{\eta_\lambda},
\end{align}
where $p_\lambda$ denotes the irreducible representation of
$\mathbb S_n$ carried by $\mathcal P_\lambda$.

Since $\mathcal P_\lambda$ is irreducible, the orbit of any nonzero
vector spans the entire representation. In particular,
\begin{align}
\operatorname{span}
\left\{
p_\lambda(\pi)\ket{\eta_\lambda}
:
\pi\in\mathbb S_n
\right\}
=
\mathcal P_\lambda.
\label{Eq:SpechtOrbit}
\end{align}
Equation~\eqref{Eq:ProjectedPermutation} therefore implies, by
linearity, that the Young symmetrizer maps
$\ket{\psi_\lambda}\otimes\mathcal P_\lambda$ into
\begin{align}
\ket{\psi_\lambda}
\otimes
\operatorname{span}\{\ket{\eta_\lambda}\}.
\label{Eq:YoungImageEta}
\end{align}

Moreover, since $Y_{\lambda,T}$ is a linear combination of
permutations,
\begin{align}
[Y_{\lambda,T},U^{\otimes n}]=0,
\qquad
\forall\,U\in\mathrm{SU}(d).
\end{align}
By Schur-Weyl duality, restricted to the $\lambda$-isotypic component
it therefore takes the form
\begin{align}
Y_{\lambda,T}
=
\mathbb I_{\mathcal Q_\lambda^d}
\otimes
\widetilde Y_{\lambda,T},
\end{align}
where $\widetilde Y_{\lambda,T}$ acts on $\mathcal P_\lambda$.
Combining this with Eq.~\eqref{Eq:YoungImageEta}, the image of
$\widetilde Y_{\lambda,T}$ is contained in
$\operatorname{span}\{\ket{\eta_\lambda}\}$. Hence,
\begin{align}
\widetilde Y_{\lambda,T}
=
\ket{\eta_\lambda}\bra{a_\lambda},
\end{align}
for some
$\ket{a_\lambda}\in\mathcal P_\lambda$,
not necessarily normalized. Therefore
\begin{align}
Y_{\lambda,T}
=
\mathbb I_{\mathcal Q_\lambda^d}
\otimes
\ket{\eta_\lambda}\bra{a_\lambda}.
\label{Eq:YoungRankOne}
\end{align}

It follows that
\begin{align}
Y_{\lambda,T}\ket{\phi_\lambda}
&=
\left(
\mathbb I_{\mathcal Q_\lambda^d}
\otimes
\ket{\eta_\lambda}\bra{a_\lambda}
\right)
\ket{\psi_\lambda}\otimes \ket{\eta_\lambda}
\nonumber\\
&=
\braket{a_\lambda|\eta_\lambda}
\ket{\phi_\lambda}.
\label{Eq:YoungEigenvalue}
\end{align}
Suppose, for contradiction, that
$Y_{\lambda,T}\ket{\phi_\lambda}=0$. Then
\begin{align}
\braket{a_\lambda|\eta_\lambda}=0.
\end{align}
Equation~\eqref{Eq:YoungRankOne} would then imply
\begin{align}
Y_{\lambda,T}^2
&=
\mathbb I_{\mathcal Q_\lambda^d}
\otimes
\ket{\eta_\lambda}
\braket{a_\lambda|\eta_\lambda}
\bra{a_\lambda}
\nonumber\\
&=0.
\end{align}
However, Young symmetrizers are idempotent \cite{harrowthesis}, $Y_{\lambda,T}^2
=
Y_{\lambda,T},$
and therefore
\begin{align}
Y_{\lambda,T}\ket{\phi_\lambda}
=
\braket{a_\lambda|\eta_\lambda}
\ket{\phi_\lambda}
\neq0.
\end{align}
Thus $\ket{\phi_\lambda}$ is an eigenvector of the Young symmetrizer
associated with the canonical tableau $T$, with nonzero eigenvalue.

Finally, the image of the Young symmetrizer $Y_{\lambda,T}$ defines
the distinguished one-dimensional direction associated with $T$ in
the multiplicity space $\mathcal P_\lambda$. Denoting the normalized
state spanning this direction by $\ket{y(T)}$, we have
\begin{align}
\operatorname{Im}\widetilde Y_{\lambda,T}
=
\operatorname{span}\{\ket{y(T)}\}.
\end{align}
On the other hand, Eq.~\eqref{Eq:YoungRankOne} gives
\begin{align}
\operatorname{Im}\widetilde Y_{\lambda,T}
=
\operatorname{span}\{\ket{\eta_\lambda}\}.
\end{align}
Hence,
\begin{align}
\ket{\eta_\lambda}
\propto
\ket{y(T)}.
\end{align}

\subsection{Commutation of Nested Isotypic Projectors} \label{AppAsubsec:nestedcommutation}

\begin{lemma} \label{Lemma:nestedcommutator}
Consider an \(N\)-qudit system with qudits labelled by \(\{1,\ldots,N\}\), and let \(M\subseteq\{1,\ldots,N\}\) denote a subset of the qudits and $M' = N\setminus M$. Let \(\Pi_\lambda\) denote the projector onto the \(\lambda\)-isotypic component of the \(N\)-qudit system, and let \(\Pi_\mu\) denote the projector onto the \(\mu\)-isotypic component of the qudits in \(M\). Then
\begin{align}
\left[
\Pi_\mu\otimes\mathbb I_{M'},
\Pi_\lambda
\right]
=0.
\end{align}
Thus $\Pi_\mu\otimes\mathbb I_{M'}\Pi_\lambda$ is a projector as well.
\end{lemma}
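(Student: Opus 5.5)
The plan is to show that both projectors lie in mutually commuting algebras, using only Schur--Weyl duality as recalled in the Setup.

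First, \(\Pi_\lambda\) is the central projector of the \(N\)-qudit permutation algebra \(\mathcal A_N=\mathrm{span}\{P_\pi:\pi\in\mathbb S_N\}\). Concretely, it can be written as the character projector
\begin{align}
\Pi_\lambda=\frac{\dim\mathcal P_\lambda}{N!}\sum_{\pi\in\mathbb S_N}\chi_\lambda(\pi)P_\pi .
\end{align}
Equivalently, by Schur--Weyl duality, \(\Pi_\lambda\) lies in the span of \(\{U^{\otimes N}:U\in\mathrm{SU}(d)\}\). The reason is that the image of \(\mathrm{SU}(d)\) generates the commutant of \(\mathcal A_N\), and \(\Pi_\lambda=\mathbb I_{\mathcal Q_\lambda^d}\otimes\mathbb I_{\mathcal P_\lambda}\) belongs to both algebras.

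Second, \(\Pi_\mu\otimes\mathbb I_{M'}\) commutes with every \(U^{\otimes N}=U^{\otimes |M|}\otimes U^{\otimes |M'|}\). This holds because \(\Pi_\mu\) is the isotypic projector for the \(\mathrm{SU}(d)\) action on \(M\), so it commutes with \(U^{\otimes|M|}\), while the \(M'\) factor is the identity. Since \(\Pi_\lambda\) is a linear combination of the operators \(U^{\otimes N}\), the commutator vanishes.

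As a cross-check, one can instead use the permutation picture. Here \(\Pi_\mu\) is a combination of permutations of \(M\), i.e.\ an element of \(\mathcal A_{|M|}\subseteq\mathcal A_N\), while \(\Pi_\lambda\) is central in \(\mathcal A_N\). Commutation then follows immediately, and this route needs no appeal to the spanning property.

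The only step needing care is justifying that \(\Pi_\lambda\) is central in \(\mathcal A_N\), or equivalently that it is a combination of the \(U^{\otimes N}\). The centrality route is cleanest: \(\Pi_\lambda\) acts as the identity on each \(\mathcal H_\lambda\) block, so it commutes with all \(P_\pi\), and by the double commutant theorem it also lies in \(\mathcal A_N\).

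Finally, the product of two commuting orthogonal projectors is Hermitian and idempotent, since \((\Pi_\mu\Pi_\lambda)^2=\Pi_\mu^2\Pi_\lambda^2=\Pi_\mu\Pi_\lambda\). Hence \((\Pi_\mu\otimes\mathbb I_{M'})\Pi_\lambda\) is a projector, which gives the final claim.
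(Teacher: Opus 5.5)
Your argument is correct, and your ``cross-check'' is exactly the paper's proof. The paper notes that $\Pi_\lambda$ commutes with every $P_\pi$, $\pi\in\mathbb S_N$, and that $\Pi_\mu\otimes\mathbb I_{M'}$ is a linear combination of $P_\sigma$ with $\sigma\in\mathbb S_M\subseteq\mathbb S_N$; it then concludes by linearity.

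Your main route is the mirror image of this, on the other side of Schur--Weyl duality. There the easy input is that $\Pi_\mu\otimes\mathbb I_{M'}$ commutes with $U^{\otimes N}=U^{\otimes|M|}\otimes U^{\otimes|M'|}$. The substantive input is $\Pi_\lambda\in\mathrm{span}\{U^{\otimes N}:U\in\mathrm{SU}(d)\}$, i.e.\ that the commutant of $\mathcal A_N$ is spanned by the $U^{\otimes N}$. In the paper's route the substantive input sits on $M$ instead, namely $\Pi_\mu\in\mathrm{span}\{P_\sigma:\sigma\in\mathbb S_M\}$. The paper takes this from the Setup's remark that $\mathrm{SU}(d)$-invariant operators are combinations of permutations, but it also follows from the central-idempotent formula you wrote, applied to $\mathbb S_{|M|}$.

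So each proof spends one membership fact, on opposite sides. The paper's version is purely group-algebraic: a combination of permutations of $M$ commutes with anything that commutes with all of $\mathbb S_N$. Yours makes visible that only the $\mathrm{SU}(d)$-invariance of $\Pi_\mu\otimes\mathbb I_{M'}$ matters. In fact you can drop the spanning step altogether: by Schur's lemma, any operator commuting with all $U^{\otimes N}$ maps each isotypic component into itself, and therefore commutes with $\Pi_\lambda$.

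Two minor points. First, your claim that the permutation route ``needs no appeal to the spanning property'' holds only if $\Pi_\mu\in\mathcal A_{|M|}$ is justified by the character formula rather than by Schur--Weyl duality. Second, in that route $\Pi_\lambda$ only has to commute with permutations, so placing it inside $\mathcal A_N$ via the double commutant theorem is unnecessary. Your explicit check that the product of the two commuting projectors is Hermitian and idempotent also supplies the final sentence of the lemma, which the paper leaves implicit.
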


\begin{proof}
The projector \(\Pi_\lambda\) is permutation invariant and therefore commutes with every permutation \(P_\pi\), \(\pi\in\mathbb S_N\). On the other hand,
\(\Pi_\mu\otimes\mathbb I_{M'}\) is a linear combination of permutation operators \(P_\sigma\) with \(\sigma\in\mathbb S_M\). Since \(\mathbb S_M\) is a subgroup of \(\mathbb S_N\), each such permutation commutes with \(\Pi_\lambda\). By linearity,
\begin{align}
\left[
\Pi_\mu\otimes\mathbb I_{M'},
\Pi_\lambda
\right]
=0.
\end{align}
\end{proof}

\subsection{Pieri's rule}
\label{AppA:pieris}
Throughout this work, we repeatedly separate fully antisymmetric
subsystems from states supported on a fixed isotypic component.
That is, we end up with a state \(\rho \otimes \sigma\) where \(\rho\) is fully supported on the antisymmetric sector \((1^{k})\) of \(k\) qudits while \(\rho \otimes \sigma\) is fully supported on sector \(\lambda\vdash n\) of \(n\) qudits. We are interested in which isotypic sectors of \(n-k\) qudits these constraints restrict \(\sigma\) to. By concentrating only on the components \(\mathcal{P}_{\lambda}\) of the Schur-Weyl decomposition, this question may be rephrased in terms of the representation theory of \(\mathbb{S}_{n}\) as follows: The restriction of an irrep, \(\mathcal{P}^{\lambda}\), from \(\mathbb{S}_{n}\) to the subgroup \(\mathbb{S}_{k} \times \mathbb{S}_{n-k} \subset \mathbb{S}_{n}\) decomposes into irreps as,

\begin{equation}\label{eq:18}
  \text{Res}_{\mathbb{S}_k \times \mathbb{S}_{n-k}}^{\mathbb{S}_n}(\mathcal{P}^\lambda) \cong \bigoplus_{\mu \vdash k, \; \nu \vdash n-k} (\mathcal{P}^\mu \otimes \mathcal{P}^\nu)^{\oplus c_{\mu\nu}^\lambda}.
\end{equation}
We are interested in the partitions \(\nu \vdash n-k\) that are compatible with \(\lambda\) when \(\mathcal{P}^{\mu}\) is fixed as the sign representation, \(\mu = (1^{k})\). In other words, for what \(\nu\) do we have multiplicities \(c_{(1^{k}) \nu}^{\lambda}\neq0\)?
By the Frobenius reciprocity theorem, the character inner products satisfy,
\begin{equation}\label{eq:19}
      c_{\mu\nu}^\lambda = \left\langle \text{Res}_{\mathbb{S}_k \times \mathbb{S}_{n-k}}^{\mathbb{S}_n}(\mathcal{P}^\lambda), \; \mathcal{P}^\mu \otimes \mathcal{P}^\nu \right\rangle_{\mathbb{S}_k \times \mathbb{S}_{n-k}} = \left\langle \mathcal{P}^\lambda, \; \text{Ind}_{\mathbb{S}_k \times \mathbb{S}_{n-k}}^{\mathbb{S}_n}(\mathcal{P}^\mu \otimes \mathcal{P}^\nu) \right\rangle_{\mathbb{S}_n},
  \end{equation}
  where \(\text{Ind}_{S_k \times S_{n-k}}^{S_n}(\mathcal{P}^\mu \otimes \mathcal{P}^\nu)\) is the \(\mathbb{S}_{n}\) representation induced from the \(\mathbb{S}_{k} \times \mathbb{S}_{n-k}\) representation \(\mathcal{P}^\mu \otimes \mathcal{P}^\nu\). According to the RHS, \(c^{\lambda}_{\mu \nu}\) are the coefficients for the branching rules of the symmetric group. These are known as the Littlewood–Richardson coefficients and can be determined by combinatorial rule that simplifies significantly when one fixes \(\mu=(1^{k})\); in this case it is called as Pieri's rule~\cite{fulton96_young_tableaux}: \(c^{\lambda}_{(1^{k}) \nu}=1\) if one can obtain the Young diagram of \(\lambda\) from that of \(\nu\) by attaching \(k\) boxes, at most one to each row, otherwise, \(c^{\lambda}_{(1^{k}) \nu}=0\). For example,
\begin{equation}\label{eq:20}
  \mathrm{Res}^{\mathbb{S}_{9}}_{\mathbb{S}_{2}\times \mathbb{S}_{7}}(\mathcal{P}^{(4,3,2)})=  \mathcal{P}^{(1^{2})}\otimes \left( \mathcal{P}^{(3,2,2)} \oplus  \mathcal{P}^{(3,3,1)} \oplus \mathcal{P}^{(4,2,1)} \right) \oplus \ldots.
\end{equation}

Thus, if \(\rho\) is a 2-qudit state supported on sector \((1^{2})\), while \(\rho \otimes \sigma\) is supported on sector \((4,3,2)\) of 9-qudits, \(\sigma\) may only have support on sectors \((3,2,2), (3,3,1)\) and \((4,2,1)\).
\newpage

\section{Analysis of Algorithm 1} \label{App:Algo 1}
In this Appendix, we analyse Algorithm 1 in subsection \ref{subsec:Algo1} and calculate the number of antisymmetric measurements it would require to perform unitary Schur sampling with probability at least $1-\epsilon,\ \forall \epsilon >0$.
We consider the state $\mathcal{T}\!\left( \ket{\phi_{\lambda}}\bra{\phi_{\lambda}} \right)$. This can be interpreted as a uniformly random permutation $\pi \in \mathbb{S}_n$ being applied to the state $\ket{\phi_{\lambda}}$. We are interested in the probability of finding the largest antisymmetric subsystem, $\tr\left( \Pi^-_{\lambda'_1}\mathcal{T}\!\left( \ket{\phi_{\lambda}}\bra{\phi_{\lambda}} \right)\right)$, when we pick $\lambda'_1$ qudits randomly and perform the antisymmetric measurement. This can be lower bounded by the probability that the random $\lambda'_1$ qudits we pick happen to be $\pi(1),\cdots, \pi(\lambda'_1)$. Note that conditioned on this event, the measurement will project to antisymmetric subspace with probability 1. Thus we have
\begin{align}
    \tr\left( \Pi^-_{\lambda'_1}\mathcal{T}\!\left( \ket{\phi_{\lambda}}\bra{\phi_{\lambda}} \right)\right) \ge \Pr\left(\text{pick } \pi(1),\cdots, \pi(\lambda'_1)\right) \ge \binom{n}{\lambda'_1}^{-1}.
\end{align}
There are other subsets of qudits we may pick that would still project to the antisymmetric subspace, and hence the largest antisymmetric subsystem with some non-zero probability. However, since the purpose of this section is to demonstrate that our first strategy works, we will settle for a simple lower bound.\\

Next, consider a stage of the algorithm at which \(n_{\mathrm{remaining}}\) qudits remain in the state $\mathcal{T}\left(\ket{\phi_\mu}\bra{\phi_\mu}\right),$ where \(\mu\) is the residual Young diagram obtained from \(\lambda\) after removing the columns identified in the preceding stages of the algorithm. Note that \(\mu\) is not known a priori. Suppose that, from the preceding tests, we have learned that the largest antisymmetric subsystem of the remaining qudits has size at most \(k\), equivalently, that the longest column of \(\mu\) has length at most \(k\).
 Then, if we perform the antisymmetric measurement on $k$ randomly chosen qudits, the probability of projecting to the antisymmetric subspace is at least 
 \be\binom{n_{\text{remaining}}}{k}^{-1} \ge \binom{n}{k}^{-1}
 \ee
 if the longest column of \(\mu\) has length \(k\), while this probability is zero if the longest column is shorter than \(k\). Therefore, if the longest column has \(k\) boxes, then after \(R\) such tests,
\begin{align}
&\Pr(\text{no test succeeds})
\le
\left(1-\binom{n}{k}^{-1}\right)^R
\le
\exp\left(-R\binom{n}{k}^{-1}\right).
\end{align}
 Thus, if the longest column has \(k\) boxes, then \(\binom{n}{k}\ln(\epsilon^{-1})\) antisymmetric tests suffice to detect it with probability at least \(1-\epsilon\).\\

With these facts in mind, we proceed to analyze the algorithm .  We can think of finding the largest antisymmetric subsystem as removing the longest column from the remaining young diagram. The strategy is simple: start by setting $k=\tilde{d} =\min{\{n,d\}}$, and every time $\binom{n}{k}\ln{(n\epsilon^{-1})}$ consecutive rounds of antisymmetric measurements fail, reset $k:=k-1$. Thus, the number of tests necessary to remove each column of length $\lambda'_i$ is at most $\binom{n}{\lambda'_i}\ln{(n\epsilon^{-1})}$ with probability at least $1-\epsilon/n$. Then, the number of rounds of antisymmetric measurements that the algorithm performs is at most, 
\begin{align}
    &\sum_{l=1}^{\tilde{d}}
    (\#\!\left\{\text{columns of length }l\right\}+1)
    \binom{n}{l}
    \ln(n\epsilon^{-1}).
\end{align}
The additional \(1\) accounts for the final stage in which the algorithm
certifies that no further columns of length \(l\) remain.\\

{Let \(c_1,\dots,c_m\) denote the columns of the Young diagram in the order in
which the protocol attempts to remove them. Then the event that unitary Schur
sampling succeeds is precisely the event that every column is successfully
identified and removed. Therefore, 
\begin{align}
    \Pr(\text{USS achieved})
    &=
    \Pr(\text{all columns removed})
    \notag\\
    &=
    \prod_{i=1}^{m}
    \Pr\!\left(
    c_i \text{ is removed}
    \;\middle|\;
    c_1,\dots,c_{i-1}\text{ removed}
    \right).
\end{align}
Since each column is removed with probability at least
$1-\epsilon/n$,
we obtain,
\begin{align}
    \Pr(\text{USS achieved})
    &\ge \prod_{i=1}^m \left(1-\frac{\epsilon}{n}\right)\ge
    \left(1-\frac{\epsilon}{n}\right)^{n}.
\end{align}}
Since there can be no more than $n$ columns,
\begin{align}
    &\left(1-\frac{\epsilon}{n}\right)^{n} \ge
    1-\frac{\epsilon}{n}n = 1-\epsilon.
\end{align}

We find an upper bound for the number of rounds of measurements:
\begin{align}
    \text{Rounds of measurement}& \le \sum_{l=1}^{\tilde{d}} 
    \left(1+ \#\!\left\{\text{columns of length }l\right\}\right)
    \binom{n}{l}
    \ln(n\epsilon^{-1})\notag\\
        &\le
    n^{\tilde{d}}
    \ln(n\epsilon^{-1})
    \sum_{l=1}^{\tilde{d}} 
    \left(1+\#\!\left\{\text{columns of length }l\right\}\right)\notag\\
    &=
    n^{\tilde{d}}\ln(n\epsilon^{-1})
    \left(
    \tilde{d}+\#\!\left\{\text{total columns}\right\}
    \right)
    \notag\\
    &\le
    (n+\tilde{d})n^{\tilde{d}}\ln(n\epsilon^{-1})
    \notag\\
    &\le
    2n^{\tilde{d}+1}\ln(n\epsilon^{-1}).
\end{align} The first equality follows from the fact every column in the Young diagram has at most $\tilde{d}$ boxes.
Thus our first strategy would require at most $2n^{\tilde{d}+1}\ln{(n\epsilon^{-1})}$ rounds of antisymmetric measurements to perform unitary Schur sampling with probability at least $1-\epsilon$. \\

\newpage

\section{ Growing Antisymmetric subsystems}\label{App:probabgrowing singlets}
In this appendix we prove Eq.~\eqref{Eqn:probgrowth}, which lower bounds the probability that, for a state supported on a fixed \(\lambda\)-isotypic component and containing a \(k\)-qudit fully antisymmetric subsystem, adjoining a uniformly random qudit from the remainder enlarges this antisymmetric subsystem to \(k+1\) qudits. Throughout, let $\sigma$ be supported on the
$\lambda$-isotypic component,
\[
\Pi_\lambda \sigma \Pi_\lambda=\sigma,
\]
and suppose that the first $k$ qudits lie in the antisymmetric
subspace,
\[
\Pi^-_{1,\dots,k}\sigma\Pi^-_{1,\dots,k}=\sigma.
\]
Since the first \(k\) qudits are supported on the fully antisymmetric
subspace, the state is invariant under conjugation by permutations of
these qudits, i.e.,
\begin{align}
P_\pi\sigma P_\pi^\dagger=\sigma,
\qquad
\forall\,\pi\in S_k.
\end{align}
We consider the following task: 
\begin{itemize}
    \item Pick a qudit $r$ uniformly randomly from $k+1, \cdots, n$. Since the quantity
    \(
    \mathbb{E}_r\operatorname{tr}(M_r\sigma)
    \)
    is unchanged by permutation twirling of the qudits
    \(k+1,\ldots,n\), we may assume, without loss of generality, that
    \(\sigma\) is permutation-invariant on this subsystem. That is,
    \begin{align}
    (\mathbb I\otimes\mathcal T_{>k})(\sigma)=\sigma.
    \end{align}
    \item Perform the measurement $\{\Pi ^{-}_{1,\cdots,k,r}, I - \Pi ^{-}_{1,\cdots,k,r} \}$ . Henceforth we will use $M_r = \Pi ^{-}_{1,\cdots,k,r}$.
\end{itemize}
Our goal is to lower-bound the probability of outcome $M_r$ of this measurement. Explicitly, 

\begin{equation}
  \mathbb{E}_r \tr\big( M_r \sigma M_r^{\dagger}    \big) =  \tr\bigg( (\mathbb{E}_r M_r) \sigma \bigg).    
\end{equation}
Note that $M_r = \frac{1}{k+1}\Pi^-_{1,\cdots,k}\bigg(I - \sum_{i=1}^k P_{(i,r)}\bigg)$ (see Appendix \ref{App:Concepts}). 
Using this, 
\begin{align}
     \mathbb{E}_r\tr\bigg(  M_r \sigma \bigg)
     &=\frac{1}{k+1}\mathbb{E}_r \tr\bigg(  \Pi^-_{1,\cdots,k}\left(\mathbb{I} - \sum_{i=1}^k P_{(i,r)}\right) \sigma \bigg)
       \notag\\
  &=\frac{1}{k+1}\mathbb{E}_r \tr\bigg(  \left(\mathbb{I} - \sum_{i=1}^k P_{(i,r)}\right) \sigma \bigg)\notag\\
      &=\frac{1}{k+1}- \frac{1}{k+1}\mathbb{E}_r  \tr\bigg(  \left(\sum_{i=1}^k P_{(i,r)}\right) \sigma \bigg)\notag\\
  &=\frac{1}{k+1}- \frac{1}{k+1}\mathbb{E}_r  \tr\Bigg(  \bigg(\sum_{i=1}^k P_{(i,k)}P_{(i,r)} P_{(i,k)}\bigg)\sigma  \Bigg).
\end{align}
The last equality follows from the permutation invariance of antisymmetric subsystems i.e.,  
 \begin{equation}
     P_{(i,k)}\sigma P_{(i,k)} = P_{(i,k)} \Pi^-_{1,\cdots,k}\sigma \Pi^-_{1,\cdots,k} P_{(i,k)} = \Pi^-_{1,\cdots,k}\sigma \Pi^-_{1,\cdots,k} =\sigma. 
 \end{equation}
 Since conjugation by a permutation relabels transpositions, \(P_{(i,k)}P_{(i,r)}P_{(i,k)} = P_{(k,r)}\), for every $i\le k$,
    \begin{align}     
       \mathbb{E}_r\tr\bigg(  M_r \sigma \bigg) &=\frac{1}{k+1}- \frac{1}{k+1}\mathbb{E}_r  \tr\Bigg(  \bigg(\sum_{i=1}^kP_{(k,r)}\bigg)\sigma  \Bigg)\notag\\
       &=\frac{1}{k+1}- \frac{k}{k+1} \tr\bigg(  \mathbb{E}_r P_{(k,r)}\sigma  \bigg)\notag\\
       &=\frac{1}{k+1}- \frac{k}{k+1} \tr\bigg(   P_{(k,k+1)}\sigma  \bigg). \label{Eq:probabgrow2}
 \end{align}
  The last equality follows from the permutation-invariance of $k+1, \cdots, n$ qudits, i.e, $\tr\!\left(P_{(k,r)}\sigma\right) = \tr\!\left(P_{(k,k+1)}\sigma\right)$ for all  $r>k$.

 In the following subsection we show that, 
 \begin{align}
    &\tr\!\left(P_{(k,k+1)}\sigma\right) \le  
\frac{1}{k(n-k)}\left( \sum_{i=1}^k\lambda_i 
-k \right). \label{Eq:character}
\end{align}
  We use this to lower-bound the expression in \eqref{Eq:probabgrow2},
\begin{equation}
\mathbb{E}_r\tr\!\left( M_r \sigma \right)
\ge
\frac{1}{k+1}
-
\frac{k}{k+1}
\cdot
\frac{1}{k(n-k)}
\left(
\sum_{i=1}^k\lambda_i-k
\right)
=
\frac{n-\sum_{i=1}^k\lambda_i}{(n-k)(k+1)}
=
\frac{
\sum_{i=k+1}^{d}\lambda_i
}{
(n-k)(k+1)
},
\end{equation}
where in the last step we used that $n = |\lambda|
=
\sum_{i=1}^{d}\lambda_i.$
This proves Eq.~\eqref{Eqn:probgrowth}.

\subsection{Bounding the SWAP expected value}
Now we provide a proof for \eqref{Eq:character}.
Recall that \(\sigma\) has the following symmetries,
\begin{align}
     [\sigma, P_{\pi} \otimes P_{\pi'}]=0 \quad  \forall \pi \in \mathbb{S}[1,\ldots ,k],\  \pi' \in \mathbb{S}[k+1,\ldots, n],\qquad 
     \Pi^-_{1,\dots,k}\,\sigma\,\Pi^-_{1,\dots,k}=\sigma.
 \end{align} 
 Our strategy is to express the expectation value
\(
\tr(P_{(k,k+1)}\sigma)
\)
in terms of averages of transpositions. These averages are determined by
the characters of the transposition conjugacy class in the irreducible
representations appearing in the Schur-Weyl decomposition. We therefore
recall the normalized character formula \cite{ODonnellprimer} for a transposition
\(\pi\in \mathbb{S}_n\)
\begin{align}\label{eq:swapcharacter}
   \chi_{\lambda}(\pi)
=
\frac{1}{n(n-1)}
\left(
\sum_{i=1}^{d}\lambda_i^2
-
\sum_{i=1}^{d}(2i-1)\lambda_i
\right). 
\end{align}

The expectation values of transpositions
\(\tr(P_{(a,b)}\sigma)\) fall into only three categories:
 If \(a,b \le k\), then the corresponding qudits lie entirely in the
    antisymmetric subsystem, and therefore
    \begin{align}
    \tr(P_{(a,b)}\sigma)=-1.
    \end{align}

 If \(a,b > k\), permutation symmetry on the remaining subsystem
    implies that all such transpositions have the same expectation value:
    \begin{align}
    \tr(P_{(a,b)}\sigma)
    =
    \tr(P_{(r,s)}\sigma),
    \qquad
    \forall a,b,r,s > k.
    \end{align}

 Finally, the total sum over all transpositions 
    \begin{align}
    \sum_{a<b}\tr(P_{(a,b)}\sigma)
    \end{align}
    is fixed by the isotypic component in which \(\sigma\) resides. In
    particular, this quantity is determined by, the normalized character of the
    transposition conjugacy class (SWAP permutations) in the irreducible representation labeled by
    \(\lambda\). 
These observations allow us to express cross-subsystem transposition expectations in terms of the character of the transposition class in the irreducible representation labeled by \(\lambda\),
\begin{equation}
  \begin{aligned}
k(n-k)\tr\!\left(P_{(k,k+1)}\sigma\right)
&=
\sum_{s=1}^{k}
\sum_{r=k+1}^{n}
\tr\!\left(P_{(r,s)}\sigma\right)
\\
&=
\sum_{1 \le a < b \le n}
\tr\!\left(P_{(a,b)}\sigma\right)
-
\sum_{1 \le c < d \le k}
\tr\!\left(P_{(c,d)}\sigma\right)
-
\sum_{k < r < s \le n}
\tr\!\left(P_{(r,s)}\sigma\right)
\\
&=
\binom{n}{2}\chi_{\lambda}((1\,2))
-
\binom{k}{2}(-1)
-
\sum_{k < r < s \le n}
    \tr\!\left(P_{(r,s)}\sigma\right).
    \end{aligned}
\end{equation}

We now analyze the remaining term, \(\sum_{k < r < s \le n}
\tr\!\left(P_{(r,s)}\sigma\right)\).
Observe that we are only considering operators supported on the subsystem
consisting of qudits \(k+1,\dots,n\). Since \(\sigma\) is permutation-invariant
under permutations of these qudits, the reduced state on this subsystem admits
a decomposition as a convex combination of isotypic components:
\begin{align}
\sigma_{>k}
=
\sum_{\mu} p_{\mu}\,\sigma_{\mu},
\end{align}
where each \(\sigma_{\mu}\) is supported on the \(\mu\)-isotypic component of
\((\mathbb{C}^d)^{\otimes (n-k)}\), and \(\{p_\mu\}_{\mu}\) forms a probability
distribution.

Consequently, the expectation value of transpositions on this subsystem is
determined entirely by the normalized characters of the transposition conjugacy class in
the irreducible representations labeled by \(\mu\):
\begin{align}
\sum_{k < r < s \le n}
\tr\!\left(P_{(r,s)}\sigma\right)
=
\sum_{\mu}
p_{\mu}\,
\binom{n-k}{2}
\chi_{\mu}((1\,2)).
\end{align}

Moreover, not every Young diagram \(\mu\vdash (n-k)\) can appear in this
decomposition. Since the first \(k\) qudits are constrained to lie in the fully
antisymmetric subspace, the Young diagrams \(\mu\) appearing in the
Schur-Weyl decomposition of the remaining \(n-k\) qudits are restricted
by Pieri's branching rule (see subsection~\ref{subsection:Pieri}). We denote the
set of admissible Young diagrams by \(M(\lambda)\), the set of Young diagrams obtained from \(\lambda\) by sequentially removing \(k\)  corner boxes,
with at most one box removed from each row.
In terms of the row lengths of the Young diagrams, this is equivalent to,
\begin{align}
\lambda_i-\mu_i \in \{0,1\},
\qquad
\sum_i (\lambda_i-\mu_i)=k,
\end{align}
with the additional constraint that \(\mu\) remains a valid Young diagram, i.e.
\begin{align}
\mu_i \ge \mu_{i+1}
\qquad
\forall i.
\end{align}
Hence 
\begin{align}
k(n-k)\tr\!\left(P_{(k,k+1)}\sigma\right)
&=
\binom{n}{2}\chi_{\lambda}((1\,2))
-
\binom{k}{2}(-1)
-
\sum_{k < r < s \le n}
\tr\!\left(P_{(r,s)}\sigma\right)\notag\\
& \le \max_{\mu \in M(\lambda)} \binom{n}{2}\chi_{\lambda}((1\,2))
-
\binom{k}{2}(-1)
-  \binom{n-k}{2}\chi_{\mu}((1\,2)).
\end{align}
 We justify the above inequality: The state $\sigma_{>k}$ is permutation invariant, and hence a convex combination of states living exclusively in some set of isotypic components $\mu \vdash n-k$ in the set of Young Diagrams $M(\lambda)$. Therefore 
\begin{align}
   \max_{\mu \in M(\lambda)}  \chi_{\mu}((1\,2)) \ge \tr\!\left(P_{(a,b)}\sigma\right) \ge \min_{\mu \in M(\lambda)} \chi_{\mu}((1\,2)).\ \forall a,b >k.
\end{align}
\\

Therefore, once the admissible Young diagrams \(\mu\) are determined via the
Pieri's rule in subsection~\ref{subsection:Pieri}, the corresponding transposition expectations
are fixed entirely by the row lengths of \(\mu\). Using the expression for the transposition character in Eq.~\eqref{eq:swapcharacter} we obtain,
\begin{align}
    k(n-k)\tr\!\left(P_{(k,k+1)}\sigma\right)
    &\le \max_{\mu \in M(\lambda)} \binom{n}{2}\chi_{\lambda}((1\,2))
    -
    \binom{k}{2}(-1)
    -  \binom{n-k}{2}\chi_{\mu}((1\,2))\notag\\
    &=\max_{\mu \in M(\lambda)} \frac{1}{2} \left(
\sum_{i=1}^{d}(\lambda_i^2 -\mu_i^2)
-(2i-1)(\lambda_i -\mu_i) 
       \right) +\binom{k}{2}\notag\\
    &=\max_{\mu \in M(\lambda)} \frac{1}{2} \left(
\sum_{i\in B}(2\lambda_i -1)
-(2i-1) 
      \right) +\binom{k}{2}\notag\\
         \intertext{Let $B, \ |B|=k$ be the set of rows that we remove boxes
      from, i.e., \(i \in B\) iff $\lambda_i-\mu_i =1,\ \forall \i \in B$. }
      & \le \max_{B: |B|=k }  \left(
\sum_{i\in B}(\lambda_i -i) 
        \right) +\binom{k}{2}\notag
        \intertext{since $\lambda_i -i$ is decreasing in $i$, this is maximized when B is $\{1,\ldots,k\}$, so }
            &\le   \left(
\sum_{1}^k\lambda_i 
\right) - \binom{k+1}{2}  +\binom{k}{2} \notag\\
& =  \left( 
\sum_{1}^k\lambda_i 
\right) -k.
\end{align}
In summary we have,
\begin{align}
    &k(n-k)\tr\!\left(P_{(k,k+1)}\sigma\right) \le  \sum_{i=1}^k\lambda_i 
 -k. 
\end{align}

\subsection{Young Diagram traversal Interpretation}\label{AppCsubsec:YDtraversal}
Recall that for a state corresponding to a Young diagram \(\lambda\), the
algorithm constructs antisymmetric subsystems using antisymmetric measurements iteratively, growing each
subsystem one qudit at a time. Equivalently, this process may be viewed
directly on the Young diagram: the algorithm traverses the columns of
\(\lambda\) from left to right, and within each column from top to bottom.\\

Suppose the algorithm has already identified and separated \(l-1\)
maximal antisymmetric subsystems and is constructing the largest antisymmetric subsystem for the \(l\)th time, which currently has $k$ qudits. In the Young diagram picture, this
corresponds to having completely traversed the first \(l-1\) columns of
\(\lambda\) and is positioned
at the \(k\)-th row of the \(l\)-st column. Let \(\mu\) denote the Young diagram obtained from \(\lambda\) by removing the first \(l-1\) columns. Our lower bound shows that the
probability of extending the subsystem from size \(k\) to \(k+1\), alternatively, traversing one more step down the current column is at least,
\begin{align}
\frac{
\sum_{i=k+1}^{d}\mu_i
}{
(n_{\mu}-k)(k+1)
}, 
\end{align}
where $\mu_i$'s are the row lengths  and $n_{\mu}$ is the total number of boxes in $\mu$ respectively.
The quantity
\(
\sum_{i=k+1}^{d}\mu_i
\)
is precisely the number of boxes lying strictly below the \(k\)-th row of the
Young diagram, equivalently, the size of the subdiagram below the current row, which we called $T_{\lambda}(k,l)$ in subsection~ \ref{subsec:YDtraversal}. These boxes correspond to the remaining opportunities for extending the current
antisymmetric subsystem.

The denominator has a similarly natural interpretation: the factor \(n_{\mu}-k\)
counts the number of remaining candidate qudits, while the factor \(k+1\)
corresponds to the size of the antisymmetric subsystem being built at that stage. This corresponds to all the boxes to the right of the current box and the boxes in the same column below it in this young diagram picture, which we called $T_{\lambda}(l)$ in the paper. We will use this picture for the rest of the analysis.\\

To express this bound directly in terms of the original Young diagram
\(\lambda\), we recall the quantities introduced in the main text. 
For a box \(b=(k,l)\in \lambda\), we defined two associated quantities:

First, the below-row mass
\begin{align}
T_{\lambda}(l)
:=
\sum_{i=1}^{d}(\lambda_i-l+1)_+ ,
\end{align}
which counts the number of boxes remaining after removing the first \(l-1\) columns of $\lambda$.

Next, the truncated below-row mass
\begin{align}
T_{\lambda}(k,l)
:=
\sum_{i=k+1}^{d}(\lambda_i-l+1)_+ ,
\end{align}
where
\(
(x)_+ := \max\{x,0\}
\).
Equivalently, \(T_{\lambda}(k,l)\) counts the number of boxes lying strictly
below the box \((k,l)\) after removing all columns strictly to the left of the
\(l\)-th column. Then the probability of moving to \((k+1,l)\), conditioned on the event that the algorithm has reached the box
\((k,l)\) without skipping any of the boxes that precede it in the traversal is lower bounded by,
\begin{align}
\Pr\!\left((k+1,l)\mid \text{reached }(k,l)\right)
\ge
\frac{
T_{\lambda}(k,l)
}{
\bigl(T_{\lambda}(l)-k\bigr)(k+1)
}. \label{Eq:probYDstepAPP}
\end{align}

We immediately obtain a crude uniform lower bound for \eqref{Eq:probYDstep}. If the \(l\)-th
column contains at least \(k+1\) boxes, i.e, the antisymmetric subsystem can grow further, $T_{\lambda}(k,l) \ge 1$ and $k+1 \le d$. Further, $T_{\lambda}(l)-k \le n$. Thus 
\begin{align}
    \Pr\!\left((k+1,l)\mid\text{reached }(k,l)\right) \ge \frac{1}{n(k+1)}
 \label{Eq:elementarybound}
\end{align}
If the \(l\)-th column contains just \(k\) boxes, then the move
to \((k+1,l)\) is impossible and the probability is zero.\\
\newpage

\section{Algorithm~2- Proof of Convergence}\label{App:convergence2}
In this appendix, we prove the convergence guarantee stated in Subsection~\ref{subsec:convanalysis}. In particular, we justify the stopping threshold \(R(k)\) and show that Algorithm~2 successfully traverses the entire Young diagram, and hence performs unitary Schur sampling, with probability at least \(1-\epsilon/2\).

\subsection{Upper limit of number rounds per box} \label{subsection:cap}
In the last section we established that 
\begin{align}
\Pr\!\left((k+1,l)\mid(k,l)\right)
\ge
\frac{
T_{\lambda}(k,l)
}{
\bigl(T_{\lambda}(l)-k\bigr)(k+1)
} \ge    \frac{1}{n(k+1)}. 
\end{align}
 We argue that $n(k+1)\ln{\big( 1/\epsilon' \big)}$ rounds of testing suffices move to the next box, if one exists with probability at least $1-\epsilon'$. If there exists another box under $(k,l)$, the probability that the algorithm does not move to it after $t$ rounds is less than 
\begin{align}
    &\left(1-\frac{1}{n(k+1)} \right)^t \le \exp \left( -\frac{t}{n(k+1)} \right) 
\end{align}
We wish that this probability is upper bounded by $\epsilon'$. It suffices that  
\begin{align}
    \exp \left( -\frac{t}{n(k+1)} \right)  = \epsilon'. 
\end{align}
 Rearranging the above expression we find that 
 \begin{align}
     t= n(k+1)\ln{\bigg(\frac{1}{\epsilon'}  \bigg)}
 \end{align}
 is sufficient. We interpret this in the context of Algorithm~2. Suppose the algorithm has successful reached $(k,l)$, i.e, it has traversed all the boxes $(r,s)$ that comes before $(k,l)$ in the ordering picked by our algorithm. Recall that the algorithm performs at most $R(r)$ rounds of measurements on box $(r,s)$. In terms of the random variables defined in the context of Algorithm~2, this is equivalent to 

\begin{align}
      X_{(r,s)} \le R(r)
      \quad
      \forall\, (r,s) < (k,l)
      \text{ such that } r<h(s).
\end{align}

 We have used $  (r,s) < (k,l) $ as a shorthand to indicate that the box $(r,s)$ comes before $(k,l)$ in the ordering we have chosen.

Define the vector of preceding non-bottom waiting times by
\begin{align}
\mathbf{X}_{<(k,l)}
:=
\{X_{(r,s)}:(r,s)<(k,l),\ r<h(s)\},
\end{align}
and let
\begin{align}
E_{(k,l)}
:=
\left\{
\mathbf{x}:
x_{(r,s)}\le R(r)
\text{ for all }(r,s)<(k,l),r<h(s)
\right\}.
\end{align}
Then, for every \(\mathbf{x}\in E_{(k,l)}\),
\begin{align}\label{Eqn:RVprobbound}
\Pr\!\left(
X_{(k,l)}
>
n(k+1)\ln\!\left(\frac{1}{\epsilon'}\right)
\;\middle|\;
\mathbf{X}_{<(k,l)}=\mathbf{x}
\right)
\le \epsilon'.
\end{align}

\subsection{Proof of Convergence}
Since the algorithm performs at most $R(k)= n(k+1)\ln\left(\frac{2n}{\epsilon}\right)$ rounds of measurements while it is at the box $(k,l), \quad \forall (k,l) \in \lambda$, inorder for it to succeed in traversing the entire Young Diagram $\lambda$ , i.e, perform unitary Schur sampling, the following condition must be satisfied: 
\begin{align}
    X_{(k,l)} \le R(k)
\quad
\forall (k,l)\in\lambda \text{ such that } k<h(l).
\end{align}
 We now calculate the probability of success of the algorithm:
\begin{align}
    \Pr\!\left(
    \bigcap_{(k,l)\in\lambda, k<h(l)}
    \left\{
    X_{(k,l)}
    \le
    R(k)
    \right\}
  \right) &\notag\\
  =\prod_{(k,l) \in \lambda, k<h(l)}& \Pr \left( X_{(k,l)} \le  R(k) \mid X_{(r,s)} \le R(r) \quad \forall (r,s) < (k,l)\text{ such that } r<h(s) \right)\notag
\intertext{Using Eq.~\eqref{Eqn:RVprobbound}, we have, }
    &\ge \prod_{(k,l) \in \lambda, k<h(l)} \left( 1 - \frac{\epsilon}{2n} \right) \ge 1 - \sum_{(k,l) \in \lambda, k<h(l)} \frac{\epsilon}{2n}\notag\\
    &\ge 1-\frac{\epsilon}{2}. \label{Eqn:ProbSuccessApp}
\end{align}
Thus Algorithm~2 succeeds with probability at least $1-\frac{\epsilon}{2}.$

\newpage

\section{Number of antisymmetric measurements in Algorithm~2}\label{App:Number of rounds}

In this section, we prove that $6nd^2\ln\left(2en/\epsilon
\right)$ antisymmetric measurements is sufficient to perform unitary Schur sampling with probability at least $1-\epsilon$. We established that this is analogous to traversing the corresponding young diagram left to right, down each column. Let's consider some box $(r,s)$ with row and column $r$ and $s$ respectively. For each non-bottom box \((r,s)\), let \(X_{(r,s)}\) denote the
number of measurement rounds required for the traversal to first move
from \((r,s)\) to \((r+1,s)\). Algorithm~2 performs at most
\begin{align}
    R(r)=n(r+1)\ln\!\left(\frac{2n}{\epsilon}\right)
\end{align}
rounds at such a box, so the actual number of rounds spent there is
\(\min\{X_{(r,s)},R(r)\}\). Note the number of rounds till will reach the last box in the column $s$ is 
\begin{align}
    \sum_{r< h(s)} X_{(r,s)}.
\end{align} 
 Here, $h(s)$ denotes the number of boxes in the $s$th column, and recall that $h(s) \le d$.
On the event that Algorithm~2 successfully traverses the entire Young
diagram, \(X_{(r,s)}\le R(r)\) for every non-bottom box, and hence the
actual number of rounds spent at each such box is simply
\(X_{(r,s)}\). At the bottom box of a column, there are two possibilities.
If its row index equals the known upper bound on the height of the current
column, then the algorithm immediately terminates the column without
performing any further measurements. For \(s>1\), this upper bound is the
height \(h(s-1)\) of the previous column, while for the first column it is
\(d\). Otherwise, the algorithm performs \(R(h(s))\) additional rounds to
certify that the bottom of the column has been reached. Conditioned on
successful traversal, such a certification stage therefore occurs only
when the current column is strictly shorter than its current upper bound.
Since these upper bounds can decrease at most \(d\) times, there are at
most \(d\) such certification stages.\\

We wish to calculate the number of rounds necessary to make sure that the algorithm reaches every box and thus traverses the whole Young Diagram with probability at least $1-\epsilon$, while keeping in mind that our algorithm only spends at most $n(r+1)\ln{\frac{2n}{\epsilon}}$ rounds on each box. This implies that the algorithm must spend $n(h(s)+1)\ln{\frac{2n}{\epsilon}}$ rounds on bottom boxes of columns that are shorter than their previous columns, conditioned that it succeeds in traversing the whole Young Diagram. These measurements can be interpreted as the algorithm certifying that it has reached the end of a column.  Note that there can be at most $d$ such columns. We bound the probability that that algorithm both succeeds and requires at most $T+\sum_{r=1}^d R(r)$, by choosing \(T\) such that

\begin{align}
    &\Pr\!\left(
    \left\{
    \sum_{(r,s)\in\lambda, r< h(s)} X_{(r,s)} \le T
    \right\}
    \cap
    \bigcap_{(r,s)\in\lambda, r< h(s)}
    \left\{
    X_{(r,s)}
    \le
    R(r)
    \right\}
    \right) \ge 1-\epsilon. 
\end{align}
First, we rewrite this probability as
\begin{multline}
    \Pr\!\left(
    \left\{
    \sum_{(r,s)\in\lambda, r< h(s)} X_{(r,s)} \le T
    \right\}
    \cap
    \bigcap_{(r,s)\in\lambda, r< h(s)}
    \left\{
    X_{(r,s)}
    \le
    R(r)
    \right\}
    \right)= \\
    \Pr\!\left(
    \bigcap_{(r,s)\in\lambda, r< h(s)}
    \left\{
    X_{(r,s)}
    \le
    R(r)
    \right\}
    \right)
    -\Pr\!\left(
    \left\{
    \sum_{(r,s)\in\lambda, r< h(s)} X_{(r,s)} > T
    \right\}
    \cap
    \bigcap_{(r,s)\in\lambda, r< h(s)}
    \left\{
    X_{(r,s)}
    \le
    R(r)
    \right\}
    \right).
\end{multline}
    Next we use the bound on the probability of success of Algorithm~2 derived in Eq. \eqref{Eqn:ProbSuccessApp} in Appendix \ref{App:convergence2}:
\begin{align}
    &\ge 1-\frac{\epsilon}{2} -\Pr\!\left(
    \left\{
    \sum_{(r,s)\in\lambda , r < h(s)} X_{(r,s)} > T
    \right\}
    \cap
    \bigcap_{(r,s)\in\lambda, r< h(s)}
    \left\{
    X_{(r,s)}
    \le
    R(r)
    \right\}
    \right) \label{Eqn:Boundstepmid}
\end{align}
We now use the observation about the bottom most boxes in a column that is shorter than the column to its left. Since a certification stage occurs only when the upper bound on the
column height strictly decreases, there is at most one such stage at
each height. Therefore the total number of certification measurements is
at most
\begin{align}
\sum_{r=1}^d R(r)
&=
\sum_{r=1}^d
n(r+1)\ln\!\left(\frac{2n}{\epsilon}\right)
\notag\\
&\le
n(d^2+1)\ln\!\left(\frac{2n}{\epsilon}\right).
\label{Eqn:boundingcert}
\end{align}
\\

We define completely independent geometric random variables
\(Z_{(r,s)}\in\{1,2,\ldots\}\), with \(\Pr(Z_{(r,s)}>t)=(1-p_{(r,s)})^t\) for integer \(t\ge0\), for all $(r,s)\in\lambda$ with $r<h(s)$, independent of the family
$\{X_{(r,s)}\}$, with success probabilities
\begin{align}
    p_{(r,s)}=
\frac{
T_{\lambda}(r,s)
}{
\bigl(T_{\lambda}(s)-r\bigr)(r+1)
}.
\end{align}

We restate the probability bound derived in Eq. \eqref{Eq:probYDstep} from Subsection~\ref{subsec:YDtraversal} in the following way: 

\begin{align}\label{Eqn:RVwaitingtime}
    &\Pr\left( X_{(r,s)}> t\mid X_{(k,l)} =x_{(k,l)} \le R(k)\quad \forall \{ (k,l) < (r,s) \text{ such that }k<h(l) \}  \right) \le (1-p_{(r,s)})^t = \Pr\left( Z_{(r,s)}> t \right).
\end{align}
 Applying Theorem~\ref{Thm:Domination} in Appendix~\ref{App:probfacts}, to Eq.~\eqref{Eqn:Boundstepmid} and Eq.~\eqref{Eqn:RVwaitingtime}, we arrive at  the following upper bound: 
\begin{align} \label{Eqn:domination}
    &\Pr\!\left(
    \left\{
    \sum_{(r,s)\in\lambda , r < h(s)} X_{(r,s)} > T
    \right\}
    \cap
    \bigcap_{(r,s)\in\lambda, r< h(s)}
    \left\{
    X_{(r,s)}
     \le
    R(r)
    \right\}
    \right) \le \Pr\!\left(
    \sum_{(r,s)\in\lambda , r < h(s)} Z_{(r,s)} > T
    \right).
\end{align}

Next, we find $T(\epsilon')$ such that 
\begin{align}
    \Pr\left( \sum_{(r,s)\in \lambda, r <h(s)} Z_{(r,s)} > T(\epsilon') \right) \le \epsilon'. \label{Eqn:geometricRVbound}
\end{align}
Standard concentration inequalities for sums of geometric random variables (see Eq.~\eqref{Eqn:T_nj bound} in Appendix \ref{App:probfacts}) implies that
\begin{align}
    T(\epsilon') = 2\sum_{(r,s)\in \lambda , r < h(s)}\mathbb{E}[Z_{(r,s)}]
+
\left(
\frac{2}{\min_{r,s, r < h(s)} p_{(r,s)}}
\ln\frac{1}{\epsilon'}
\right)
\end{align}
is sufficient to meet the bound in Eq.~\eqref{Eqn:geometricRVbound}.\\

Thus, after $T(\epsilon/2) +n(d^2+1)\ln{(2n/\epsilon)}$ rounds we would have traversed the whole Young Diagram with probability at least $1-\epsilon.$
We now upper-bound this quantity in terms of $n$ and $d$:
\begin{equation}
 T\!\left(\frac{\epsilon}{2}\right)
+ n(d^2+1)\ln{\frac{2n}{\epsilon}}=
 2\sum_{(r,s)\in \lambda , r < h(s)}\mathbb{E}[Z_{(r,s)}]
+
\left(
\frac{2}{\min_{r,s, r < h(s)} p_{(r,s)}}
\ln\frac{2}{\epsilon}
\right)
+
n(d^2+1)\ln{\frac{2n}{\epsilon}}.\label{Eqn:roundsbound1}
\end{equation}

In the next subsection we show that
\begin{align}
\sum_{(r,s)\in \lambda , r < h(s)} \mathbb{E}[Z_{(r,s)}]  \le nd^2\ln{(en)}.\label{Eqn:expectationbound}
\end{align}
Note that the left-hand side gives an upper bound on the expected number of measurements required for the growth steps, excluding the deterministic measurements used to certify the bottoms of columns. Finally, from Eq.~ \eqref{Eq:elementarybound} we have
\begin{align}\label{Eqn:certprobbound}
    \min_{r,s, r < h(s)} p_{(r,s)} & \ge \frac{1}{nd}.\notag
\end{align}
Combining this with Eq.~\eqref{Eqn:roundsbound1}, \eqref{Eqn:expectationbound} we have:
\begin{align}
   & \le 2nd^2\ln{(en)}
+
2nd
\ln\left(\frac{2}{\epsilon}
\right)
+
n(d^2+1)\ln\bigg( \frac{2n}{\epsilon}\bigg)\notag\\
& \le 6nd^2
\ln\bigg( \frac{2en}{\epsilon}\bigg).
\end{align}

\subsection{Expected Number of Measurements}

In this section, we calculate a bound the expected number of rounds of antisymmetric measurements required to traverse through any Young Diagram $\lambda$ (besides the bottom boxes which have a deterministic number of steps) corresponding to $n$ qudit systems of dimension $d$. From \eqref{Eq:probYDstep}, For each non-bottom box \((k,l)\), we have
\begin{align}
 \mathbb{E}[Z_{(k,l)}] = \frac{
\bigl(T_{\lambda}(l)-k\bigr)(k+1)
}{
T_{\lambda}(k,l)
}.
\end{align}

To bound the above expression, we use a simple geometric observation about the Young diagram. Recall that
\(T_{\lambda}(k,l)\) counts the number of boxes lying strictly below the box
\((k,l)\) after the first \(l-1\) columns have been removed. Since we only
consider boxes that have at least one box
directly beneath it,
\begin{align}
T_{\lambda}(k,l)
\ge
\lambda_{k+1}-l+1.
\end{align}
Indeed, the quantity \(\lambda_{k+1}-l+1\) counts the number of boxes in the
\(k+1\)-th row from column \(l\). Thus the
subdiagram counted by \(T_{\lambda}(k,l)\) contains at least
\(\lambda_{k+1}-l+1\) boxes. Then ,
\begin{align}
    \sum_{(k,l)\in \lambda, k < h(l)} \mathbb{E}[Z_{(k,l)}] &\le \sum_{(k,l)\in \lambda, k < h(l)} \frac{
\bigl(T_{\lambda}(l)-k\bigr)(k+1)
}{
T_{\lambda}(k,l)
}\notag\\
& \le \sum_{(k,l)\in \lambda, k < h(l)} \frac{nd}{\lambda_{k+1} - l+1}
    \le \sum_{k=1}^{h(1)-1}
\sum_{i=1}^{\lambda_{k+1}} \frac{nd}{i}.\notag\\
\end{align}
Now using the standard harmonic-series bound,
\begin{align}
   &\sum_{i=1}^{m}\frac{1}{i}\le 1+\ln m, \notag\\
    &\le
    nd\sum_{k=1}^{h(1)-1}
    \left(1+\ln\lambda_{k+1}\right)=
    nd\sum_{k=2}^{h(1)}
    \left(1+\ln\lambda_k\right)\notag\\
    &\le
    nd\,h(1)\ln(en)\le
    nd^2\ln(en).
\end{align}

\newpage

\section{Antisymmetric test using SWAP tests }\label{App:SWAP}

\subsection{Calculating \texorpdfstring{$q(L)$}{q(L)}}\label{AppSWAP:q(l)}
 We now derive the expression for \(q(L)\) used in Sec.~\ref{section:SWAPtoAntisymm}. Recall from Sec.~\ref{section:SWAPtoAntisymm} that \(\mathcal E\) denotes the postselected random-SWAP map, \(a=|\langle\alpha|\beta\rangle|^2=(k+1)/(2k)\), and \(q(L)\) is the weight of the fully antisymmetric component after \(L\) consecutive antisymmetric outcomes. Let \(\rho\) denote the reduced state of the \(k\) qudits at the beginning of a recovery attempt, after the permutation twirl was applied. By Lemma~\ref{lem:recovery-probability},
\begin{align}
\rho
=
a\,\rho_0
+
(1-a)\rho_{1},
\qquad
a:=\frac{k+1}{2k},
\end{align}
where \(\rho_{0}\) is supported on the fully antisymmetric component
\((1^k)\), while \(\rho_{1}\) is supported on
\((2,1^{k-2})\).

The$\mathcal E (\cdot)$ map can be interpreted as one round of a uniformly random SWAP test post-selected on obtaining the antisymmetric outcome. The fully antisymmetric component is unchanged by \(\mathcal E\), whereas by Lemma~\ref{lemma:21k-1random SWAP}, on the \((2,1^{k-2})\) component the antisymmetric outcome occurs with probability $r:=(k-2)/(k-1)$. Since \(\rho_1\) is permutation invariant on the irreducible \((2,1^{k-2})\) sector and \(\mathcal E\) is permutation covariant, \(\mathcal E(\rho_1)\) is proportional to \(\rho_1\). The proportionality constant is the antisymmetric-outcome probability \(r\). Consequently,
\begin{align}
\mathcal E^L(\rho)
=
a\,\rho_0
+
(1-a)r^L\rho_{1}.
\end{align}
Hence the normalized state conditioned on these \(L\) outcomes is
\begin{align}
\sigma_L
=
\frac{
a\,\rho_0
+
(1-a)r^L\rho_{1}
}{
a+(1-a)r^L
}.
\end{align}
Since \(q(L)\) is the weight of the fully antisymmetric component in
\(\sigma_L\), we have
\begin{align}
q(L)
&=
\frac{a}{a+(1-a)r^L},
\end{align}
and this gives us Eq.~\eqref{Eqn:q(L)}. In particular,
\begin{equation} \label{Eqn:q(L) boundderiv}
1-q(L)
=
\frac{(1-a)r^L}{a+(1-a)r^L}
\le
\frac{k-1}{k+1}\left(\frac{k-2}{k-1}\right)^L
\le e^{-L/(k-1)}.
\end{equation}

\subsection{Run-time analysis}\label{AppSWAP:runtime}

We now prove the recovery-tail bound stated in Eq.~\eqref{Eqn:recovery-tail} of Sec.~\ref{section:SWAPtoAntisymm}. Recall that \(N\) denotes the number of recovery SWAP tests required for natural termination, \(p_{\mathrm{cut}}=\Pr(N>T)\), and
\[
a=\frac{k+1}{2k},\qquad
r=\frac{k-2}{k-1}.
\]
It remains to bound the probability that the recovery procedure reaches the total SWAP-test budget \(T\) before obtaining \(L\) consecutive antisymmetric outcomes. Let \(T_{\mathrm{nat}}\) denote the number of recovery SWAP tests that would be performed in the absence of this cutoff. Consider the event
\begin{align}
E_t:=\{T_{\mathrm{nat}}>L+t\},
\end{align}
and the probability of this event is $\delta$ discussed earlier. On \(E_t\), every recovery attempt begun during the first \(t\) SWAP tests must lie in the \((2,1^{k-2})\) component: if an attempt were in the fully antisymmetric component, it would necessarily terminate after at most the next \(L\) tests.By the reset property established in Sec.~\ref{section:SWAPtoAntisymm}, each new recovery attempt starts from the same state \(|\beta\rangle\), and hence has the same sector weights \(a\) and \(1-a\) on \((1^k)\) and \((2,1^{k-2})\), respectively.

Now consider a length-\(t\) sequence of SWAP-test outcomes containing \(j<t\) symmetric outcomes; such a sequence contains \(j+1\) recovery attempts. Conditioned on \(E_{t}\), the state of the first \(k\) qudits at the beginning of all of these recovery attempts must lie in the \((2,1^{k-2})\) component. Ignoring the additional restriction that no attempt may already have accumulated \(L\) consecutive antisymmetric outcomes, and summing over all such sequences, we obtain
\begin{align}
\Pr(E_t)
&\le
\sum_{j=0}^{t}
\binom{t}{j}
(1-a)^{j+1}
(1-r)^jr^{t-j}
\notag\\
&=
(1-a)(1-a(1-r))^t
\notag\\
&\le
e^{-a(1-r)t}.
\label{eq:recovery-tail}
\end{align}
Therefore, choosing
\begin{align}
t
=
\frac{1}{a(1-r)}\ln\!\left(\frac{1}{\delta}\right)
=
\frac{2k(k-1)}{k+1}
\ln\!\left(\frac{1}{\delta}\right)
\end{align}
ensures
\begin{align}
\Pr\!\left(
N
>
L+
\frac{2k(k-1)}{k+1}
\ln\!\left(\frac{1}{\delta}\right)
\right)
\le
\delta.
\end{align}
Thus, taking the recovery cutoff to be
\begin{align}\label{Eqn:recovery cutuff}
T
=
L+
\frac{2k(k-1)}{k+1}
\ln\!\left(\frac{1}{\delta}\right)
\end{align}
ensures that the protocol reaches the cutoff before terminating naturally with probability at most \(\delta\). This equivalent to Eq.~\eqref{Eqn:recovery-tail},
\begin{align}
p_{\mathrm{cut}}
=
\Pr(N>T)
\le
e^{-a(1-r)(T-L)}
=
e^{-c(T-L)},
\end{align}
where
\begin{align}
c
=
a(1-r)
=
\frac{k+1}{2k(k-1)}.
\end{align}
\subsection{Auxiliary results}
\begin{lemma}[Uniqueness of the antisymmetric multiplicity state]
\label{lem:hook-antisym-unique}
Let \(\lambda=(2,1^{k-1})\), and let
\(S\subset\{1,\ldots,k+1\}\) be any subset of \(k\) qudits.
Within the \(\lambda\)-isotypic component, imposing that the qudits in
\(S\) are fully antisymmetric uniquely determines the state in the
\(\mathbb S_{k+1}\) irrep subsystem.

Equivalently, any state in the \(\lambda\)-isotypic component that is
fully antisymmetric on \(S\) must be of the form
\begin{align}
    \omega_\lambda
    \otimes
    \ket{\eta_{\lambda,S}}
    \bra{\eta_{\lambda,S}},
\end{align}
for some state \(\omega_\lambda\) on \(\mathcal Q_\lambda^d\), where
\(\ket{\eta_{\lambda,S}}\in\mathcal P_\lambda\) is uniquely determined
by \(S\) up to a phase; for the canonical subset
\(S=\{1,\ldots,k\}\),
\(\ket{\eta_{\lambda,S}}=\ket{\eta_\lambda}\) up to a phase, where
\(\ket{\eta_\lambda}\) is the multiplicity state of the representative
state,
\begin{align}
    \ket{\phi_\lambda}
    =
    \ket{\psi_\lambda}
    \otimes
    \ket{\eta_\lambda},
\end{align}
as defined in Eq.~\eqref{Eqn:repstatefactorizes}.
\end{lemma}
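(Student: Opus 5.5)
Since $\Pi^-_S$ is a linear combination of permutations of the qudits in $S$, it commutes with the collective $\mathrm{SU}(d)$ action. By Schur--Weyl duality, restricted to the $\lambda$-isotypic component it therefore takes the form
\[
\Pi^-_S\,\Pi_\lambda=\mathbb I_{\mathcal Q_\lambda^d}\otimes \pi_\lambda(S),
\]
where $\pi_\lambda(S)$ is an orthogonal projector on $\mathcal P_\lambda$. (Lemma~\ref{Lemma:nestedcommutator} guarantees that the product is itself a projector.) The whole statement then reduces to showing that $\pi_\lambda(S)$ has rank exactly one. Once this is established, any state $\rho$ on the isotypic component with $\Pi^-_S\rho\Pi^-_S=\rho$ satisfies $\rho=(\mathbb I\otimes\pi_\lambda(S))\rho(\mathbb I\otimes\pi_\lambda(S))$. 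If $\pi_\lambda(S)=\ket{\eta_{\lambda,S}}\bra{\eta_{\lambda,S}}$, this forces $\rho=\omega_\lambda\otimes\ket{\eta_{\lambda,S}}\bra{\eta_{\lambda,S}}$, where $\omega_\lambda=\bra{\eta_{\lambda,S}}\rho\ket{\eta_{\lambda,S}}$ is the partial matrix element on $\mathcal P_\lambda$. The same factorization holds with a reference system attached, which is the form needed in Sec.~\ref{section:SWAPtoAntisymm}.

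To compute the rank, note that $\pi_\lambda(S)$ projects $\mathcal P_\lambda$ onto its sign-isotypic part under the subgroup $\mathbb S_S\cong\mathbb S_k$. Its rank is thus the multiplicity of the sign irrep $(1^k)$ in $\mathrm{Res}^{\mathbb S_{k+1}}_{\mathbb S_k}\mathcal P_{(2,1^{k-1})}$. I will compute this with the branching rule of Appendix~\ref{AppA:pieris}, namely Pieri's rule with one removed box. The removable corners of $(2,1^{k-1})$ are the box in row $1$, column $2$ and the box at the bottom of the first column, so the restriction is $\mathcal P_{(1^k)}\oplus\mathcal P_{(2,1^{k-2})}$. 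The sign representation therefore appears exactly once. Equivalently, one can compare characters, using $\dim\mathcal P_{(2,1^{k-1})}=k=1+(k-1)$. For a non-canonical $S$, I conjugate by a permutation $P_\pi$ mapping $\{1,\dots,k\}$ to $S$. This gives $\pi_\lambda(S)=p_\lambda(\pi)\pi_\lambda(\{1,\dots,k\})p_\lambda(\pi)^\dagger$, which still has rank one, and $\ket{\eta_{\lambda,S}}=p_\lambda(\pi)\ket{\eta_{\lambda,\{1..k\}}}$, which is determined up to a phase.

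The step needing the most care is the identification with $\ket{\eta_\lambda}$ for $S=\{1,\dots,k\}$. Here $\lambda=(2,1^{k-1})$ has $\lambda'_1=k$ and $\lambda'_2=1$, so the representative state is $\ket{\phi_\lambda}=\ket{0\wedge\cdots\wedge(k-1)}\otimes\ket0$. It is manifestly antisymmetric on the first $k$ qudits, so $\Pi^-_S\ket{\phi_\lambda}=\ket{\phi_\lambda}$. By Lemma~\ref{AppAsub:PurePermutationRegister}, $\ket{\phi_\lambda}=\ket{\psi_\lambda}\otimes\ket{\eta_\lambda}$, and applying $\mathbb I\otimes\pi_\lambda(S)$ gives $\pi_\lambda(S)\ket{\eta_\lambda}=\ket{\eta_\lambda}$. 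Since $\pi_\lambda(S)$ has rank one and $\ket{\eta_\lambda}\neq0$, it follows that $\ket{\eta_{\lambda,S}}=\ket{\eta_\lambda}$ up to a phase. The one point to check carefully is that the qudit labels in the Schur--Weyl decomposition match the paper's convention that the first column occupies the first $k$ positions.

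An alternative route avoids branching rules altogether. The argument in the proof of Lemma~\ref{Lemma:removingcolumns} shows that $\Pi^-_{S}\,\mathcal T(\phi_\lambda)\,\Pi^-_{S}\propto\ket{\Lambda_S}\bra{\Lambda_S}\otimes\mathcal T(\ket0\bra0)$, which is rank one. Writing $\mathcal T(\phi_\lambda)=\psi_\lambda\otimes\mathbb I/k$, the left-hand side equals $\psi_\lambda\otimes\pi_\lambda(S)/k$, and this is rank one only if $\pi_\lambda(S)$ is rank one. I would include this as a cross-check of the rank computation.
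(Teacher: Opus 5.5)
Your argument is correct, but the step that carries the weight, namely that $\pi_\lambda(S)$ has rank one, is done differently from the paper's appendix proof.

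You identify $\pi_\lambda(S)$ with the sign-isotypic projector of $\mathbb S_S\cong\mathbb S_k$ on $\mathcal P_\lambda$. You then read off $\mathrm{Res}^{\mathbb S_{k+1}}_{\mathbb S_k}\mathcal P_{(2,1^{k-1})}=\mathcal P_{(1^k)}\oplus\mathcal P_{(2,1^{k-2})}$ from the branching rule.

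The paper's proof is exactly the argument you list as a ``cross-check.'' It argues by contradiction: if $\operatorname{rank}E_S>1$, then $\Pi^-_S\mathcal T(\phi_\lambda)\Pi^-_S=\psi_\lambda\otimes E_S/\dim\mathcal P_\lambda$ would have rank greater than one. That contradicts Lemma~\ref{Lemma:removingcolumns}, whose proof is pure label counting on $\ket{\phi_\lambda}$.

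The identification of the image with $\ket{\eta_\lambda}$ via $\Pi^-_S\ket{\phi_\lambda}=\ket{\phi_\lambda}$, and the extension to arbitrary $S$ by permutation conjugation, are the same in both proofs.

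The two routes buy different things:
\begin{itemize}
\item The paper's route is self-contained. It avoids external representation theory, in keeping with its stated aim of deriving the Pieri-type facts it needs from the representative states alone.
\item Your route is shorter given the standard branching rule, and it yields more. It also produces the complementary $(k-1)$-dimensional irrep $(2,1^{k-2})$, which the paper invokes separately (citing the branching rule in the main text) to obtain the block form of $\sigma_L$ in Eq.~\eqref{Eqn:sigmaLdecomp}.
\end{itemize}
Your explicit step $\rho=\omega_\lambda\otimes\ket{\eta_{\lambda,S}}\bra{\eta_{\lambda,S}}$ with $\omega_\lambda=\bra{\eta_{\lambda,S}}\rho\ket{\eta_{\lambda,S}}$ is also a bit more complete than the paper, which asserts this form once rank one is known.

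One small caveat: the aside ``compare characters, using $\dim\mathcal P_{(2,1^{k-1})}=k=1+(k-1)$'' is not an argument by itself, because a dimension count cannot fix multiplicities. What does the work is the branching rule, or equivalently $\mathcal P_{(2,1^{k-1})}\cong\mathrm{sgn}\otimes\mathcal P_{(k,1)}$ combined with the elementary restriction of the standard representation.
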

\begin{proof}
Let \(S=\{1,\ldots,k\}\) denote the canonical subset and let
\(\lambda=(2,1^{k-1})\). Since the antisymmetric projector
\(\Pi^-_S\) is \(\mathrm{SU}(d)\)-invariant, its restriction to the
\(\lambda\)-isotypic component has the form
\begin{align}
    \Pi^-_S
    =
    \mathbb I_{\mathcal Q_\lambda^d}
    \otimes E_S,
\end{align}
where \(E_S\) is a projector acting on \(\mathcal P_\lambda\). 

Suppose that antisymmetry on the first \(k\) qudits does not uniquely
determine the state in the \(\mathbb S_{k+1}\) irrep subsystem. Then
\begin{align}
    \operatorname{rank}(E_S)>1.
\end{align}
Using Eq.~\eqref{Eq:twirledRepstate}, the factorization of the twirled representative state,
\begin{align}
    \mathcal T\!\left(
    \ket{\phi_\lambda}\bra{\phi_\lambda}
    \right)
    =
    \ket{\psi_\lambda}\bra{\psi_\lambda}
    \otimes
    \frac{\mathbb I_{\mathcal P_\lambda}}
    {\dim\mathcal P_\lambda},
\end{align}
we obtain
\begin{align}
    \Pi^-_S\,
    \mathcal T\!\left(
    \ket{\phi_\lambda}\bra{\phi_\lambda}
    \right)
    \Pi^-_S
    =
    \ket{\psi_\lambda}\bra{\psi_\lambda}
    \otimes
    \frac{E_S}{\dim\mathcal P_\lambda}.
\end{align}
Hence, if \(\operatorname{rank}(E_S)>1\), the post-measurement state
has rank greater than one.

On the other hand,  by
Lemma~\ref{Lemma:removingcolumns},
\begin{align}
    \Pi^-_S\,
    \mathcal T\!\left(
    \ket{\phi_\lambda}\bra{\phi_\lambda}
    \right)
    \Pi^-_S
    \propto
    \ket{\Lambda_S}\bra{\Lambda_S}
    \otimes
    \ket{\phi_{(1)}}\bra{\phi_{(1)}}.
\end{align}
The state on the right-hand side is pure, which contradicts
\(\operatorname{rank}(E_S)>1\). Hence
\begin{align}
    \operatorname{rank}(E_S)=1.
\end{align}

Since the representative state \(\ket{\phi_\lambda}\) is itself fully
antisymmetric on the first \(k\) qudits, the unique state in the image of
\(E_S\) must be the multiplicity state \(\ket{\eta_\lambda}\). Therefore,
any state in the \(\lambda\)-isotypic component that is fully antisymmetric
on the first \(k\) qudits has the form,
\begin{align}
    \omega_\lambda
    \otimes
    \ket{\eta_\lambda}\bra{\eta_\lambda},
\end{align}
for some state \(\omega_\lambda\) on \(\mathcal Q_\lambda^d\).

Finally, let \(S\subseteq\{1,\ldots,k+1\}\) be any subset of \(k\) qudits,
and let \(\pi\in\mathbb S_{k+1}\) map \(S\) to \(\{1,\ldots,k\}\).
Since permutations act on the \(\lambda\)-isotypic component as,
\begin{align}
    P_\pi
    =
    \mathbb I_{\mathcal Q_\lambda^d}
    \otimes
    p_\lambda(\pi),
\end{align}
the same argument implies that the multiplicity state is again uniquely
determined. Any state that is fully antisymmetric on \(S\)
has the form,
\begin{align}
   \omega_{\lambda} \otimes p_{\lambda}(\pi)\ket{\eta_{\lambda}} \bra{\eta_{\lambda}}p_{\lambda}^{\dagger}(\pi):=  \omega_\lambda
    \otimes
    \ket{\eta_{\lambda,S}} \bra{\eta_{\lambda,S}}
\end{align}
\end{proof}

\begin{lemma} \label{lemma:21k-1random SWAP}
Let
\(
|\Lambda\rangle
=
|0\wedge1\wedge\cdots\wedge(k-1)\rangle,
\)
and
\(
\sigma
=
\mathcal{T}
\!\left(
|0\rangle\!\langle0|
\otimes
|\Lambda\rangle\!\langle\Lambda|
\right),
\) where \(\mathcal{T}\) denotes the permutation twirl.
Then the SWAP test on registers \(0\) and \(1\) accepts with probability,
$$\Tr\!\left[
\frac{\mathbb{I}+P_{(0,1)}}{2}\sigma
\right]
=
\frac1k.
$$

Since \(P_{(0,1)}\) is unitary invariant, the same probability holds for any permutation invariant state, \(\omega \otimes \mathbb{I}_{\mathcal{P}_{(2,1^{k-1})}}\), supported on the \((2, 1^{k-1})\) isotypic sector. Equivalently, a SWAP test applied to a uniformly random pair of qudits of an arbitrary state in the \((2, 1^{k-1})\) sector returns the symmetric outcome with probability \(1/k\).

\end{lemma}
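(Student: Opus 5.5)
The plan is to reduce the acceptance probability to the expectation value of a transposition and then compute that expectation by a direct count over pairs. Since $\sigma=\mathcal T(\cdot)$ is permutation invariant, we have $\Tr[P_{(0,1)}\sigma]=\binom{k+1}{2}^{-1}\sum_{a<b}\bra{0,\Lambda}P_{(a,b)}\ket{0,\Lambda}$, so the accepted weight is
\begin{equation*}
\Tr\!\left[\tfrac{\mathbb I+P_{(0,1)}}{2}\sigma\right]=\tfrac12\left(1+\bar\chi\right),
\end{equation*}
where $\bar\chi$ is the average of $\bra{0,\Lambda}P_{(a,b)}\ket{0,\Lambda}$ over all $\binom{k+1}{2}$ transpositions of the $k+1$ registers. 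This replaces the twirl by a finite combinatorial average on the single pure state $\ket{0}\otimes\ket{\Lambda}$.

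I would split the transpositions into two classes.
\begin{enumerate}
\item For the $\binom{k}{2}$ pairs lying inside the $\Lambda$ block, the state is antisymmetric, so each contributes $-1$ (property 3 of antisymmetric projectors in Appendix~\ref{AppAsub: Antisym}).
\item For the $k$ pairs $(0,j)$, expand $\ket{\Lambda}$ in computational basis terms $\ket{\pi(0),\dots,\pi(k-1)}$. Swapping register $0$ (which holds $\ket0$) with register $j$ maps a basis term to an orthogonal one unless register $j$ already holds label $0$, in which case the term is fixed. Hence $\bra{0,\Lambda}P_{(0,j)}\ket{0,\Lambda}$ equals the probability that register $j$ of $\ket\Lambda$ holds $0$, which is $1/k$.
\end{enumerate}
Summing gives $-\binom k2+1$. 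Dividing by $\binom{k+1}{2}$ yields
\begin{equation*}
\bar\chi=\frac{2-k(k-1)}{k(k+1)}=-\frac{k-2}{k},
\end{equation*}
and therefore the acceptance probability is $\tfrac12\bigl(1-\tfrac{k-2}{k}\bigr)=\tfrac1k$.

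As a cross-check I would compare with the transposition character formula Eq.~\eqref{eq:swapcharacter} for $\lambda=(2,1^{k-1})$ and $n=k+1$. There $\sum\lambda_i^2=k+3$ and $\sum(2i-1)\lambda_i=k^2+1$, so $\chi=-(k-2)(k+1)/\bigl(k(k+1)\bigr)=-(k-2)/k$, in agreement with the direct count. This check is valid because $\ket0\otimes\ket\Lambda$ is, up to reordering of registers, the representative state $\ket{\phi_{(2,1^{k-1})}}$ (columns of lengths $k$ and $1$), so $\sigma$ lies in that isotypic sector.

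For the extension claim, I would invoke Schur--Weyl duality. The operator $P_{(0,1)}$ is $\mathrm{SU}(d)$-invariant, so on the sector it acts as $\mathbb I_{\mathcal Q}\otimes p(0,1)$. For any state of the form $\omega\otimes\mathbb I/m$ its expectation is $\tr\,p(0,1)/m$, which does not depend on $\omega$ and equals the value computed above. For an arbitrary state in the sector, choosing a uniformly random pair is the same as twirling, which produces exactly such a form. The only step needing care is the $(0,j)$ overlap computation, but it is elementary; I do not expect a genuine obstacle.
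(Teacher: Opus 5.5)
Your proof is correct and is essentially the paper's argument. You split the twirled transposition average into in-block pairs, which give $-1$ (equivalently, they are annihilated by the symmetric projector), and the $k$ cross pairs, which give overlap $1/k$ (the paper's $\bra{0}\Pi_k/k\ket{0}$); this is the same computation as the paper's conditioning on the distinguished $\ket{0}$ landing on register $0$ or $1$, with probability $2/(k+1)=k/\binom{k+1}{2}$, and your cross-check against Eq.~\eqref{eq:swapcharacter} and the explicit Schur--Weyl argument for the extension claim supply details the paper leaves implicit.
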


\begin{proof}

 Recall that the permutation twirl \(\mathcal{T}\) may be viewed as applying a uniformly random permutation to the \(k+1\) qudits. The state
\(
|0\rangle\otimes|\Lambda\rangle
\)
contains a unique copy of \(|0\rangle\) outside the antisymmetric block. After twirling, this distinguished qudit is equally likely to occupy any of the \(k+1\) registers.

The SWAP test acts only on registers \(0\) and \(1\). Thus the only permutations that can contribute are those for which the distinguished \(|0\rangle\) is mapped to register \(0\) or register \(1\). This occurs with probability
\(2/(k+1). \)

If the distinguished \(|0\rangle\) is mapped to any other register, then registers \(0\) and \(1\) both belong to the antisymmetric block \(|\Lambda\rangle\). In that case the reduced state on registers \(0\) and \(1\) is supported entirely on the antisymmetric subspace, therefore \((\mathbb{I}+P_{(0,1)})/2 \)
annihilates it and those permutations contribute zero to the expectation value. Thus,
\begin{align}
\Tr\!\left[
\frac{\mathbb{I}+P_{(0,1)}}{2}\,\sigma
\right]
&=
\frac{2}{k+1}
\Tr\!\left[
\frac{\mathbb{I}+P_{(0,1)}}{2}
\Bigl(
|0\rangle\!\langle0|
\otimes
|\Lambda\rangle\!\langle\Lambda|
\Bigr)
\right].
\end{align}

Since \(P_{(0,1)}\) acts only on subsystems \(0\) and \(1\), we may trace out the remaining \(k-1\) subsystems. The one-body reduced density matrix of the antisymmetric state \(|\Lambda\rangle\) is,

\[
\Tr_{2,\ldots,k}
\bigl(
|\Lambda\rangle\!\langle\Lambda|
\bigr)
=
\frac{\Pi_{k}}{k},
\]

where \(\Pi_{k}\) is an orthogonal projection to the span of
\(\{|0\rangle,\ldots,|k-1\rangle\}\).
Hence

\begin{align}
\Tr\!\left[
  \frac{\mathbb{I}+P_{(0,1)}}{2}\,\sigma
\right]
&=
\frac{2}{k+1}
\Tr\!\left[
  \frac{\mathbb{I}+P_{(0,1)}}{2}
\Bigl(
|0\rangle\!\langle0|
\otimes
\frac{\Pi_{k}}{k}
\Bigr)
\right]
\\
&=
\frac{2}{k+1}
\left(
\frac12+\frac{1}{2k}
\right)
\\
&=
\frac{1}{k}.
\end{align}

Thus the probability that the SWAP test accepts on the state \(\sigma\) is \(\frac1{k}\).
\end{proof}

\begin{lemma}
\label{Lemma:SWAPSuccess21k-1}
Let
\begin{align}
\mathcal V_{(2,1^{k-1})}
:=
\operatorname{span}
\left\{
P_{\pi}
\left(
\ket{0}\otimes\ket{\Lambda}
\right)
:
\pi\in\mathbb S_{k+1}
\right\},
\end{align}
which carries the \(\mathbb S_{k+1}\) irrep corresponding to the partition
\((2,1^{k-1})\). Within this irrep space, the subspace in which the first
two qudits are symmetric is one-dimensional. In particular, if
\(\ket{\omega}\in\mathcal V_{(2,1^{k-1})}\) satisfies
\begin{align}
\frac{\mathbb I+P_{(0,1)}}{2}\ket{\omega}
=
\ket{\omega},
\end{align}
then
\begin{align}
\ket{\omega}
\propto
\frac{\mathbb I+P_{(0,1)}}{2}
\ket{0}\otimes\ket{\Lambda}.
\end{align}

Note that \(\ket{0} \otimes \ket{\Lambda}\) is unentangled w.r.t. the component \(\mathcal{Q}_{(2, 1^{k-1})}^{d}\otimes \mathcal{P}_{(2, 1^{k-1})}\) of the Schur-Weyl decomposition, therefore, the unitary-invariant operator, \((\mathbb I+P_{(0,1)})/2\), acts as a projection to a 1d subspace on the \(\mathbb{S}_{k+1}\) irrep subsystem, \(\mathcal{P}_{(2, 1^{k-1})}\).
\end{lemma}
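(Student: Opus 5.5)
Given $\ket{\omega}\in\mathcal V_{(2,1^{k-1})}$ with $\Pi^+_{01}\ket{\omega}=\ket{\omega}$, where $\Pi^+_{01}=(\mathbb I+P_{(0,1)})/2$, I will show it is proportional to $v_0+v_1$, with $v_i$ defined below. The strategy is to pick an explicit spanning set of $\mathcal V_{(2,1^{k-1})}$, show that $\Pi^+_{01}$ sends every element of it into $\operatorname{span}\{v_0+v_1\}$, and conclude from $\ket{\omega}=\Pi^+_{01}\ket{\omega}$.

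\textbf{Spanning set.} Label the registers $0,1,\ldots,k$ and set $v_i:=P_{(0,i)}(\ket{0}\otimes\ket{\Lambda})$, with $v_0:=\ket{0}\otimes\ket{\Lambda}$. The vector $v_i$ has the extra $\ket0$ in register $i$ and a fully antisymmetric block on the other $k$ registers. Any $\pi\in\mathbb S_{k+1}$ factors as $\pi=\tau\sigma$, where $\sigma$ permutes registers $1,\ldots,k$ and $\tau$ is either the identity or a transposition $(0,i)$. This is the same coset decomposition used for $M_r$ in Appendix~\ref{App:Concepts}. Since $P_\sigma v_0=\pm v_0$, every $P_\pi v_0$ is $\pm$ some $v_i$. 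Hence $\mathcal V_{(2,1^{k-1})}=\operatorname{span}\{v_0,\ldots,v_k\}$. These vectors satisfy the single relation $\sum_i v_i=0$, so the span has dimension $k=\dim\mathcal P_{(2,1^{k-1})}$, but this relation is not needed.

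\textbf{Projecting the generators.} For $i\ge2$, registers $0$ and $1$ both lie in the antisymmetric block of $v_i$. Then $P_{(0,1)}v_i=-v_i$, so $\Pi^+_{01}v_i=0$. For the remaining two vectors, $P_{(0,1)}v_0=v_1$ by definition, so $\Pi^+_{01}v_0=\Pi^+_{01}v_1=\tfrac12(v_0+v_1)$. Therefore the image of $\Pi^+_{01}$ restricted to $\mathcal V_{(2,1^{k-1})}$ is $\operatorname{span}\{v_0+v_1\}$.

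\textbf{Non-vanishing.} It remains to check that $v_0+v_1\neq0$, so that the image is exactly one-dimensional. I will do this by computing the norm: $\|\Pi^+_{01}v_0\|^2=\langle v_0|\Pi^+_{01}|v_0\rangle=\tfrac12+\tfrac1{2k}>0$. This uses the one-body marginal $\Pi_k/k$ of $\ket\Lambda$, exactly as in Lemma~\ref{lemma:21k-1random SWAP}.

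\textbf{Conclusion.} Any $\ket{\omega}$ in the subspace satisfies $\ket{\omega}=\Pi^+_{01}\ket{\omega}\in\operatorname{span}\{v_0+v_1\}$, which is the claimed proportionality to $\Pi^+_{01}(\ket0\otimes\ket\Lambda)$.

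\textbf{Closing remark of the lemma.} Lemma~\ref{AppAsub:PurePermutationRegister} gives $\ket0\otimes\ket\Lambda=\ket{\psi}\otimes\ket{\eta}$, since this vector is the representative state $\ket{\phi_{(2,1^{k-1})}}$ up to a reordering of registers. Consequently $\mathcal V_{(2,1^{k-1})}=\ket{\psi}\otimes\mathcal P_{(2,1^{k-1})}$. Because $\Pi^+_{01}$ is $\mathrm{SU}(d)$-invariant, it has the form $\mathbb I\otimes\widetilde\Pi$ on the isotypic sector, so $\widetilde\Pi$ has rank one on $\mathcal P_{(2,1^{k-1})}$.

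The main obstacle is being careful that the span of the $v_i$ is the whole orbit span, including the signs. Once that is settled, the rest is direct.
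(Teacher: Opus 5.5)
Your proof is correct and is essentially the paper's argument: it uses the same spanning set $v_0,\ldots,v_k$ indexed by the position of the extra $\ket{0}$, and the same action $P_{(0,1)}v_0=v_1$, $P_{(0,1)}v_j=-v_j$ for $j\ge 2$, while your coset argument for the signs and the explicit norm check $\|\Pi^+_{01}v_0\|^2=\tfrac12+\tfrac1{2k}$ make explicit what the paper leaves implicit. One small correction to the remark you flag as unused: with $v_i:=P_{(0,i)}v_0$, the linear relation is $v_0=\sum_{i=1}^k v_i$, not $\sum_i v_i=0$; this follows from $\Pi^-_{0,\ldots,k}v_0=0$ together with the identity $\Pi^-_{S\cup\{r\}}=\frac{1}{k+1}(\mathbb I-\sum_{i\in S}P_{(i,r)})\Pi^-_S$, and for $k=2$ one checks directly that $v_1+v_2=v_0$.
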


\begin{proof}
Let \(v_0=|0\rangle\otimes|\Lambda\rangle.\)
Consider the states $\{P_\pi v_0:\pi\in S_{k+1}\}$ The only degree of freedom is the location of the distinguished copy of \(|0\rangle\), so we may label them $\{v_0,v_1,\ldots,v_k\},$ where \(v_j\) denotes the state obtained by moving the $0$th register to register \(j\).
We now determine the action of \(P_{(0,1)}\) on this spanning set. Since
\(P_{(0,1)}\) exchanges registers \(0\) and \(1\),
\[
P_{(0,1)}v_0=v_1,
\qquad
P_{(0,1)}v_1=v_0.
\]
For \(j\ge 2\), both registers \(0\) and \(1\) belong to the antisymmetric block, and therefore $P_{(0,1)}v_j=-v_j.$ Thus \(+1\) eigenspace of \(P_{(0,1)}\) is therefore one-dimensional and is spanned by $v_0+v_1.$ Since
\[
v_0+v_1
=
(\mathbb{I}+P_{(0,1)})v_0,
\]
it follows that every vector in \(V\) satisfying $\frac{\mathbb{I}+P_{(0,1)}}{2}|\omega\rangle=|\omega\rangle$ is proportional to $(\mathbb{I}+P_{(0,1)})
\bigl(
|0\rangle\otimes|\Lambda\rangle
\bigr).$
Equivalently,
\[
|\omega\rangle
\propto
\frac{\mathbb{I}+P_{(0,1)}}{2}
\bigl(
|0\rangle\otimes|\Lambda\rangle
\bigr).
\]
\end{proof}

\begin{lemma}
\label{lem:recovery-probability}
Let $\omega$ be a normalized state such that
\[
|\omega\rangle
\propto
\frac{\mathbb{I}+P_{(0,1)}}{2}
\bigl(
|0\rangle\otimes|\Lambda\rangle
\bigr).
\]
Then
\[
\Tr\!\left(
\Pi^-_{1,\ldots,k}
|\omega\rangle\!\langle\omega|
\right)
=
\frac{k+1}{2k}.
\]
Equivalently, conditioned on a successful SWAP test, the probability that the
resulting \(k\)-qudit subsystem lies in the fully antisymmetric subspace is $(k+1)/2k.$

Then, by Subsection~\ref{subsec:reducingtorepstates} and
Lemma~\ref{Lemma:SWAPSuccess21k-1}, the same statement holds for an
arbitrary state in the \((2,1^{k-1})\) isotypic component conditioned on
qudits \(0\) and \(1\) being symmetric: the probability that an
antisymmetric measurement on qudits \(1,\ldots,k\) projects onto the
fully antisymmetric subspace is $(k+1)/2k.$
\end{lemma}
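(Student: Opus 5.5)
The plan is a direct computation with the spanning vectors $v_0,v_1$ from the proof of Lemma~\ref{Lemma:SWAPSuccess21k-1}. Let $v_0=\ket{0}\otimes\ket{\Lambda}$, with $\ket{\Lambda}$ on registers $1,\ldots,k$, and let $v_1=P_{(0,1)}v_0$. Then $\ket{\omega}=(v_0+v_1)/\|v_0+v_1\|$, and we need
\begin{align}
\Tr\!\left(\Pi^-_{1,\ldots,k}\ket{\omega}\!\bra{\omega}\right)
=\frac{\bra{v_0}\Pi^-v_0\rangle+2\,\mathrm{Re}\bra{v_0}\Pi^-v_1\rangle+\bra{v_1}\Pi^-v_1\rangle}{2+2\,\mathrm{Re}\braket{v_0|v_1}},
\end{align}
writing $\Pi^-:=\Pi^-_{1,\ldots,k}$. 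The single input used repeatedly is the overlap $\braket{v_0|v_1}=\Tr\!\left[P_{(0,1)}\bigl(\ket0\!\bra0\otimes \Pi_k/k\bigr)\right]=\bra0\Pi_k\ket0/k=1/k$. This follows from the one-body reduced state $\Pi_k/k$ of $\ket{\Lambda}$, exactly as in the proof of Lemma~\ref{lemma:21k-1random SWAP}. Hence the denominator is $2(k+1)/k$.

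For the numerator, $v_0$ is already antisymmetric on $1,\ldots,k$, so $\Pi^-v_0=v_0$. This gives $\bra{v_0}\Pi^-v_0\rangle=1$ and $\bra{v_0}\Pi^-v_1\rangle=\braket{v_0|v_1}=1/k$.

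The only nontrivial term is $\bra{v_1}\Pi^-v_1\rangle$. For it I will use the decomposition identity of Appendix~\ref{AppAsub: Antisym} (item 4), with the roles relabelled so that the already-antisymmetric block is $\{2,\ldots,k\}$ and the added qudit is $1$:
\begin{align}
\Pi^-_{1,\ldots,k}=\frac1k\,\Pi^-_{2,\ldots,k}\Bigl(\mathbb I-\sum_{i=2}^k P_{(i,1)}\Bigr).
\end{align}
The vector $v_1$ carries $\ket{\Lambda}$ on registers $\{0,2,\ldots,k\}$, so it is antisymmetric on $\{2,\ldots,k\}$ and $\Pi^-_{2,\ldots,k}$ acts as the identity on it. Each $\bra{v_1}P_{(i,1)}\ket{v_1}$ swaps the lone $\ket0$ with one register of the antisymmetric block. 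By relabelling symmetry it therefore equals $\braket{v_0|P_{(0,1)}|v_0}=1/k$. This gives $\bra{v_1}\Pi^-v_1\rangle=\tfrac1k\bigl(1-\tfrac{k-1}{k}\bigr)=1/k^2$.

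Combining the terms, the numerator is $(1+1/k)^2$, and dividing by $2(k+1)/k$ yields $(k+1)/(2k)$. The main care needed is this last term: checking that the antisymmetric block of $v_1$ contains $\{2,\ldots,k\}$, so that the identity applies, and that the transposition expectations reduce to the $1/k$ overlap. It is also worth sanity-checking the result at $k=2$, where it gives $3/4$.

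For the extension to arbitrary states, I would argue as follows. By Lemma~\ref{Lemma:SWAPSuccess21k-1}, within the $(2,1^{k-1})$ sector the symmetric projector on qudits $0,1$ acts as a rank-one projector $\ket{\beta}\!\bra{\beta}$ on $\mathcal P_{(2,1^{k-1})}$, tensored with the identity on $\mathcal Q$. Any state conditioned on that symmetric outcome is therefore $\omega_\lambda\otimes\ket\beta\!\bra\beta$. The operator $\Pi^-_{1,\ldots,k}$ is $\mathrm{SU}(d)$-invariant, so it acts as $\mathbb I\otimes E$ on this sector, and its expectation value equals $\bra\beta E\ket\beta$. This value is independent of $\omega_\lambda$, so it coincides with the value computed above for the representative state, in line with the reduction of Sec.~\ref{subsec:reducingtorepstates}.
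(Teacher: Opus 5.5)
Your proof is correct, but it takes a different route from the paper's. You expand $\ket\omega\propto v_0+v_1$ directly and reduce every matrix element to the single overlap $\braket{v_0|v_1}=1/k$. For this you use the one-body marginal $\Pi_k/k$ of $\ket\Lambda$ and the recursive identity $\Pi^-_{1,\ldots,k}=\frac1k\Pi^-_{2,\ldots,k}\bigl(\mathbb I-\sum_{i=2}^kP_{(i,1)}\bigr)$. The relabelling step is sound: $\bra{v_1}P_{(i,1)}\ket{v_1}=\bra{v_0}P_{(0,i)}\ket{v_0}$, and conjugating by $P_{(1,i)}$, under which $v_0$ only picks up a sign, turns this into $\bra{v_0}P_{(0,1)}\ket{v_0}=1/k$.

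The paper instead computes $\Tr\bigl(\Pi^-_{0,\ldots,k-1}\mathcal T(\ket\omega\bra\omega)\bigr)$ in two ways and equates them. First, it uses $\mathcal T(\ket\omega\bra\omega)=\mathcal T(\ket0\bra0\otimes\ket\Lambda\bra\Lambda)$ and conditions on where the lone $\ket0$ lands, giving $\frac1{k+1}+\frac{k}{k+1}\cdot\frac1{k^2}=\frac1k$. Second, it conditions on where the symmetric pair of $\ket\omega$ lands. Only permutations sending exactly one member of the pair to register $k$ contribute; this happens with probability $2/(k+1)$, and each such permutation contributes $\Tr(\Pi^-_{1,\ldots,k}\ket\omega\bra\omega)$. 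Equating the two expressions yields $(k+1)/(2k)$.

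Both routes hinge on the same quantity. Your $\bra{v_1}\Pi^-_{1,\ldots,k}\ket{v_1}=1/k^2$ is, after relabelling, exactly the paper's $\bigl\|\Pi^-_{0,\ldots,k-1}(\ket0\otimes\ket\Lambda)\bigr\|^2$.

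Your derivation of this quantity via the projector identity is cleaner than the paper's informal basis-expansion argument. Your main computation also needs neither the twirl nor the Schur--Weyl factorization; these enter only in the extension step. The paper's argument is shorter given its twirl machinery, and it ties the conditional recovery probability directly to the unconditional antisymmetric probability $1/k$ of the twirled sector state. Your extension to arbitrary sector states, via the rank-one factor $\ket\beta\bra\beta$ and the expectation $\bra\beta E\ket\beta$, matches the reasoning the paper uses for $\widetilde Q$ in the main text.
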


\begin{proof}
    Let $\ket{w} \propto \frac{\mathbb I + P_{(0,1)}}{2}\ket{0} \otimes \ket{\Lambda}$. Note that 
    \begin{align}
        \mathcal{T}(w) = \mathcal{T}(\ket{0}\bra{0}\otimes \Lambda). 
    \end{align}
    Now,
    \begin{align}
      \tr \left(\Pi^-_{0,\cdots, k-1} \mathcal{T}(w) \right) &= \tr\left(\Pi^-_{0,\cdots, k-1} \mathcal{T}(\ket{0}\bra{0}\otimes \Lambda)\right)\notag\\
                                            &= \frac{1}{k+1}\cdot 1 + \frac{k}{k+1}\tr\left( \Pi^-_{0,\cdots, k-1} \ket{0}\bra{0}\otimes \Lambda \right)\notag\\
        &=\frac{1}{k+1} \cdot 1 + \frac{k}{k+1}\cdot\frac{1}{k^2} = \frac{1}{k}.        \label{eq:14}
    \end{align}
    The factor \(1/k^2\) follows from the overlap,
\[
\left\|
\Pi^-_{0,\ldots,k-1}
\bigl(
|0\rangle\otimes|\Lambda\rangle
\bigr)
\right\|^2
=
\frac{1}{k^2}.
\]
Indeed, substituting the expansion,
\[
|\Lambda\rangle
=
\frac{1}{\sqrt{k!}}
\sum_{\pi\in S_k}
\operatorname{sgn}(\pi)
|\pi(0),\ldots,\pi(k-1)\rangle,
\]
only the terms for which the second copy of the label \(0\) is not among the
projected registers contribute to the antisymmetric projection. These surviving
terms have total overlap \(1/k\) with the normalized antisymmetric state, and
therefore the squared overlap is \(1/k^2\).\\

Furthermore,
\begin{align}\label{eq:12}
  &\tr\left(\Pi^-_{0,\cdots, k-1} \mathcal{T}(w) \right) = \frac{2}{k+1}\tr\left( \Pi^-_{1,\cdots, k} w \right).
\end{align}
To understand this identity, interpret \(\mathcal{T}(\omega)\) as applying a
uniformly random permutation to \(\omega\). The state \(\omega\) is supported on
  the symmetric subspace of registers \(1\) and \(2\). If in a random permutation of \(\omega\), the symmetric pair is part of  \(0,\ldots,k-1\), then the product with the antisymmetric projection, \(\Pi^-_{0,\cdots, k-1}\), yields zero. The only nonzero contribution occurs when exactly one of the two registers
in the symmetric pair is sent to the register \(k\); this happens with
probability $2/(k+1).$
In either of these two cases, by relabeling registers, the contribution is $\Tr\!\left(\Pi^-_{1,\ldots,k}\omega \right)$. Combining Eq.~\eqref{eq:12} and Eq.~\eqref{eq:14} we have, 
\begin{align}
  \tr\left( \Pi^-_{1,\cdots, k} w \right) = \frac{k+1}{2k}.
\end{align}
\end{proof}

\newpage

\section{Probability Facts}\label{App:probfacts}
\subsection{Comparison with geometric waiting times}
\begin{theorem}\label{Thm:Domination}
Let \(X_1,\dots,X_n\) be nonnegative integer-valued random variables.
Suppose that, for every \(i\in\{1,\dots,n\}\), every realization
\(x_1,\dots,x_{i-1}\) satisfying \(x_j\le R_j\) for all \(j<i\),
and every \(r\ge0\),
\begin{align}\label{Eqn:conditionalDomination}
\Pr\left(
X_i>r
\mid
X_1=x_1,\dots,X_{i-1}=x_{i-1}
\right)
\le
\Pr\left(Z_i>r\right).
\end{align}
Let \(Z_1,\dots,Z_n\) be completely independent non-negative integer-valued random variables,
independent of \(X_1,\dots,X_n\). Then, for every \(T\ge0\),
\begin{align}\label{Eqn:probcondition}
\Pr\left(
\sum_{i=1}^n X_i>T,\,
X_1\le R_1,\dots,X_n\le R_n
\right)
\le
\Pr\left(
\sum_{i=1}^n Z_i>T
\right).
\end{align}
\end{theorem}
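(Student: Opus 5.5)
The plan is a sequential quantile coupling. I would build, on one probability space, a copy $(\tilde X_1,\dots,\tilde X_n)$ with the same joint law as $(X_1,\dots,X_n)$ and a family $(\tilde Z_1,\dots,\tilde Z_n)$ of independent variables with $\tilde Z_i\stackrel{d}{=}Z_i$. The coupling should make $\tilde X_i\le\tilde Z_i$ pointwise for every $i$ on the event $\{\tilde X_j\le R_j\ \forall j\}$. Given this, the event on the left of Eq.~\eqref{Eqn:probcondition} implies $\sum_i\tilde Z_i\ge\sum_i\tilde X_i>T$. Taking probabilities then gives the claim, because the left-hand side depends only on the law of $X$ and the right-hand side only on the marginals of the independent $Z_i$.

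Concretely, let $U_1,\dots,U_n$ be i.i.d.\ uniform on $(0,1)$. For a distribution function $F$ on the nonnegative integers, write $F^{-1}(u):=\min\{m\ge0:F(m)\ge u\}$ for its generalized inverse.
\begin{itemize}
\item Set $\tilde Z_i:=F_{Z_i}^{-1}(U_i)$. Then the $\tilde Z_i$ are independent with the correct marginals.
\item Define $\tilde X_i$ recursively. Given the realized history $h_{i-1}=(\tilde x_1,\dots,\tilde x_{i-1})$, let $F_i(\cdot\mid h_{i-1})$ be the conditional distribution function of $X_i$ given $X_{<i}=h_{i-1}$, and set $\tilde X_i:=F_i^{-1}(U_i\mid h_{i-1})$. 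Histories of probability zero are reached only on a null set and can be handled arbitrarily.
\item By the standard inverse-transform argument, applied step by step, $(\tilde X_1,\dots,\tilde X_n)$ has exactly the joint law of $(X_1,\dots,X_n)$.
\end{itemize}

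The key monotonicity step is the following. Suppose the history satisfies $\tilde x_j\le R_j$ for all $j<i$. Then hypothesis Eq.~\eqref{Eqn:conditionalDomination} says $1-F_i(r\mid h_{i-1})\le 1-F_{Z_i}(r)$, that is, $F_i(r\mid h_{i-1})\ge F_{Z_i}(r)$ for every integer $r\ge0$. Consequently, for every $u$ the set $\{m:F_i(m\mid h)\ge u\}$ contains $\{m:F_{Z_i}(m)\ge u\}$. Hence $F_i^{-1}(u\mid h)\le F_{Z_i}^{-1}(u)$, and so $\tilde X_i\le\tilde Z_i$ whenever the preceding history is admissible.

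Now consider the event $\{\tilde X_j\le R_j\ \forall j\}$. On it, every prefix history is admissible, so the monotonicity step applies at every index and gives $\tilde X_i\le \tilde Z_i$ for all $i$. This finishes the argument. The hypothesis is deliberately only imposed on admissible histories, and the coupling needs domination only on them: outside this event nothing is claimed.

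The main obstacle I expect is rigor rather than ideas. One point is that the conditional-distribution recursion must reproduce the joint law of $X$; since all variables are discrete, this reduces to an elementary computation with point masses. The other is handling zero-probability histories and ties in the generalized inverse. Because all variables are integer-valued, both issues can be dealt with explicitly.
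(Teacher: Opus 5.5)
Your coupling argument is correct, but it follows a different route from the paper's proof.

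The paper never builds a joint probability space. It uses a hybrid replacement: it swaps $X_n$ for $Z_n$, then $X_{n-1}$ for $Z_{n-1}$, and so on down to $X_1$. At each stage it conditions on the value $s$ of the mixed partial sum $\sum_{k<j}X_k+\sum_{k>j}Z_k$ together with admissibility of $X_{<j}$. It then drops the constraint $X_j\le R_j$ and bounds $\Pr(X_j>T-s\mid\cdot)$ by $\Pr(Z_j>T-s)$. That last step averages the hypothesis over all admissible pasts with the given sum. It also uses the independence of the $Z$'s from the $X$'s, so that conditioning on the already-inserted $Z_k$ does not change the conditional law of $X_j$.

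Your quantile coupling with shared uniforms replaces this bookkeeping with a single pathwise statement: $\tilde X_i\le\tilde Z_i$ for all $i$ on $\{\tilde X_j\le R_j\ \forall j\}$. This gives slightly more, since the bound holds for any coordinatewise nondecreasing functional, not only the sum. It also shows that the assumed independence between $Z$ and $X$ plays no role in the conclusion, because each side depends only on one of the two laws.

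The price is the auxiliary construction: generalized inverses, checking that the recursion reproduces the joint law of $X$, and handling null-probability histories. The paper's purely total-probability argument avoids all of this. As you note, for integer-valued variables these points are routine.
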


\begin{proof}

We prove the claim by iteratively replacing each \(X_i\) with the corresponding geometric random variable \(Z_i\).
We now replace the random variables one at a time. First,
\begin{align}
&\Pr\left(\sum_{k=1}^n X_k > T,X_i \le R_i \quad \forall i \le n\right) \notag\\
&=
\sum_{s=0}^{\infty}
\Pr\left(\sum_{k=1}^{n-1} X_k = s,X_i \le R_i \quad \forall i \le n-1\right) \times
\Pr\left(X_n > T-s, X_n \le R_n \mid \sum_{k=1}^{n-1} X_k = s,X_i \le R_i \quad \forall i \le n-1 \right)\notag\\
&\le
\sum_{s=0}^{\infty}
\Pr\left(\sum_{k=1}^{n-1} X_k = s,X_i \le R_i \quad \forall i \le n-1\right) \times
\Pr\left(X_n > T-s \mid \sum_{k=1}^{n-1} X_k = s,X_i \le R_i \quad \forall i \le n-1 \right)\notag
\intertext{Since the domination bound holds for every admissible realization of the past, it continues to hold after averaging over all admissible pasts with total sum \(s\).
Using the bound in Eq.~\eqref{Eqn:conditionalDomination}, and noting that $Z_n$ is independent of all the $X$'s and $Z$'s, }
&\le
\sum_{s=0}^{\infty}
\Pr\left(\sum_{k=1}^{n-1} X_k = s,X_i \le R_i \quad \forall i \le n-1\right) \times
\Pr\left(Z_n > T-s \mid \sum_{k=1}^{n-1} X_k = s,X_i \le R_i \quad \forall i \le n-1 \right)
\\
&=
\Pr\left(\sum_{k=1}^{n-1} X_k+Z_n > T ,X_i \le R_i \quad \forall i \le n-1 \right).
\end{align}

We use similar arguments to replace $X_{n-1}$ with $Z_{n-1}$ in the following way:
\begin{align}
  &\Pr\left(\sum_{k=1}^{n-1} X_k+Z_n > T ,X_i \le R_i \quad \forall i \le n-1 \right)\notag\\
  &=\sum_{s=0}^{\infty}
\Pr\left(\sum_{k=1}^{n-2} X_k + Z_n= s,X_i \le R_i \quad \forall i \le n-2\right)\notag\\
&\times\Pr\bigg(X_{n-1} > T-s, X_{n-1} \le R_{n-1} \mid \sum_{k=1}^{n-2} X_k + Z_n = s,X_i \le R_i \quad \forall i \le n-2 \bigg)
\notag\\
&\le
\sum_{s=0}^{\infty}
\Pr\left(\sum_{k=1}^{n-2} X_k + Z_n= s,X_i \le R_i \quad \forall i \le n-2\right) \times
\Pr\left(X_{n-1} > T-s\mid \sum_{k=1}^{n-2} X_k + Z_n = s,X_i \le R_i \quad \forall i \le n-2 \right).\notag
\end{align}
Once more, we use the bound in Eq.~\eqref{Eqn:conditionalDomination}, and noting that $Z_{n-1}$ is independent of all the $X$'s and other $Z$'s,
\begin{align}
&\le
\sum_{s=0}^{\infty}
\Pr\left(\sum_{k=1}^{n-2} X_k + Z_n = s,X_i \le R_i \quad \forall i \le n-2\right)\times
\Pr\left(Z_{n-1} > T-s \mid \sum_{k=1}^{n-2} X_k +Z_n = s,X_i \le R_i \quad \forall i \le n-2 \right)
\notag\\
&=
\Pr\left(\sum_{k=1}^{n-2} X_k+Z_{n-1}+Z_n > T ,X_i \le R_i \quad \forall i \le n-2 \right).
\end{align}

Repeating the same argument successively for \(X_{n-2},\dots,X_1\),
\begin{align}
\Pr\left(\sum_{k=1}^n X_k > T,X_1 \le R_1,\cdots, X_n \le R_n\right)
\le
\Pr\left(\sum_{k=1}^n Z_k > T\right).
\end{align}
\end{proof}

\subsection{Concentration of Geometric Random variables}
\begin{theorem}\label{Thm:GeometricConcentration}
Let \(Z_1,\ldots,Z_k\) be independent geometric random variables with
success parameters \(p_1,\ldots,p_k \). Define
\begin{align}
T_* =\sum_{i=1}^k \frac{1}{p_i},
\qquad
p^* =\min_{1\le i\le k}p_i.
\end{align}
Then, for every \(0<\epsilon<1\), setting
\begin{align}  \label{Eqn:T_nj bound}
T(\epsilon)
=
2T_* +\frac{2}{p^*}\ln\frac{1}{\epsilon},
\end{align}
gives,
\begin{align}
\Pr\left(
\sum_{i=1}^k Z_i>T(\epsilon)
\right)
\le \epsilon.
\end{align}
\end{theorem}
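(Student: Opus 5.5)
The plan is a Chernoff-type bound on the moment generating function. Write $S=\sum_{i=1}^k Z_i$, where each $Z_i\in\{1,2,\ldots\}$ has $\Pr(Z_i>t)=(1-p_i)^t$. For $0<\theta<-\ln(1-p_i)$, a geometric sum gives
\[
\mathbb E[e^{\theta Z_i}]=\frac{p_i e^{\theta}}{1-(1-p_i)e^{\theta}}.
\]
By independence and Markov's inequality,
\[
\Pr(S>T)\le e^{-\theta T}\prod_i \mathbb E[e^{\theta Z_i}].
\]
We will fix one $\theta$ that depends only on $p^*$, so that every factor is controlled uniformly.

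We take $\theta=p^*/2$. The key one-variable estimate is
\[
\ln \mathbb E[e^{\theta Z_i}]\le \frac{2\theta}{p_i}\qquad\text{for }\theta\le p_i/2.
\]
This holds for every $i$, since $p_i\ge p^*$. To prove it, set $x=e^{\theta}-1$. The MGF becomes
\[
\frac{p_i(1+x)}{p_i-(1-p_i)x}=\frac{1+x}{1-x(1-p_i)/p_i}.
\]
Using $\ln(1+x)\le x$ and $-\ln(1-u)\le u/(1-u)$, its logarithm is at most
\[
x+\frac{x(1-p_i)/p_i}{1-x(1-p_i)/p_i}.
\]
We then need $x\le e^{p_i/2}-1$ to keep $x(1-p_i)/p_i$ at most roughly $1/2$. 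This is the step I expect to be fiddly. The constants must close exactly to give the factor $2$ in front of $T_*$, so the elementary inequalities on $e^{\theta}$ need care. If the crude route loses a constant, I will fall back on the cleaner bound $\mathbb E[e^{\theta Z_i}]\le (1-\theta/p_i)^{-1}$. This is the exponential comparison: $Z_i$ is dominated by $1+\mathrm{Exp}(p_i)$-type tails, or equivalently one uses $1-(1-p)e^{\theta}\ge p-\theta$ for small $\theta$. Together with $-\ln(1-u)\le 2u$ for $u\le 1/2$, this yields $\ln\mathbb E e^{\theta Z_i}\le \theta+2\theta/p_i$. The extra additive $\theta$ per variable is absorbed because $1/p_i\ge 1$, possibly after slightly adjusting $\theta$ (say $\theta=p^*/4$), or by invoking Janson's tail bounds \cite{jansontail} directly.

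Summing over $i$ gives $\ln\Pr(S>T)\le -\theta T+2\theta T_*$. With $\theta=p^*/2$ and $T=2T_*+\frac{2}{p^*}\ln\frac1\epsilon$, this becomes
\[
\ln\Pr(S>T)\le -\tfrac{p^*}{2}\cdot\tfrac{2}{p^*}\ln\tfrac1\epsilon=-\ln\tfrac1\epsilon.
\]
Hence $\Pr(S>T(\epsilon))\le\epsilon$. The only genuine obstacle is the per-variable MGF inequality with the right constants. Everything else is the standard Chernoff bookkeeping.
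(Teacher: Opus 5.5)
Your Chernoff skeleton is viable and differs from the paper. The paper computes no moment generating function: it quotes Janson's bound $\Pr(S\ge T)\le\exp\bigl(-p^*T_*(x-1-\ln x)\bigr)$ with $x=T/T_*$, and closes using $x-\ln x\ge x/2$. Your route would be a self-contained derivation of the same $T(\epsilon)$ with the fixed choice $\theta=p^*/2$.

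As written, though, the proposal does not prove its only nontrivial step, $\ln\mathbb E[e^{\theta Z_i}]\le 2\theta/p_i$ for $0\le\theta\le p_i/2$, and the fallback you offer for it fails. The inequality $1-(1-p)e^{\theta}\ge p-\theta$ only gives $\ln\mathbb E[e^{\theta Z_i}]\le\theta-\ln(1-\theta/p_i)$. The extra $\theta$ cannot be absorbed into the factor $2$: using $\theta\le\theta/p_i$ you get $-\theta T+3\theta T_*$, so you would need $T\approx 3T_*$. Concretely, take all $p_i=1$, so $T_*=k$ and $p^*=1$. The fallback then yields
\[
\ln\Pr(S>T(\epsilon))\le k\bigl(-\theta-\ln(1-\theta)\bigr)-2\theta\ln\tfrac1\epsilon,
\]
which is strictly larger than $-\ln\frac1\epsilon$ for every $\theta\in(0,1/2]$ and every $k\ge 1$. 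Shrinking $\theta$ to $p^*/4$ only weakens the $\ln\frac1\epsilon$ term further, to $\epsilon^{1/2}$. Invoking Janson directly is, of course, exactly the paper's proof.

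The missing observation is that the factor $e^{\theta}$ is not needed. For $0\le\theta<p$ both denominators below are positive, because $\theta<p\le-\ln(1-p)$, and
\[
\frac{pe^{\theta}}{1-(1-p)e^{\theta}}\le\frac{p}{p-\theta}
\iff e^{\theta}(p-\theta)\le 1-(1-p)e^{\theta}
\iff e^{\theta}(1-\theta)\le 1,
\]
which holds for all real $\theta$. So $\mathbb E[e^{\theta Z_i}]\le(1-\theta/p_i)^{-1}$ is true. This is the ``cleaner bound'' you wrote down, but your stated justification proves only the weaker version carrying $e^{\theta}$. Now use $-\ln(1-u)\le 2u$ for $0\le u\le 1/2$: the difference vanishes at $0$ and has derivative $1/(1-u)-2\le 0$. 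This gives the per-variable estimate for $\theta=p^*/2\le p_i/2$, and your bookkeeping then yields $\Pr(S>T(\epsilon))\le\epsilon$ with exactly the stated constants.

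Your primary substitution $x=e^{\theta}-1$ can also be completed, though you did not carry it out. For $\theta\le p_i/2$ one checks $u=x(1-p_i)/p_i\le 1/2$, so $u/(1-u)\le 2u$. The claim then reduces to $(e^{\theta}-1)(2-p_i)\le 2\theta$. By monotonicity of $(e^{\theta}-1)/\theta$, this is again $e^{s}(1-s)\le 1$ at $s=p_i/2$.
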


\begin{proof}
We prove Theorem~\ref{Thm:GeometricConcentration} using the following
concentration bound for sums of independent geometric random variables,
due to Janson. Note that
\begin{align}
T_*=\sum_{i=1}^k \mathbb{E}[Z_i]
\end{align}
is the expected value of their sum.

\begin{lemma} (Theorem 2.1 of \cite{jansontail})
Let $Z_1,\cdots, Z_k$ be independent but not necessarily identically distributed geometric random variables with the success parameters $p_1,\cdots,p_k$. Then, for $T \ge \mu$,
\begin{align}
    \operatorname{Pr}(Z_1+\cdots+Z_k \geq T) \leq \exp{\left(-p^*\mu\left(\frac{T}{\mu}-\ln{\left(\frac{T}{\mu}\right)-1}\right)\right)}, \label{jansonconc}
\end{align}
where $p^*=\min_{1\le i \le k} p_i$ and $\mu= \sum_{i=1}^k \frac{1}{p_i}$.
\end{lemma}

    We find a lower bound on $T \ge T_*$ that guarantees the right-hand side of Eq.~\eqref{jansonconc} is bounded by $\epsilon$, i.e.,
\be
\exp\bigg(-p^*T_*\bigg(\frac{T+1}{T_*}-\ln\bigg(\frac{T+1}{T_*}\bigg)-1\bigg)\bigg)\le \epsilon\ ,
\ee
which means,
\begin{align}\label{d4}
    \frac{T+1}{T_*}-\ln{\left(\frac{T+1}{T_*}\right)}\geq \frac{1}{T_*}\left( T_*+\frac{1}{p^*}\ln\frac{1}{\epsilon}\right).
\end{align}
 Next, we use the following property:
\be \label{x-lnx ineq}
 \forall a,x>1:\  
 x= 2a \ \ \Longrightarrow\ \ x-\ln{x} \geq a.
\ee
Substituting,
\begin{align}
    x=\frac{T+1}{T_*},\quad a=\frac{1}{T_*}\left( T_*+\frac{1}{p^*}\ln\frac{1}{\epsilon}\right),
\end{align}
 into $x= 2a$ and multiplying both sides by $T_*$, we obtain the following chain of inequalities, 
\begin{align} 
    T+1 &= 2T_*\left(\frac{1}{T_*}\left( T_*+\frac{1}{p^*}\ln\frac{1}{\epsilon}\right)\right) = 2T_*+\frac{2}{p^*}\ln\frac{1}{\epsilon} \ ,
\end{align}
Then, Eq.(\ref{x-lnx ineq}) implies that for 
\begin{align}
    T= 2T_*+\frac{2}{p^*}\ln\frac{1}{\epsilon}-1.
\end{align} 
condition in Eq. \eqref{d4} is satisfied.

\end{proof}

\end{document}